\newtheorem{theorem}{Theorem}[section]
\newtheorem{lemma}[theorem]{Lemma}
\newtheorem{definition}[theorem]{Definition}
\newtheorem{corollary}[theorem]{Corollary}
\DeclarePairedDelimiterX\braket[2]{\langle}{\rangle}{#1 \delimsize\vert #2}
\DeclareFontFamily{OMX}{MnSymbolE}{}
\DeclareSymbolFont{MnLargeSymbols}{OMX}{MnSymbolE}{m}{n}
\DeclareFontShape{OMX}{MnSymbolE}{m}{n}{
    <-6>  MnSymbolE5
   <6-7>  MnSymbolE6
   <7-8>  MnSymbolE7
   <8-9>  MnSymbolE8
   <9-10> MnSymbolE9
  <10-12> MnSymbolE10
  <12->   MnSymbolE12
}{}
\DeclareFontShape{OMX}{MnSymbolE}{b}{n}{
    <-6>  MnSymbolE-Bold5
   <6-7>  MnSymbolE-Bold6
   <7-8>  MnSymbolE-Bold7
   <8-9>  MnSymbolE-Bold8
   <9-10> MnSymbolE-Bold9
  <10-12> MnSymbolE-Bold10
  <12->   MnSymbolE-Bold12
}{}
\let\llangle\@undefined
\let\rrangle\@undefined
\DeclareMathDelimiter{\llangle}{\mathopen}%
                     {MnLargeSymbols}{'164}{MnLargeSymbols}{'164}
\DeclareMathDelimiter{\rrangle}{\mathclose}%
                     {MnLargeSymbols}{'171}{MnLargeSymbols}{'171}
\def\cH{\mathcal{H}}
\def\cO{\mathcal{O}}
\def\cR{\mathcal{R}}
\def\cS{\mathcal{S}}
\def\cT{\mathcal{T}}
\def\cU{\mathcal{U}}
\def\cV{\mathcal{V}}
\def\cW{\mathcal{W}}
\def\cZ{\mathcal{Z}}
\def\sy{{\rm s}}
\def\Re{{\rm Re}}
\def\Im{{\rm Im}}
\def\tr{{\rm tr}}
\def\one{{\mathchoice {\rm 1\mskip-4mu l} {\rm 1\mskip-4mu l} {\rm 1\mskip-4.5mu l} {\rm 1\mskip-5mu l}}}
\newcommand{\ketbra}[1]{\, | #1 \rangle \hspace{-0.05em} \langle #1 | \,} 
\newcommand{\ket}[1]{\,| #1 \rangle \,}  
\newcommand{\bra}[1]{\,\langle #1|\,}
\begin{document}

\title{Quantum algorithms from fluctuation theorems: Thermal-state preparation}

\author[1]{Zoe Holmes}

\author[2]{Gopikrishnan Muraleedharan}

\author[2]{Rolando D. Somma}

\author[1]{Yi\u{g}it Suba\c{s}\i}

\author[2]{Burak \c{S}ahino\u{g}lu}

\affil[1]{\normalsize {Computer, Computational, and Statistical Sciences Division, Los Alamos National Laboratory, Los Alamos, NM 87545, USA}}
\affil[2]{ \normalsize{Theoretical Division, Los Alamos National Laboratory, Los Alamos, NM 87545, USA}}


\maketitle


\begin{abstract}
 {\normalsize{Fluctuation theorems provide a correspondence between properties of quantum systems in thermal equilibrium and a work distribution arising in a non-equilibrium process that connects two quantum systems with Hamiltonians $H_0$ and $H_1=H_0+V$.
 Building upon these theorems, 
 we present a quantum algorithm to prepare a purification of the thermal state of $H_1$ at inverse temperature $\beta \ge 0$ starting from a purification of the thermal state of $H_0$ at the same temperature. The complexity 
 of the quantum algorithm, given by the number of uses of certain unitaries, is $\tilde \cO(e^{\beta  (\Delta \! A- w_l)/2})$, where $\Delta \! A$ is the free-energy difference between the two quantum systems
 and $w_l$
 is a work cutoff that
 depends on the properties of the work distribution and the approximation error $\epsilon>0$.
 If the non-equilibrium process is trivial, this complexity is exponential in $\beta \|V\|$,
 where $\|V\|$ is the spectral norm of $V$.
 This represents a significant improvement over  prior
 quantum algorithms that have complexity exponential in $\beta \|H_1\|$ in the regime where  $\|V\|\ll \|H_1\|$.
 The quantum algorithm is then expected to be advantageous in a setting where an efficient quantum circuit is available for preparing the purification of the thermal state of $H_0$ but not for preparing the thermal state of $H_1$. This can occur, for example, when $H_0$ is an integrable quantum system and $V$ introduces interactions
 such that $H_1$ is non-integrable.
 The dependence of the complexity in $\epsilon$, when all other parameters are fixed, varies according to the structure of the quantum systems. 
 It can be exponential in $1/\epsilon$ in general, but we show it to be sublinear in $1/\epsilon$ if $H_0$ and $H_1$ commute, or polynomial in $1/\epsilon$
 if $H_0$ and $H_1$ are local spin systems. 
 In addition, the possibility of 
 applying a unitary that drives the system out of equilibrium allows one to increase the value of $w_l$ and improve the complexity even further. 
To this end, we analyze the complexity for preparing the thermal state of the transverse field Ising model using different non-equilibrium unitary processes and see significant complexity improvements.
}}
\end{abstract}

\newpage

\tableofcontents

\section{Introduction}

\subsection{Background and outline}

The simulation of physical systems at  thermal equilibrium 
is an important problem in many fields of research, including
physics, quantum information, chemistry, biology, and discrete optimization (cf.~\cite{MRR+53,LL51,Suz87,LB97,TDV00,SBB08,TOVPV11,CP07,SKT+05,KC14,KGV83,Lov95}).
One way to approach this problem is via Gibbs sampling, that is, by using a computer to sample states or configurations according to the Gibbs or thermal distribution. Indeed, two of the most important algorithms in statistical physics,  namely Markov chain Monte Carlo (MCMC)~\cite{MRR+53,NB98} and quantum Monte Carlo (QMC)~\cite{Suz87,Nu99},  produce such configurations.
These methods are powerful and used widely, but can also suffer from severe limitations, which include the production of correlated outcomes and the well-known sign problem that can lead
to prohibitive runtimes~\cite{NB98,LGS+90,TW05}. Developing novel, possibly more efficient methods for studying physical systems in thermal equilibrium is an ongoing quest.

The simulation of quantum systems is expected to be one of the main applications of quantum computing.
Accordingly, in recent years we have seen 
a handful of quantum algorithms for studying thermal equilibrium. These algorithms are described as sequences
of unitary operations or quantum gates, with a guarantee that the final (reduced) state of the quantum computer encodes, or is a good approximation of, the thermal state of a quantum system of interest~\cite{LB97,SBB08,PW09,CW10,BB10,KB16,CS16,KKB20}. In some cases, the complexity of the 
quantum algorithm is provably better than that of its corresponding classical analogue, 
thereby providing a type of quantum speedup~\cite{Sze04,SBB08,BS17,AGG17}. 
However, in contrast to 
 known quantum algorithms for simulating the {\em dynamics} of quantum systems~\cite{Llo96,SOGKL02,SOKG03,BAC07,WBH+10,CW12,BCC+15,LC17},
existing quantum algorithms for thermal states are not very sophisticated and still present various limitations. For example, the quantum algorithm in Ref.~\cite{TOVPV11}
can lead to very large relaxation times if the choice of updates is poor; a problem related
to critical slowing down in Monte Carlo simulations~\cite{Wol90}. 
The quantum algorithms
in Refs.~\cite{PW09,CS16} transform the maximally entangled state into a coherent version (purification)
of the thermal state. This is related to a thermalization process that brings an infinite-temperature state
to one at  finite temperature and, again, can be highly inefficient. 
Moreover, except for a few known cases~\cite{BB10,KB16,KKB20}, the runtimes of these quantum algorithms are often exponential in the strength (spectral norm) of the Hamiltonian that models the quantum system. While this is reasonable for worst-case instances from complexity-theory arguments~\cite{KSV02},
improved runtimes should be expected under further considerations, such as when the initial state is not ``far'' from the target thermal state, or when physical properties of the systems can be exploited.  The algorithms mentioned above are not capable of taking advantage of such considerations, which is a main motivation for developing the algorithm presented here.

In this paper, we propose a novel quantum algorithm for preparing thermal states that avoids some limitations of prior algorithms, as we discuss below. Our quantum algorithm
is inspired by fluctuation theorems, which play an important role in statistical mechanics~\cite{Jarzynski1997Equilibrium,Jarzynski1997Nonequilibrium,Jarzynski2011,Crooks1999Entropy,Crooks2000Path}. These theorems provide useful computational tools for studying physical systems in thermal equilibrium from non-equilibrium properties. 
They introduce the notion of ``work'' $w \in \mathbb R$, which is a random quantity that can be measured  in a non-equilibrium process 
that drives an initial thermal equilibrium state out of equilibrium by parametrically changing the system Hamiltonian. The distribution $P(w)$ contains information that relates to thermal equilibrium properties of $H_0$ and $H_1$.
For instance, by simulating a non-equilibrium process and sampling work values one can determine the free-energy difference between two thermal equilibrium states~\cite{Roncaglia2014,bassman2021computing}.

Our quantum algorithm does not directly simulate the non-equilibrium process to sample from $P(w)$, but applies a sequence of operations which, in effect, reproduces it. 
The produced quantum state can be shown to be a good approximation to the thermal state of $H_1$ at given inverse temperature $\beta \ge 0$. 
Our algorithm assumes access to a unitary that prepares the purification of the thermal state of $H_0$, which might be implemented efficiently for integrable systems.
The runtime of our quantum algorithm depends on several factors, including the magnitude of the perturbation $V$ such that $H_1=H_0+V$, the approximation error, and the characteristics of the non-equilibrium process. In particular, we show
that the runtime is bounded by an exponential in $\beta \|V\|$, where $\|V\|$ is the spectral norm of $V$, when the non-equilibrium process is trivial. This is a significant improvement
over prior quantum algorithms for thermal states~\cite{PW09,CW10,CS16}, especially when $\|V\| \ll \|H_1\|$; see Sec.~\ref{sec:relatedwork} for a detailed comparison. The runtime of our quantum algorithm can be further improved with a choice for the non-equilibrium process that  depends
on the physical properties of the quantum systems. 
Our results are then another example where a
powerful technique from statistical physics
can be adapted to provide sophisticated quantum simulation algorithms with various improvements.

\vspace{0.2cm}

The outline of our paper is as follows. In Sec.~\ref{sec:TSPP} we describe the thermal-state preparation problem, and in Sec.~\ref{sec:MainResults}, we present the main complexity results of our quantum algorithm.
In particular, we discuss how this complexity depends on  properties of the non-equilibirum process and the Hamiltonians, such as when $H_0$ and $H_1$ commute, or when $H_0$ and $H_1$ are local spin Hamiltonians. In Sec.~\ref{sec:relatedwork} we compare  our results with those of the related works. 

In Sec.~\ref{sec:fluctuationtheorems}, we set the stage
for our quantum algorithm, where we first discuss the {\em two-time measurement scheme}  that provides a simple connection between non-equilibrium and thermal-equilibrium properties in quantum systems. We discuss how this scheme could be adapted to construct a quantum algorithm for  preparing thermal states, but this approach presents other shortcomings, as we discuss in Sec.~\ref{sec:shortcomings}. Then, in Sec.~\ref{sec:two-copy}, we provide  another realization of the scheme, which we name the two-copy measurement scheme, that avoids those shortcomings and forms the basis of our quantum algorithm
described in Sec.~\ref{sec:qalg}. 

Our quantum algorithm requires implementing a good approximation of an exponential operator, which is discussed in Sec.~\ref{sec:exponentialoperator}. The approximation is based on a Fourier series (Sec.~\ref{sec:Fourier}). The steps of the algorithm are given in Sec.~\ref{sec:algorithm} and 
the proof of our main result on complexity is in Sec.~\ref{sec:algorithmcomplexity}. The complexity is exponential in $\beta (\Delta \! A-w_l)/2$, where $\Delta \! A$ is the free-energy difference 
between the quantum systems $H_1$ and $H_0$ at inverse temperature $\beta$, and $w_l \in \mathbb R$
is what we call a work cutoff. The work cutoff is then analyzed in Sec.~\ref{sec:applications}, where we provide several bounds for $w_l$ for general Hamiltonians (Sec.~\ref{sec:generalHamiltonians}), commuting Hamiltonians (Sec.~\ref{sec:commutingHamiltonians}), and local spin Hamiltonians (Sec.~\ref{sec:localHamiltonians}). These bounds also determine the complexity of the quantum algorithm in terms of its approximation error. The work cutoff, and hence the complexity of our algorithm, can be further improved by first applying a unitary that drives the system out of equilibrium. We explore this numerically for the case of preparing the thermal state of the transverse-field Ising model and see significant complexity improvements even for small system sizes ($\mathfrak n \le 11$).
Furthermore, in Appendix~\ref{app:optimal_unitary} we derive the form of an optimal unitary for which the work cutoff is maximized.

In Sec.~\ref{sec:conclusions}, we conclude with a summary of the results and also present some open problems.
These regard further improvements of our algorithm by using other approximations to the exponential (e.g., a Chebyshev approximation), by considering other non-equilibrium processes, or by other uses of fluctuation theorems to compute thermal properties.  To ease readability, the technical proofs of some complexity statements are given in respective appendices.

\subsection{Problem statement}
\label{sec:TSPP}

We describe the problem for $\mathfrak{n}$-qubit systems for simplicity but our results apply to all finite-dimensional quantum systems, including systems of qudits and fermionic systems, given access to the relevant unitaries. We write $\cH_{\sy}=\mathbb C^N$ for the relevant Hilbert space of dimension $N=2^{\mathfrak{n}}$. The thermal-state preparation problem (TSPP) is defined as follows:

\begin{definition}[TSPP]\label{def:TSPP}
Given a Hamiltonian $H_1$ acting on $\cH_{\sy}$, an inverse temperature $\beta \ge 0$, and an approximation error $\epsilon \ge 0$, the goal is to prepare a mixed quantum state $\tau_1$ that satisfies
\begin{align}
\label{eq:TSPP}
  \frac{1}{2}   \| \tau_1 - \rho_1  \|_{1} \le \epsilon \;,
\end{align}
where
\begin{align}
    \rho_1 : = \frac{e^{-\beta H_1}}{\cZ_1}
\end{align}
is the thermal or Gibbs state and $\cZ_1:=\tr(e^{-\beta H_1})$ is
the partition function.
\end{definition}

For any linear operator $A$ acting on $\cH_{\sy}$, $\|A\|_1:= \tr(\sqrt{A^\dagger A})$ is the trace norm, $\tr(.)$ is the trace, and the left hand side of Eq.~\eqref{eq:TSPP} is the trace distance between $\tau_1$ and $\rho_1$.
A state $\tau_1$ that satisfies Eq.~\eqref{eq:TSPP} cannot be distinguished from $\rho_1$ with probability greater than $\epsilon$ in a single-shot experiment~\cite{Bar09,NC01}. Moreover, expectations computed in   $\tau_1$ differ from those in  $\rho_1$ by an amount that is, at most, proportional to $\epsilon$.

\subsection{Main results}
\label{sec:MainResults}
In Sec.~\ref{sec:fluctuationtheorems} we describe how fluctuation theorems determine
thermal properties of a quantum system through
properties of a work distribution $P(w) \ge 0$. Here, $w \in \mathbb R$ is a random variable that represents the work done on a system which is initially in thermal equilibrium with respect to $H_0$ but then is driven out of equilibrium by a unitary $\cU$.
For closed systems, $w$ is simply equal to the change of the system's energy, which can be determined by performing projective measurements of the Hamiltonian of the system at the start ($H_0$) and end ($H_1$) of the process. The probability of observing a change in energy corresponding to $w$ determines $P(w)$, which we formally define in Eq.~\eqref{eq:TTdist}. We assume $w \in [w_{\min},w_{\max}]$, with $-\infty < w_{\min}\le w_{\max}<\infty$. The lower tail of $P(w)$ gives rise to the notion of a work cutoff, $w_l$, 
where those samples satisfying $w<w_l$ do not contribute significantly to the thermal properties of the system and can be discarded.

Building upon these theorems, our main result is a quantum algorithm for preparing the thermal state $\rho_1$ starting from the thermal state $\rho_0= e^{-\beta H_0}/\cZ_0$, given in a purified form $\ket{\Psi_0}$, where $\cZ_0=\tr(e^{-\beta H_0})$ is the partition function of $H_0$. The purification is such that $\ket{\Psi_0} \in \cH_{\sy} \otimes \cH_{\sy'}$,
where $\cH_{\sy} \equiv \cH_{\sy'}$, 
encodes $\rho_0$ in $\cH_{\sy}$, which
is obtained by discarding or tracing out $\cH_{\sy'}$.
The main figure of interest is the complexity of the quantum algorithm and its asymptotic scaling in terms of the inverse temperature $\beta$ and the  error $\epsilon$.
Certain results of this section show these dependencies explicitly for clarity,
where  complexities are stated using the big-${\cal O}$ notation.
The following theorem is our main result, and is proven in Sec.~\ref{sec:algorithmcomplexity}.

\begin{theorem}[Quantum complexity for the thermal-state preparation problem]
\label{thm:main}
Let $H_0$ and $H_1$ be two Hamiltonians acting on $\cH_{\sy}$, $\beta \ge 0$ be the inverse temperature, and $\epsilon \ge 0$ be the approximation error.
Let $\ket{\Psi_0} \in \cH_{\sy} \otimes \cH_{\sy'}$ denote a particular purification of the thermal state $\rho_0$ 
where $\cH_{\sy} \equiv \cH_{\sy'}$,
$U_0$ be the unitary that performs the map $\ket 0 \rightarrow \ket{\Psi_0}=U_0 \ket 0$, and $\cU$ be a given unitary acting on $\cH_{\sy}$.
Then, there exists a quantum algorithm that solves the TSPP and prepares a quantum state $\tau_1$ that satisfies Eq.~\eqref{eq:TSPP}, that is,
\begin{align}
\frac{1}{2} \| \tau_1 - \rho_1 \|_1 \le \epsilon \;.
\end{align}
The quantum algorithm makes $Q= \cO (e^{\sqrt{\ln (1/\epsilon)}}e^{\frac{\beta}{2} (\Delta \!A(\beta)- w_l(\beta,\epsilon))})$ amplitude amplification rounds, on average, where $\Delta \! A(\beta):=-\frac{1}{\beta} \ln(\cZ_1(\beta)/\cZ_0(\beta))$ is the free-energy difference and $w_l(\beta,\epsilon)\ge w_{\min}$ is a work cutoff that satisfies
\begin{align}
\label{eq:Wlconditionmain}
    \sum_{w<w_l(\beta,\epsilon)} P(w) e^{-\beta \, (w-\Delta \! A(\beta))} 
      \le \left(\frac{\epsilon}{6} \right)^2 \;.
\end{align}
Each amplitude amplification round uses the unitaries $\cU$, $\cU^\dagger$, $U_0$, and $U_0^\dagger$ once.
It also makes $\cO((\ln(1/\epsilon)+\beta (w_{\max}-w_l(\beta,\epsilon)))^{3/2})$ uses of a controlled-$U$ unitary and its inverse, acting on $\cH_{\sy}\otimes \cH_{\sy'}$, where $U:=e^{i \delta \beta W/2}$, $W:=   H_1 \otimes \one - \one \otimes   H_0^*$,
and $\delta \le \pi/16$.
The Hamiltonian $H_0^*$ is the complex conjugate of $H_0$ in the computational basis and $\one$ is the identity.
Additionally, the number of two-qubit gates per amplitude amplification round is
$\cO((\ln(1/\epsilon)+\beta (w_{\max}-w_l(\beta,\epsilon)))^{3/2})$.
\end{theorem}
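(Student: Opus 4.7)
The central observation is that the non-unitary operator $e^{-\beta W/2}$, with $W := H_1\otimes\one - \one\otimes H_0^*$, maps the purification $\ket{\Psi_0}$, after the out-of-equilibrium kick $(\cU\otimes\one)$, to a sub-normalized purification of $\rho_1$. The plan is to approximate this map by a Fourier/LCU block encoding and boost the success probability by amplitude amplification.

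\emph{Step 1 (exact state identity).} I would first write $\ket{\Psi_0}$ in thermofield-double form, $\ket{\Psi_0} = \cZ_0^{-1/2}\sum_k e^{-\beta E_k^{(0)}/2}\ket{E_k^{(0)}}\otimes\ket{E_k^{(0)\ast}}$, and use that $W$ is diagonal in $\{\ket{E_j^{(1)}}\otimes\ket{E_k^{(0)\ast}}\}$ with eigenvalue $E_j^{(1)}-E_k^{(0)}$. A direct calculation then shows that $|\Phi\rangle := e^{-\beta W/2}(\cU\otimes\one)\ket{\Psi_0}$ satisfies $\langle\Phi|\Phi\rangle = \cZ_1/\cZ_0 = e^{-\beta\Delta\!A(\beta)}$ (the purified form of Jarzynski's identity, matching the normalization of the work distribution $P(w)$ of Sec.~\ref{sec:fluctuationtheorems}) and $\tr_{\cH_{\sy'}}|\Phi\rangle\langle\Phi|/\langle\Phi|\Phi\rangle = \rho_1$. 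Thus if $e^{-\beta W/2}$ could be applied exactly and the all-zero ancilla flag post-selected, the TSPP would be solved exactly with success probability $e^{-\beta\Delta\!A(\beta)}$.

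\emph{Step 2 (LCU via truncated Fourier expansion).} To implement $e^{-\beta W/2}$, I would use the Fourier series from Sec.~\ref{sec:Fourier} to approximate $e^{-\beta x/2}$ by $\sum_{n=-N}^{N}c_n e^{in\delta\beta x/2}$, pre-scaled by $e^{\beta w_l/2}$ so that the target $e^{-\beta (w-w_l)/2}$ is bounded by $1$ on $w\in[w_l,w_{\max}]$ and hence admits a valid LCU block encoding. Because $H_1\otimes\one$ and $\one\otimes H_0^*$ commute, each $e^{in\delta\beta W/2}$ factors into $e^{in\delta\beta H_1/2}\otimes e^{-in\delta\beta H_0^\ast/2}$, yielding precisely the controlled-$U$ oracle stated in the theorem. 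The number of Fourier terms needed to reach accuracy $\epsilon$ on an interval of length $\beta(w_{\max}-w_l)/2$ is $N = \cO((\ln(1/\epsilon) + \beta(w_{\max}-w_l))^{3/2})$, which sets the controlled-$U$ (and two-qubit) count per amplitude amplification round.

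\emph{Step 3 (amplitude amplification and error budget).} The $e^{\beta w_l/2}$ rescaling inside the block encoding means that the success amplitude of the flag ancillas equals $e^{\beta w_l/2}\|\Phi\|\simeq e^{-\beta(\Delta\!A-w_l)/2}$, so oblivious amplitude amplification reaches $\Theta(1)$ success probability in $\cO(e^{\beta(\Delta\!A-w_l)/2})$ rounds on average; each round reuses $U_0,\cU,U_0^\dagger,\cU^\dagger$ exactly once (to prepare and unprepare $(\cU\otimes\one)U_0\ket{0}$) together with the LCU circuit. For the error analysis I would decompose the final-state deviation into (a) the contribution from the low-work subspace $w<w_l$, on which the Fourier approximant is inaccurate, and (b) the Fourier approximation error on $[w_l,w_{\max}]$. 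The crucial quantitative link is that the squared weight of $|\Phi\rangle/\|\Phi\|$ on the $w<w_l$ subspace is exactly $\sum_{w<w_l}P(w)e^{-\beta(w-\Delta\!A(\beta))}\le (\epsilon/6)^2$ by hypothesis, so (a) contributes $\cO(\epsilon)$ to the trace distance; choosing $N$ as above drives (b) to $\cO(\epsilon)$ as well. The subpolynomial factor $e^{\sqrt{\ln(1/\epsilon)}}$ in $Q$ appears because the Fourier approximant has to be taken slightly sharper than $\epsilon$ to compensate for the amplification of residual errors by $\Theta(1/\sqrt{p_{\rm succ}})$.

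\emph{Main obstacle.} The delicate part is the layered error propagation in Step 3: one must simultaneously control (i) the truncation of the work spectrum to $[w_l,w_{\max}]$, (ii) the Fourier error on that interval, and (iii) the amplification of both errors by the amplitude amplification step, and show that the final trace distance $\frac12\|\tau_1-\rho_1\|_1$ stays below $\epsilon$ with the specific constants (in particular the factor of $6$ in the work-cutoff inequality). Verifying that the reduced state of the amplified block-encoded output is genuinely close to $\rho_1$ — rather than to a nearby but distinct thermal-like state arising from the Fourier artefacts — is the core technical content, and is where I expect the analysis of Sec.~\ref{sec:algorithmcomplexity} to spend most of its effort.
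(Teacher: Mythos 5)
Your proposal is correct and follows essentially the same route as the paper: the exact identity $\|e^{-\beta W/2}(\cU\otimes\one)\ket{\Psi_0}\|^2 = e^{-\beta\Delta\!A}$ with reduced state $\rho_1$, the cutoff-rescaled Fourier/LCU block encoding of $e^{-\beta W/2}$ with $J=\cO((\ln(1/\epsilon)+\beta(w_{\max}-w_l))^{3/2})$ terms, and amplitude amplification with the error split between the $w<w_l$ subspace (controlled by the $(\epsilon/6)^2$ condition) and the Fourier error on $[w_l,w_{\max}]$. Two cosmetic points: the paper uses standard (not oblivious) amplitude amplification with $U_0$-based reflections, consistent with your own accounting of one use of $U_0,U_0^\dagger$ per round, and the $e^{\sqrt{\ln(1/\epsilon)}}$ factor arises from the $L_1$ norm $\alpha\le 2e^{\Delta}e^{-\beta w_l/2}$ of the Fourier coefficients with $\Delta=\max\{4,\sqrt{\ln(6/\epsilon)}\}$ rather than from sharpening the approximant against amplification of residual errors.
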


In Sec.~\ref{sec:fluctuationtheorems} we will see that $\sum_w P(w)e^{-\beta w}=e^{-\beta \, \Delta \! A(\beta)}$ which is known as the Jarzynski equality~\cite{Jarzynski1997Nonequilibrium}. Then,
the results in Thm.~\ref{thm:main} could be  alternatively stated in terms of properties of the work distribution alone. We stress that this work distribution depends not only on the initial and final Hamiltonians $H_0$ and $H_1$, and the inverse temperature $\beta$, but also on the unitary $\cU$. Hence the overall performance of the quantum algorithm depends on the choice of $\cU$. 

Our quantum algorithm prepares a pure state $\ket{\Psi_1'} \in \cH_{\sy} \otimes\cH_{\sy'}$, a purification of $\tau_1$, and $\tau_1$ is obtained by discarding or tracing out $\cH_{\sy'}$.
Working with purifications is advantageous for several reasons, including the efficient computation of thermal properties, where quantum-metrology techniques can be applied~\cite{KOS07}.
The quantum algorithm implements certain unitaries $\cU$, $U_0$, $U$, their inverses, and the controlled versions thereof. We work in a model where the implementation details of these unitaries are not essential, so the main complexity results presented in Thm.~\ref{thm:main} can be regarded as the {\em quantum query complexity} of the quantum algorithm.
One could obtain the full gate complexity
from explicit constructions for those unitaries, but a detailed analysis of these constructions, which also depend on the presentation of the Hamiltonians, is outside the scope of this paper. Nevertheless, for completeness,  in Appendix~\ref{app:Ucomplexity} we give one possible construction for $U$ using quantum signal processing and qubitization, when the Hamiltonians are presented as linear combination of unitaries~\cite{LC17,LC19}. Additionally, finding $U_0$ is
a difficult problem in general.
Our quantum algorithm is then expected to be mostly useful
in instances where $U_0$ can be accessed or efficiently
constructed, and we provide an example of this in Sec.~\ref{sec:localHamiltoniansUne1}, where $H_0$ is a non-interacting spin system.  

The quantum algorithm in Thm.~\ref{thm:main} is a unitary operation that acts on a Hilbert space ${\cal H}=\cH_{\rm anc}\otimes \cH_{\sy} \otimes \cH_{\sy'}$, where $\cH_{\rm anc}$  denotes an ancillary space. This ancillary space consists of 
 $\mathfrak m$ ancilla qubits that are needed to implement a particular linear combination of $U^j$'s, for $j \in \mathbb{Z}$, and the $\mathfrak m'$ ancilla qubits that are needed to implement each $U^j$, which corresponds to time-evolution with $W$. If the linear combination is implemented using quantum signal processing~\cite{LC17},  then $\mathfrak m  \le 2$. If the Hamiltonians $H_0$ and $H_1$ are presented as linear combinations of $L$ unitaries and also quantum signal processing is used
 to simulate each $U^j$, then $\mathfrak m'=\cO(\log(L))$, as 
 explained in Appendix~\ref{app:Ucomplexity}. 
 Additional details on the ancilla requirements are provided in Secs.\ref{sec:LCU} and~\ref{sec:QSP}, and Appendix~\ref{app:Ucomplexity}.

\vspace{0.1cm}

It is well-known that very small probability events, so-called rare events, are crucial for calculations based on fluctuation theorems~\cite{Jarzynski2006}. The same holds for thermal state preparation as discussed in this paper. 
The work cutoff $w_l(\beta,\epsilon)$, formally defined implicitly by Eq.~\eqref{eq:Wlconditionmain}, is such that those work values below $w_l(\beta,\epsilon)$ can be ignored within the error tolerance even accounting for the subtleties associated with rare events.
In each amplitude amplification round,
our quantum algorithm implements
an operation that is a function of a ``work operator'' $W$ given in Thm.~\ref{thm:main}. 
This operation modifies the amplitudes corresponding to the eigenstates
of $W$,
but those amplitudes associated with eigenvalues of $W$ below $w_l(\beta,\epsilon)$ are irrelevant: these can be modified significantly incorrectly or left unchanged, 
while still producing a sufficiently good approximation of $\rho_1$. 

This work cutoff also depends on $\cU$, as different $\cU$'s give rise to different work distributions. Since the dominant term of the complexity is expected to be $e^{\frac{\beta}{2} (\Delta \! A(\beta)- w_l(\beta,\epsilon))}$ in many interesting instances, to reduce this complexity, Thm.~\ref{thm:main} suggests a trade off between implementing a $\cU$ that can increase $w_l(\beta,\epsilon)$ and the complexity of implementing such a $\cU$.
(Note that, in general, $e^{\frac{\beta}{2} (\Delta \! A(\beta)- w_l(\beta,\epsilon))} > .98$\footnote{Jarzynski equality and Eq.~\eqref{eq:Wlconditionmain} imply $e^{-\beta \Delta \! A(\beta)}=\sum_{w<w_l(\beta,\epsilon)} P(w) e^{-\beta w}+\sum_{w\ge w_l(\beta,\epsilon)} P(w) e^{-\beta w} \le (\epsilon/6)^2 e^{-\beta \Delta  \! A(\beta)} +e^{-\beta w_l(\beta,\epsilon)}$. For $\epsilon \le 1$, we obtain $e^{\frac{\beta}{2} (\Delta \! A(\beta)- w_l(\beta,\epsilon))} \ge \sqrt{35/36}\approx .986$.}.)
In particular, small values of $w_l(\beta,\epsilon)$ can occur for a trivial $\cU$, i.e., $\cal{U}=\one$, if the thermal state $\rho_0$ has sufficiently high support on the subspaces of large and small eigenvalues of $H_0$ and $H_1$, respectively; a simple example being $H_1=-H_0$. This suggests that
a good choice for $\cU$ to increase work values and the work cutoff will be a transformation that maps eigenstates of $H_0$ to those of $H_1$
in some way. Indeed, in Appendix~\ref{app:optimal_unitary}, Thm.~\ref{thm:Uopt_finite_eps}, we show that for any $\epsilon \ge 0$, there exists a non-equilibrium unitary $\cU_{opt}^\epsilon$
that maps eigenstates of $H_0$ to those of $H_1$ such that $w_l$ is maximized in Eq.~\eqref{eq:Wlconditionmain}. When $\epsilon=0$, Cor.~\ref{cor:Uopt_eps0}
states that $\cU_{opt}^0$ also preserves the (ascending) order of the eigenvalues.

Finding or implementing $\cU_\text{opt}^\epsilon$
might be intractable in general. The same holds for finding the largest cutoff $w_l$ for a given unitary $\cU$. 
However,  $\cU_\text{opt}^\epsilon$ can be efficiently constructed in some instances, e.g., when there is an efficient quantum circuit that transforms eigenstates of $H_0$ to those of $H_1$ or when $H_0$ and $H_1$ are both integrable.
In addition, Thm.~\ref{thm:Uopt_finite_eps} suggests
the existence of other $\cU$'s that could be simpler to implement and help improve the complexity of our algorithm by increasing $w_l(\beta,\epsilon)$. In Sec.~\ref{sec:localHamiltoniansUne1}, we numerically investigate the complexity improvements of various choices in  $\cU$ for the case of preparing the thermal state of the transverse-field Ising model.

\vspace{0.2cm}

Theorem~\ref{thm:main} is useful when $H_1=H_0+V$, where $V$ is a perturbation that satisfies $\|V\| \ll \|H_1\|$.
This case is important in physics, as we could use our quantum algorithm to prepare the thermal state
of a complex quantum system $H_1$ starting from (a purification of) that of a simpler quantum system $H_0$.
To this end, we analyze the work cutoff for many classes of Hamiltonians and find corresponding lower bounds that, in conjunction with Thm.~\ref{thm:main} and Thm.~\ref{thm:ratioPF} below, allow us to determine the complexity
of the  quantum algorithm for these examples. While the cutoff can depend on $\beta$ and $\epsilon$, our 
results provide suitable choices for $w_l$ that do not depend on $\beta$.
The following result applies to general Hamiltonians and is proven in Sec.~\ref{sec:generalHamiltonians}.

\begin{theorem}[Work cutoff for general Hamiltonians]
\label{thm:w_lgeneral}
Let $H_0$ and $H_1=H_0+V$ be two Hamiltonians acting on $\cH_{\sy}$, $\epsilon > 0$ be the approximation error, and $\cU$ be a given  unitary acting on $\cH_{\sy}$. Then, 
for all $w_l$ satisfying
\begin{align}
\label{eq:w_lgeneral}
    w_l \le -\frac {6 \|V_\cU\|}{\epsilon} \; ,
\end{align}
where $V_\cU:=H_1 - \cU H_0 \cU^\dagger$,
we obtain
\begin{align}
    \sum_{w<w_l} P(w) e^{-\beta \, (w-\Delta \! A(\beta))} & \le  \left( \frac {\epsilon} {6} \right)^2  \;,
\end{align}
and Eq.~\eqref{eq:Wlconditionmain} is satisfied.  
\end{theorem}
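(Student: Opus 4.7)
The plan is to combine the two-time measurement formula for $P(w)$ with the Jarzynski equality $e^{-\beta\Delta\!A(\beta)}=\sum_w P(w)e^{-\beta w}$ to rewrite the left hand side of Eq.~\eqref{eq:Wlconditionmain} in a form amenable to a second-moment tail bound. Writing $p_i=e^{-\beta E_0^i}/\cZ_0$ and using $P(w)=\sum_{i,j:\,E_1^j-E_0^i=w}p_i|\langle E_1^j|\cU|E_0^i\rangle|^2$, the factor $e^{-\beta w}$ converts $e^{-\beta E_0^i}$ into $e^{-\beta E_1^j}$, and multiplying by $e^{\beta\Delta\!A(\beta)}=\cZ_0/\cZ_1$ yields
\[
\sum_{w<w_l}P(w)\,e^{-\beta(w-\Delta\!A(\beta))}=\sum_j q_j\,R_j,
\]
where $q_j:=\langle E_1^j|\rho_1|E_1^j\rangle$ satisfies $\sum_j q_j=1$ and $R_j:=\sum_{i:\,E_0^i>E_1^j-w_l}|\langle E_1^j|\cU|E_0^i\rangle|^2$.

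Next, I would interpret $R_j$ probabilistically. For each fixed $j$ the numbers $|\langle E_1^j|\cU|E_0^i\rangle|^2$ form a probability distribution over $i$; let $X_j$ be the random variable taking value $E_0^i$ with that probability. The hypothesis $w_l\le -6\|V_\cU\|/\epsilon$ forces $w_l<0$, so the condition $E_0^i>E_1^j-w_l$ becomes $X_j-E_1^j>|w_l|$, giving $R_j\le\Pr(|X_j-E_1^j|\ge|w_l|)$.

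The crucial step is the operator identity
\[
\mathbb{E}\bigl[(X_j-E_1^j)^2\bigr]=\langle E_1^j|(\cU H_0\cU^\dagger-E_1^j\one)^2|E_1^j\rangle=\langle E_1^j|(H_1-V_\cU-E_1^j\one)^2|E_1^j\rangle=\langle E_1^j|V_\cU^2|E_1^j\rangle,
\]
where the final equality uses $(H_1-E_1^j\one)|E_1^j\rangle=0$ to annihilate every cross term in the expanded square. Since $\langle E_1^j|V_\cU^2|E_1^j\rangle\le\|V_\cU\|^2$, Markov's inequality gives $R_j\le\|V_\cU\|^2/w_l^2$, and summing against the probabilities $q_j$ produces the uniform bound $\|V_\cU\|^2/w_l^2$, which is at most $(\epsilon/6)^2$ exactly when $w_l\le -6\|V_\cU\|/\epsilon$.

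The only place any real work is done is the operator identity collapsing the conditional second moment to $\langle V_\cU^2\rangle_j$; this is where the eigenstate property $(H_1-E_1^j\one)|E_1^j\rangle=0$ is essential, since without it one would be left with a bound involving the full spread of $H_1$ rather than the perturbation scale $\|V_\cU\|$. Everything else is bookkeeping with the Jarzynski equality and a one-line Markov tail estimate.
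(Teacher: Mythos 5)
Your proof is correct and is essentially the paper's own argument: both reduce the weighted tail sum to a tail probability of the reverse work distribution, bound that tail by a second moment via a Markov/Chebyshev estimate, and identify the second moment as $\tr(V_\cU^2\,\rho_1)\le\|V_\cU\|^2$ using $(H_1-\varepsilon_{1,n})\ket{\phi_{1,n}}=0$ to kill the cross terms. The only cosmetic differences are that the paper routes the first step through a generalized Crooks equality with $f(w)=e^{-\beta w/2}w^2$ and treats $\cU=\one$ before generalizing, whereas you derive the equivalent identity directly and handle general $\cU$ from the outset.
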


Note that $V_\cU=V$ for $\cU=\one$, where the non-equilibrium process is trivial. When the Hamiltonians commute, i.e., $[H_0, H_1]=0$, we obtain 
a simpler bound for the work cutoff that is independent of $\epsilon$. The following result is proven in 
 Sec.~\ref{sec:commutingHamiltonians}.

\begin{theorem}[Work cutoff for commuting Hamiltonians]
\label{thm:w_lcommuting}
Let $H_0$ and $H_1=H_0+V$ be two Hamiltonians acting on $\cH_{\sy}$ that satisfy $[H_0,H_1]=0$, $\epsilon \ge 0$ be the approximation error, and $\cU=\one$. 
Then, for all $w_l$ satisfying
\begin{align}
    w_l \le -\|V\| \;,
\end{align}
we obtain
\begin{align}
\label{eq:Wlconditioncommuting}
    \sum_{w<w_l} P(w) e^{-\beta \, ( w-\Delta \! A(\beta))} & =0 \;,
\end{align}
and Eq.~\eqref{eq:Wlconditionmain} is satisfied.
\end{theorem}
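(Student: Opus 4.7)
The plan is to exploit the fact that when $[H_0,H_1]=0$ and $\cU=\one$, the two-time measurement scheme that defines $P(w)$ admits no transitions between distinct energy sectors, so the support of $P(w)$ is contained in the spectrum of $V$, which in turn lies in $[-\|V\|,\|V\|]$. This immediately makes the sum in Eq.~\eqref{eq:Wlconditionmain} empty whenever $w_l\le-\|V\|$.

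More concretely, I would proceed as follows. First, I would write down $P(w)$ from the two-time measurement scheme (as defined in Eq.~\eqref{eq:TTdist}), which for a trivial drive $\cU=\one$ takes the form
\begin{align}
P(w)=\sum_{n,m}\frac{e^{-\beta E_0^{(n)}}}{\cZ_0}\,|\langle m|n\rangle|^{2}\,\delta\!\bigl(w-(E_1^{(m)}-E_0^{(n)})\bigr),
\end{align}
where $|n\rangle$ and $|m\rangle$ are eigenstates of $H_0$ and $H_1$ with eigenvalues $E_0^{(n)}$ and $E_1^{(m)}$, respectively. Next, using the hypothesis $[H_0,H_1]=0$, I would select a common eigenbasis $\{|n\rangle\}$ of both Hamiltonians, so that $|\langle m|n\rangle|^{2}=\delta_{m,n}$. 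This collapses the double sum to a single sum and shows that the only work values occurring with nonzero probability are
\begin{align}
w_n=E_1^{(n)}-E_0^{(n)}=\langle n|V|n\rangle.
\end{align}

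From this identification, I would bound $|w_n|$ by $\|V\|$ using $|\langle n|V|n\rangle|\le\|V\|$, which holds because $V$ is Hermitian and the operator norm dominates the absolute value of any diagonal matrix element in an orthonormal basis. Therefore the support of $P(w)$ is contained in $[-\|V\|,\|V\|]$, and any $w_l\le-\|V\|$ satisfies $\{w:w<w_l\}\cap\mathrm{supp}(P)=\emptyset$. Consequently
\begin{align}
\sum_{w<w_l}P(w)\,e^{-\beta(w-\Delta\!A(\beta))}=0,
\end{align}
which trivially satisfies Eq.~\eqref{eq:Wlconditionmain} for every $\epsilon\ge 0$.

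I do not anticipate any real obstacle in this argument; the entire content is the observation that commutativity plus $\cU=\one$ eliminates all off-diagonal transition amplitudes in the two-time measurement scheme, reducing $w$ to a diagonal element of $V$. The only minor care needed is to handle degenerate spectra of $H_0$ and $H_1$ correctly, which is resolved simply by fixing a single common eigenbasis at the outset and indexing $P(w)$ by eigenstates rather than eigenvalues.
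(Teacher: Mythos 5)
Your proposal is correct and follows essentially the same route as the paper: diagonalize $H_0$ and $H_1$ in a common eigenbasis so that the only work values with nonzero probability are eigenvalues of $V$ (equivalently, diagonal matrix elements $\langle n|V|n\rangle$ in that basis), bound these by $\|V\|$, and conclude that the sum over $w<w_l\le-\|V\|$ is empty. Your closing remark about fixing a single common eigenbasis to handle degeneracies is the right way to make the argument airtight.
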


The previous examples could be interpreted as limiting cases,
for which a suitable choice of $w_l(\beta,\epsilon)$ in Thm.~\ref{thm:main} can scale with $1/\epsilon$ or does not depend on $\epsilon$ at all. Nevertheless, many other cases 
are expected to lie in between, with a milder dependence of $w_l(\beta,\epsilon)$ on $\epsilon$ than $1/\epsilon$. To this end,
we analyze the case where $H_0$, $H_1$, and $V$ are local spin Hamiltonians, and
in Sec.~\ref{sec:localHamiltonians} we show:

\begin{theorem}[Work cutoff for local spin Hamiltonians]
\label{thm:w_llocal}
Let $H_1=H_0+V$, where $H_0=\sum_{X \in \Lambda} h_{0,X}$ and $V=\sum_{X \in \Lambda} v_X$ are $k$-local Hamiltonians of degree $g$ in a lattice $\Lambda$, as in Def.~\ref{def:LocalHamiltonian}, $\epsilon > 0$ be the approximation error, and $\cU=\one$. The number of local terms in $V$ is at most $M$, and these local terms satisfy $\|h_{0,X}\| \le h$ and $\|v_X\| \le v$ for all $X \in \Lambda$. Then, 
for all $w_l$ satisfying
\begin{align}
\label{eq:w_llocal}
    w_l \leq -2Mv - 2 h g k \ln (6/\epsilon) \;,
\end{align}
we obtain
\begin{align}
    \sum_{w<w_l} P(w) e^{-\beta \, (w-\Delta \! A(\beta))} \le \left( \frac{ \epsilon} {6} \right)^2   \;,
\end{align}
and Eq.~\eqref{eq:Wlconditionmain} is satisfied.
\end{theorem}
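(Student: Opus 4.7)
The plan is to recast the sum as a tail probability of the reverse-work distribution, apply a Chernoff bound, and then exploit the local structure of $H_0$ and $V=\sum_X v_X$ to control the resulting moment generating function via an interaction-picture representation.

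By the detailed fluctuation (Crooks-type) relation derived from the two-time measurement scheme of Sec.~\ref{sec:fluctuationtheorems}, one has $P(w)\,e^{-\beta(w-\Delta\! A(\beta))} = P^{\mathrm{rev}}(-w)$, where $P^{\mathrm{rev}}$ is the reverse-work distribution obtained by starting from $\rho_1$, applying $\cU^\dagger = \one$, and performing the two projective energy measurements in the opposite order. With $\tilde w = E_0-E_1$ denoting the reverse work, this yields
\begin{equation*}
\sum_{w<w_l} P(w)\,e^{-\beta(w-\Delta\! A(\beta))} \;=\; \Pr_{P^{\mathrm{rev}}}[\tilde w > -w_l] \;\le\; e^{tw_l}\,\bigl\langle e^{-tH_1}e^{tH_0}\bigr\rangle_{\rho_1}\qquad(t\ge 0),
\end{equation*}
where the final inequality is the Markov/Chernoff bound and $\langle e^{-tH_1}e^{tH_0}\rangle_{\rho_1} = \cZ_1^{-1}\tr(e^{-(\beta+t)H_1}e^{tH_0})$ is the moment generating function of $\tilde w$ obtained by a direct spectral computation using cyclicity of the trace.

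Next, I would bound $\langle e^{-tH_1}e^{tH_0}\rangle_{\rho_1}\le \|e^{-tH_1}e^{tH_0}\|$ and represent the operator on the right as a time-ordered exponential in the interaction picture with respect to $H_0$. The Dyson expansion $e^{-tH_1}=e^{-tH_0}\,\mathcal{T}\exp\!\bigl(-\!\int_0^t V_0(s)\,ds\bigr)$ with $V_0(s)=e^{sH_0}V e^{-sH_0}$, together with the similarity identity $e^{-tH_0}V_0(s)e^{tH_0}=V_0(s-t)$, gives
\begin{equation*}
e^{-tH_1}e^{tH_0} \;=\; \mathcal{T}\exp\!\Bigl(-\!\int_{-t}^{0} V_0(s)\,ds\Bigr),
\end{equation*}
so that $\|e^{-tH_1}e^{tH_0}\| \le \exp\!\bigl(\int_0^t \|V_0(-s)\|\,ds\bigr)$ by a Gronwall estimate.

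The key locality step is to bound $\|V_0(-s)\|=\|e^{-sH_0}V e^{sH_0}\|$ using only the local data $(h,g,k,M,v)$ and not extensive quantities such as $\|H_0\|$. Expanding each local term through the commutator series $e^{-sH_0}v_X e^{sH_0}=\sum_{n\ge 0}\frac{(-s)^n}{n!}\,\mathrm{ad}_{H_0}^n(v_X)$ and invoking the nested-commutator estimate
\begin{equation*}
\bigl\|\mathrm{ad}_{H_0}^n(v_X)\bigr\| \;\le\; n!\,c^n\,\|v_X\|,\qquad c = \Theta(hgk),
\end{equation*}
which follows from a walk-counting/support-growth argument on the Hamiltonian interaction graph (the support of $\mathrm{ad}_{H_0}^n(v_X)$ grows by at most $gk$ sites per commutator, each new term contributing a factor $\le 2h\|v_X\|$), one obtains $\|e^{-sH_0}v_X e^{sH_0}\|\le \|v_X\|/(1-cs)$ for $cs<1$. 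Summing over $X$ and restricting $s$ to $[0,1/(2c)]$ yields $\|V_0(-s)\|\le 2\sum_X\|v_X\|\le 2Mv$, hence $\|e^{-tH_1}e^{tH_0}\|\le e^{2Mvt}$ for $0\le t\le 1/(2c)$. Combining with the Chernoff bound, $\Pr_{P^{\mathrm{rev}}}[\tilde w>-w_l]\le e^{t(w_l+2Mv)}$; choosing $t$ at the boundary $1/(2c)=\Theta(1/(hgk))$ and invoking the hypothesis $w_l\le -2Mv-2hgk\ln(6/\epsilon)$ makes the exponent at most $-2\ln(6/\epsilon)$, yielding the desired $(\epsilon/6)^2$. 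The main technical obstacle is establishing the nested-commutator bound with a local constant $c$ that does not scale with the system size; this is precisely where $k$-locality and bounded degree $g$ of $H_0$ are essential, since the crude bound $\|\mathrm{ad}_{H_0}\|\le 2\|H_0\|$ is extensive and would completely destroy the argument.
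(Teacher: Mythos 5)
Your overall route---rewriting the left-hand side via Crooks as the reverse-process tail $\sum_{w>-w_l}P^{\rm rev}(w)$, applying a Chernoff bound, and controlling $\|e^{-tH_1}e^{tH_0}\|$ through the time-ordered interaction-picture exponential together with a locality bound on $\|e^{-sH_0}Ve^{sH_0}\|$---is sound and in fact runs parallel to the paper's proof, which uses the very same time-ordered exponential and the same lemma of Arad et al.\ ($\|e^{-sH_0}Ve^{sH_0}\|\le Mv/(1-shgk)^r$ with $r\le1$), only inside a projector-overlap bound rather than an MGF. The genuine gap is quantitative and it kills the stated constant. The locality constant you can actually establish satisfies $c\ge hgk$ (your own sketch, with a factor $2h$ per commutator and $gk$ neighbours, gives $c=2hgk$; the Arad lemma effectively gives $c=hgk$). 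The Chernoff parameter is therefore restricted to $t\le 1/(2c)\le 1/(2hgk)$, so the exponent obeys
\[
t\,(w_l+2Mv)\;\le\;\frac{w_l+2Mv}{2hgk}\;\le\;-\ln(6/\epsilon),
\]
which yields only $\epsilon/6$, not the required $(\epsilon/6)^2$. Your assertion that the exponent is ``at most $-2\ln(6/\epsilon)$'' silently requires $c\le hgk/2$, which the walk-counting argument does not deliver; hiding the constant inside a $\Theta(\cdot)$ is not admissible here because the theorem's cutoff carries the explicit prefactor $2hgk$ in front of $\ln(6/\epsilon)$. As written, your argument proves the theorem only for the weaker cutoff $w_l\le -2Mv-4hgk\ln(6/\epsilon)$.

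The paper gains the missing factor of two by noticing that the relevant quantity, $\bra{\phi_{1,n}}\Pi^0_{>\varepsilon_{1,n}-w_l}\ket{\phi_{1,n}}$, is a \emph{squared} norm bounded by $\|\Pi^1_{\le \varepsilon_{1,n}}\Pi^0_{>\varepsilon_{1,n}-w_l}\|^2$, so the single estimate $\|\Pi^1_{\le\varepsilon_1}\Pi^0_{>\varepsilon_0}\|\le e^{-a(\varepsilon_0-\varepsilon_1)}\|e^{-aH_1}e^{aH_0}\|$ at $a=1/(2hgk)$ enters squared and the decay rate doubles. You can recover the same factor inside your MGF framework by splitting the Gibbs weight symmetrically: using cyclicity, $\mathbb{E}[e^{t\tilde w}]=\cZ_1^{-1}\,{\rm tr}(L^\dagger L)$ with $L=e^{-(\beta+t)H_1/2}e^{tH_0/2}=e^{-\beta H_1/2}\bigl(e^{-tH_1/2}e^{tH_0/2}\bigr)$, whence $\mathbb{E}[e^{t\tilde w}]\le\|e^{-(t/2)H_1}e^{(t/2)H_0}\|^2$, which only requires $(t/2)\,hgk\le 1/2$ and therefore lets $t$ reach $1/(hgk)$, giving $e^{(w_l+2Mv)/(hgk)}\le(\epsilon/6)^2$. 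Either repair closes the gap; without one of them the bound in the statement does not follow.
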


Often, $Mv$ can be bounded using $\|V\|$, the spectral norm of $V$, as $Mv \le \xi \|V\|$ for some constant $\xi \ge 0$ that is independent of $M$.
In this case, the bound given in Eq.~\eqref{eq:w_llocal} is linear in $\|V\|$. 
The result in Thm.~\ref{thm:w_llocal} can also be generalized to the case $\cU=\cU_1^\dagger \cU_0 \ne \one$ if $V_\cU=\cU_1 H_1 \cU_1^\dagger- \cU_0 H_0 \cU_0^\dagger$ is also a $k$-local Hamiltonian of degree $g$. 
Then, $V_\cU=\sum_{X \in \Lambda} v'_X$ is a sum of $M'$ local terms satisfying $\|v'_X\| \le v'$ for all $X \in \Lambda$, and one needs to replace $M \rightarrow M'$ and $v \rightarrow v'$
in Eq.~\eqref{eq:w_llocal}, as explained in Sec.~\ref{sec:localHamiltoniansUne1}.

Equation~\eqref{eq:w_llocal} is an exponential improvement in terms
of $1/\epsilon$ over the general case given in Eq.~\eqref{eq:w_lgeneral}.
This is due to the locality of the Hamiltonians, which implies a work distribution $P(w)$ that decays exponentially outside a region of work values that depends on $Mv$.
To show this exponential decay, in Lemma~\ref{lem:EigenspaceOverlap} of Sec.~\ref{sec:localHamiltoniansU=1} we give a result for local Hamiltonians that might be of independent interest. Roughly, this lemma states that 
the support of an eigenspace of $H_0$ of sufficiently low energies (eigenvalues) on an eigenspace of $H_1$ of sufficiently large energies decays exponentially 
with the energy difference minus $2Mv$.
The proof of Lemma~\ref{lem:EigenspaceOverlap} is in Appendix~\ref{app:EigenspaceOverlap}.

\vspace{0.2cm}

In addition to $e^{-\beta w_l(\beta,\epsilon)/2}$,  another relevant factor in the complexity of our  algorithm is $e^{\beta \, \Delta \! A(\beta)/2}$. 
In Appendix~\ref{app:ratioPF} we show that the absolute value of the free-energy difference can be bounded in terms of $\|V\|$ in general:
\begin{theorem}[Free-energy difference]
\label{thm:ratioPF}
    Let $H_0$ and $H_1=H_0+V$ be two Hamiltonians, and $\Delta \! A(\beta) = -\frac{1}{\beta} \ln(\cZ_1(\beta)/\cZ_0(\beta))$ be the free-energy difference at inverse temperature $\beta \ge 0$.
    Then,
    \begin{align}
      \label{eq:ratioPF}
        |\Delta \! A(\beta)| \le \|V\| \; .
    \end{align}
\end{theorem}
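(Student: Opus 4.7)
The plan is to prove the bound by a standard thermodynamic-integration argument: interpolate linearly between the two Hamiltonians and control the derivative of the free energy along the path.

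First I would introduce the one-parameter family $H(s):=H_0+sV$ for $s\in[0,1]$, so that $H(0)=H_0$ and $H(1)=H_1$, and define the associated partition function $\cZ(s):=\tr(e^{-\beta H(s)})$ and free energy $F(s):=-\tfrac{1}{\beta}\ln \cZ(s)$. In this notation, $\Delta \! A(\beta)=F(1)-F(0)$, so it suffices to bound the integral $\Delta \! A(\beta)=\int_0^1 F'(s)\,ds$.

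Next I would compute $F'(s)$. The obstacle here is that $H(s)$ and $H'(s)=V$ need not commute, so one cannot naively differentiate the exponential. The clean way around this is Duhamel's formula,
\begin{align}
    \frac{d}{ds} e^{-\beta H(s)} = -\beta \int_0^1 e^{-\alpha \beta H(s)}\, V\, e^{-(1-\alpha)\beta H(s)}\, d\alpha \;,
\end{align}
combined with the cyclic property of the trace, which collapses the $\alpha$-integral and yields
\begin{align}
    \frac{d}{ds}\cZ(s) = -\beta\, \tr\bigl(V\, e^{-\beta H(s)}\bigr) \;.
\end{align}
Therefore $F'(s)=\tr(V \rho_s)$, where $\rho_s:=e^{-\beta H(s)}/\cZ(s)$ is the thermal state at interpolation parameter $s$.

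Finally, since $\rho_s$ is a density operator and $-\|V\|\,\one \le V \le \|V\|\,\one$, the expectation value obeys $|\tr(V\rho_s)|\le \|V\|$ uniformly in $s$. Integrating then gives
\begin{align}
    |\Delta \! A(\beta)| = \left| \int_0^1 \tr(V\rho_s)\, ds \right| \le \int_0^1 \|V\|\, ds = \|V\| \;,
\end{align}
which is Eq.~\eqref{eq:ratioPF}. The only genuinely delicate point is justifying $\frac{d}{ds}\cZ(s)=-\beta \tr(V e^{-\beta H(s)})$ in the non-commuting case, and Duhamel's formula together with the cyclicity of the trace resolves it in one line; the rest is a direct expectation-value bound.
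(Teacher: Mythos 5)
Your proposal is correct and follows essentially the same route as the paper's proof in Appendix~\ref{app:ratioPF}: thermodynamic integration along $H_s=H_0+sV$, arriving at $\partial_s \ln \cZ_s = -\beta\,\tr(V\rho_s)$ and bounding $|\tr(V\rho_s)|\le \|V\|$. The only difference is in how that derivative is justified: the paper uses first-order perturbation theory (Hellmann--Feynman) on the eigenvalues $\varepsilon_{s,p}$, whereas you use Duhamel's formula plus cyclicity of the trace, which reaches the same identity while sidestepping any concern about differentiability of eigenvectors at level crossings.
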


\vspace{0.3cm}

Theorem~\ref{thm:main}, together with Thms.~\ref{thm:w_lgeneral},~\ref{thm:w_lcommuting},~\ref{thm:w_llocal}, and~\ref{thm:ratioPF}, allow us to determine the complexity of our quantum  algorithm for the previous three examples. In particular, if we consider a trivial non-equilibrium process and assuming $Mv \le \xi \|V\|$
for local Hamiltonians, these theorems
 imply:

\begin{corollary}[Number of amplitude amplification rounds for $\cU=\one$]
\label{cor:main}
Let $\cU=\one$. Then, the average number of amplitude amplification rounds of our quantum  algorithm is
$Q=\tilde \cO(e^{(\frac{3}{\epsilon} + \frac 1 2 )\beta \|V\|})$
for general Hamiltonians, $Q=\tilde \cO(e^{\beta \|V\|})$
for commuting Hamiltonians, and $Q=\tilde \cO ((6/\epsilon)^{\beta h g k} \; e^{(\xi+\frac 1 2)\beta \|V\|})$
for local Hamiltonians. 
The $\tilde \cO$ notation hides the term $e^{\sqrt{\ln(1/\epsilon)}}$ that
is subpolynomial in $1/\epsilon$.
\end{corollary}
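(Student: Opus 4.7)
The plan is to obtain the corollary by plugging the work-cutoff bounds from Thms.~\ref{thm:w_lgeneral},~\ref{thm:w_lcommuting}, and~\ref{thm:w_llocal} into the complexity formula of Thm.~\ref{thm:main}, together with the free-energy bound of Thm.~\ref{thm:ratioPF}. Recall that Thm.~\ref{thm:main} gives
\begin{align*}
Q \;=\; \cO\!\left( e^{\sqrt{\ln(1/\epsilon)}} \, e^{\frac{\beta}{2}(\Delta\! A(\beta)-w_l(\beta,\epsilon))} \right) ,
\end{align*}
and Thm.~\ref{thm:ratioPF} yields $\Delta\! A(\beta) \le |\Delta\! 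A(\beta)| \le \|V\|$, so that $e^{\beta \Delta\! A(\beta)/2}\le e^{\beta\|V\|/2}$ uniformly in all three cases. The $\tilde\cO$ notation in the statement absorbs the subpolynomial factor $e^{\sqrt{\ln(1/\epsilon)}}$, so the only thing that distinguishes the three cases is the choice of $w_l(\beta,\epsilon)$.

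For the general case I would note that with $\cU=\one$ the perturbation is $V_\cU=V$, and Thm.~\ref{thm:w_lgeneral} allows $w_l=-6\|V\|/\epsilon$. Substituting gives $e^{-\beta w_l/2}=e^{3\beta\|V\|/\epsilon}$, which combined with the $e^{\beta\|V\|/2}$ factor produces the claimed $\tilde\cO(e^{(3/\epsilon+1/2)\beta\|V\|})$. For commuting Hamiltonians, Thm.~\ref{thm:w_lcommuting} permits the $\epsilon$-independent choice $w_l=-\|V\|$, yielding $e^{-\beta w_l/2}=e^{\beta\|V\|/2}$ and thus $Q=\tilde\cO(e^{\beta\|V\|})$.

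For the local-spin case I would invoke Thm.~\ref{thm:w_llocal} with the choice $w_l=-2Mv-2hgk\ln(6/\epsilon)$, then apply the stated assumption $Mv\le \xi\|V\|$ to bound
\begin{align*}
-\tfrac{\beta}{2}w_l \;\le\; \xi\,\beta\|V\| + \beta h g k\ln(6/\epsilon),
\end{align*}
so that $e^{-\beta w_l/2}\le (6/\epsilon)^{\beta h g k}\, e^{\xi\beta\|V\|}$. Multiplying by $e^{\beta\|V\|/2}$ gives $\tilde\cO\!\left((6/\epsilon)^{\beta h g k}\, e^{(\xi+1/2)\beta\|V\|}\right)$, as claimed.

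There is no serious obstacle here; the corollary is an assembly statement. The only mild subtlety I would highlight is that Thm.~\ref{thm:main} only assumes $w_l(\beta,\epsilon)\ge w_{\min}$ is any value satisfying the tail condition~\eqref{eq:Wlconditionmain}, so one is free to pick the largest-in-magnitude negative $w_l$ allowed by the corresponding cutoff theorem; the three dedicated cutoff theorems certify that these choices indeed satisfy~\eqref{eq:Wlconditionmain}, so Thm.~\ref{thm:main} applies verbatim. The $\tilde\cO$ bookkeeping, i.e.\ absorbing $e^{\sqrt{\ln(1/\epsilon)}}$ since it is subpolynomial in $1/\epsilon$, should be stated explicitly at the end to justify the notation used in the corollary.
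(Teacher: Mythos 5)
Your proposal is correct and follows exactly the route the paper intends: Cor.~\ref{cor:main} is stated as a direct consequence of substituting the cutoffs from Thms.~\ref{thm:w_lgeneral},~\ref{thm:w_lcommuting}, and~\ref{thm:w_llocal} (with $V_\cU=V$ for $\cU=\one$ and $Mv\le\xi\|V\|$ in the local case) and the bound $\Delta\!A\le\|V\|$ from Thm.~\ref{thm:ratioPF} into the complexity formula of Thm.~\ref{thm:main}, and your arithmetic in all three cases matches the stated exponents.
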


Corollary~\ref{cor:main} becomes particularly useful in the  regime $\beta \|V\| =\cO( 1)$, where the complexity of our quantum algorithm can be very mild. Additionally, the dependence of $Q$ on $1/\epsilon$, when all other parameters are fixed, varies strongly depending on the case: It is exponential in $1/\epsilon$ in general, subpolynomial in $1/\epsilon$ for commuting Hamiltonians, and polynomial in $1/\epsilon$ for local Hamiltonians. 

\vspace{0.1cm}

Last, we note that if a unitary $\cU^*$ is implemented at the end of the quantum algorithm, where $\cU^*$ is the complex conjugate of $\cU$ in the computational basis, 
the prepared pure state is taken to a form that is suitable
for running the quantum algorithm again. This could allow one to prepare a sequence of purifications of thermal states corresponding to Hamiltonians $H_0,H_1,H_2,\ldots$, by repeated uses of our quantum algorithm; see Sec.~\ref{sec:two-copy} for details.

\subsection{Related work}
\label{sec:relatedwork}

There are several results on quantum algorithms for simulating physical systems in thermal equilibrium. We summarize some and detail their similarities and differences with our contributions. 

References~\cite{PW09,CS16,tong2021fast} provide quantum algorithms for preparing thermal states of quantum systems by  applying the exponential operator $e^{-\beta H_1/2}$ to the maximally-entangled initial state, which is $\ket{\Psi_0}$ for $H_0=0$ in our case. This exponential operator can be implemented by means of  quantum phase estimation~\cite{Kit95,CEMM98}, as in Ref.~\cite{PW09}, or by using more recent techniques such as the linear combination of unitaries, as in Ref.~\cite{CS16}, or other operators (e.g., $1/(z_j-H_1)$ for $z_j \in \mathbb C$), as in Ref.~\cite{tong2021fast}. The dominant term in the complexity of those algorithms scales as $\sqrt{N/\cZ_1(\beta)}$ if $H_1 \ge 0$, and this is $\cO(e^{\beta \|H_1\|/2})$. The problem we analyze here is similar in that we also aim at implementing an exponential operator, but the complexity of our algorithm greatly improves that of Refs.~\cite{PW09,CS16,tong2021fast} when $\|V\| \ll \|H_1\|$
due to our assumptions. One assumption is that the initial state $\ket{\Psi_0}$ is the purification of the thermal state of some quantum system $H_0 \ne 0$,
and this helps when $H_1$ is a perturbation of $H_0$. For example,
$H_0$ could correspond to a non-interacting, or integrable quantum system and $V$
is a perturbation that introduces interactions, or makes the system non-integrable.
Another assumption is the ability to implement a non-equilibirum unitary $\cU$ that can improve the complexity further. In fact, the problem 
analyzed in Refs.~\cite{PW09,CS16,tong2021fast} can be thought of an instance
of the more general problem analyzed in this work, one where $H_0=0$, $H_1 \ge 0$, and $\cU=\one$. For this instance, the complexity of our quantum algorithm approximately matches that of Refs.~\cite{PW09,CS16}.

Reference~\cite{Roncaglia2014} provides a quantum algorithm to sample from the work distribution $P(w)$ that, in combination with fluctuation theorems, can be used to estimate the free-energy difference $\Delta \! A$ or the ratio of partition functions $\cZ_1(\beta)/\cZ_0(\beta)$. To this end, it uses a generalized measurement implemented via a version of the quantum phase estimation algorithm~\cite{Kit95}, where an ancillary register encodes 
the work value resulting from the difference of energies of $H_1$
and $H_0$. The analysis in Ref.~\cite{Roncaglia2014} does not consider the problem of preparing thermal states; however, it could be easily adapted to the latter by measuring the ancilla and accepting the outcome with probability proportional to $e^{-\beta w}$ (see Sec.~\ref{sec:two-time}).
 While this approach works, it has significantly worse convergence than our quantum algorithm, as we discuss in Sec.~\ref{sec:shortcomings}. 
 One reason is that quantum phase estimation performs poorly in terms of the target approximation error $\epsilon'$, having complexity polynomial in $1/\epsilon'$. 
 Another reason is that this approach prepares a mixed state, and techniques such as amplitude amplification do not apply~\cite{BHMT02} . Our quantum algorithm does not suffer from these shortcomings, because it prepares a purification of the thermal state and uses techniques such as the linear-combination-of-unitaries subroutine or quantum signal processing to avoid quantum phase estimation.

Reference~\cite{ozols2013quantum} studies a quantum version
of the classical rejection sampling algorithm. That algorithm prepares a superposition state that encodes a target probability distribution in its amplitudes. It uses an ancilla qubit that functions as a ``quantum coin'', determining whether a configuration is accepted or rejected. Using amplitude amplification, that algorithm can prepare a target distribution from an initial one, and provide a polynomial quantum speedup under some assumptions. If the probability distributions correspond to thermal states of classical systems, then this algorithm would serve as a method for thermal-state preparation. However, it does not prepare thermal states of quantum systems and thus our results apply more generally.
Indeed, our algorithm could also be interpreted as a quantum version of rejection sampling, where we transform initial quantum states into quantum states with certain target amplitudes through the implementation of an exponential operator. The ancillary system required to implement this exponential operator can be thought of as a ``quantum coin''. If we replace the exponential by some other operator, our quantum algorithm would also allow for the preparation of other target quantum states that are not necessarily thermal states of physical Hamiltonians.  

In the limit $\beta \gg 1$, our quantum algorithm can be used to prepare
the ground state of $H_1$, within some approximation error, but the complexity is exponential in $\beta \|V\|$. Other known quantum algorithms for preparing ground states, including those in Refs.~\cite{Poulin2009,BKS10,Ge2019,Lin2022}, have complexities bounded by factors that depend on the inverse of the spectral gap $\Delta_1>0$ of $H_1$ and the inverse of the overlap $\gamma>0$ between the ground states of $H_0$ and $H_1$. These results are not directly comparable; both situations
$e^{\beta \|V\|} \gg \|H_1\|/(\Delta_1 \gamma)$ and $e^{\beta \|V\|} \ll \|H_1\|/(\Delta_1 \gamma)$
could arise depending on the case. Additionally, if the perturbation is such that $\beta \|V\| \gg 1$, then $\gamma$ is expected to decrease exponentially fast in $\|V\|$ or the size of the system, in which case the complexity bounds for all these methods are exponential.

Reference~\cite{TOVPV11} provides a quantum algorithm, known as quantum Metropolis sampling (QMS), to prepare thermal states of quantum systems via a generalization of the classical Metropolis sampling algorithm, as  used in Markov chain Monte Carlo simulations. QMS is based on a random walk that is applied to the state of the system; that is, it constructs a superoperator that has the thermal state as the unique fixed point. The random walk applies a random unitary, and the move is accepted or rejected based on the Metropolis rule. 
 More recently, in Ref.~\cite{chen2021fast}, QMS was 
shown 
to have fast convergence to the thermal state
if the so-called eigenstate thermalization hypothesis (ETH) holds. In this case, the runtime is polynomial in $\beta \|H_1\|$, $\mathfrak n$, and other parameters given by the properties of the system  (i.e., $H_1$). Other quantum algorithms for thermal state preparation are also discussed in Ref.~\cite{shtanko2021algorithms} where, under the ETH, the runtimes can be similar to that of QMS. (Note that the ETH does not always hold, as is the case for some integrable quantum systems~\cite{rigol2008thermalization}.)
Reference~\cite{motta2020determining} provides another quantum algorithm for preparing thermal states based on quantum imaginary time evolution, 
which  requires applying the exponential operator $e^{-\beta H_1/2}$ to product states. The performance  of that algorithm then depends on the structure of $H_1$.
In particular, the exponential operator is approximated by exploiting the exponential decay of spatial correlations in the quantum states generated during the execution. 
Like in QMS, that algorithm also follows a Markov chain that has the thermal state as a fixed point, under some assumptions. 
Nevertheless, those algorithms are not comparable to ours since, rather than applying a random walk or introducing a digital bath
that might need to be refreshed,
we prepare the thermal state through the implementation of an exponential operator acting on a pure state and other unitaries.
However, as in our case, those algorithms may also benefit from starting from an initial state that is easy to prepare and ``closer'' to the target thermal state $\rho_1$. 
This remains as an open problem.

Very recently a number of variational approaches have been proposed for preparing thermal states on near-term quantum computing hardware; see Refs.~\cite{ Sagastizabal2021Variational, Martyn2019Product, Verdon2019Quantum,chowdhury2020variational,Wang2021Variational, Foldager2022noise}. Similar to our approach, Ref.~\cite{Sagastizabal2021Variational} prepares a purification of the thermal state, i.e., a thermofield double state of a target Hamiltonian, via some easier to prepare thermofield double state of another Hamiltonian. However, the transformation to prepare the target   state in this algorithm, as well as those in Refs.~\cite{Martyn2019Product, Verdon2019Quantum, Foldager2022noise, chowdhury2020variational,Wang2021Variational}, is found by utilizing a variational representation of a `guess' thermal state and a cost function formulated in terms of free-energy. Given the variational nature of these algorithms, they lack rigorous complexity guarantees and, indeed, they may experience issues due to barren plateaus~\cite{mcclean2018barren,cerezo2020cost, holmes2021barren,holmes2021connecting,marrero2020entanglement} and local minima~\cite{Bittel2021Training}.


\section{Preliminaries}
\label{sec:fluctuationtheorems}


\subsection{Thermal-state preparation via the two-time measurement scheme}
\label{sec:two-time}

The goal is to prepare the thermal state of a target Hamiltonian $H_1$ given some form of access to the thermal state of an initial Hamiltonian $H_0$. In this section we provide an approach based on non-equilibrium fluctuation theorems. These theorems are manifestations of the fact that information about thermal equilibrium can be extracted from non-equilibrium processes. 
For example, Jarzysnki equality~\cite{Jarzynski1997Nonequilibrium} and Crooks's fluctuation theorem~\cite{Crooks1999Entropy} relate fluctuations of work performed during a non-equilibrium process to free-energy differences between thermal states. While fluctuation theorems have been derived for a variety of dynamics~\cite{Jarzynski2011,Campisi2011}, here we focus on unitarily evolving quantum systems; the unitary $\cU$ appearing in Thm.~\ref{thm:main}. In this setting, the unitary $\cU$ is viewed as implementing a non-equilibrium driving process because it drives a system that is initially thermal with respect to $H_0$ out of equilibrium.

Our starting point is the so-called \textit{two-time measurement} scheme~\cite{Tasaki2000Jarzynski, Kurchan2000Quantum, Talkner2007TasakiCrooks}, which is the most common setting in which fluctuation theorems hold. 
This scheme is based on the observation that the work done on a closed system is equal to the change of its energy, which can be determined by performing projective measurements of the Hamiltonians at the start and end of the non-equilibrium process. 
In more detail, we let $\ket{\phi_{0,m}}$, $m=0,1,\ldots$, be the eigenstates of $H_0$  with corresponding eigenvalues $\varepsilon_{0,m}$,
and $\ket{\phi_{1,n}}$, $n=0,1,\ldots$, the eigenstates of $H_1$ with  corresponding eigenvalues 
 $\varepsilon_{1,n}$.
For simplicity, we assume that $H_0$ and $H_1$ are non-degenerate, having distinct eigenvalues $\varepsilon_{0,0}<\varepsilon_{0,1}<\ldots$ and $\varepsilon_{1,0}<\varepsilon_{1,1}<\ldots$. 
 The initial state of 
 the system is the thermal state $\rho_0=e^{-\beta H_0}/\cZ_0$.  The two-time measurement scheme 
consists of three steps: 
i) a projective measurement of $H_0$  on $\rho_0$, which outputs an eigenvalue $\varepsilon_{0,m}$ with (Boltzmann) probability $P_0(\varepsilon_{0,m}) = e^{-\beta \varepsilon_{0,m}}/\cZ_0$ and leaves the system in the eigenstate 
$\ket{\phi_{0,m}}$, ii) an implementation of a unitary $\cU$, which corresponds to the non-equilibrium process, and iii) a projective measurement of $H_1$, 
which outputs an eigenvalue $\varepsilon_{1,n}$ with probability $P(\varepsilon_{1,n}|\varepsilon_{0,m})= | \!\bra{\phi_{1,n}}\cU \ket{\phi_{0,m}}\! |^2$
if the outcome of the first measurement was $\varepsilon_{0,m}$, and leaves the system in the eigenstate $\ket{\phi_{1,n}}$.

The work 
performed during this process is defined as $w:=\varepsilon_{1,n}-\varepsilon_{0,m}$. This is a random variable sampled according to 
a distribution
\begin{align}
\label{eq:TTdist}
    P(w) & = \sum_{m,n: \varepsilon_{1,n}- \varepsilon_{0,m}=w}  P(\varepsilon_{1,n}|\varepsilon_{0,m}) P_0(\varepsilon_{0,m})\; .
\end{align}
 Fluctuation theorems establish a connection between $P(w)$ and thermal quantities~\cite{Jarzynski1997Nonequilibrium,Crooks1999Entropy}. For example, Jarzynski equality
states 
\begin{align}
    \sum_w P(w) e^{-\beta w}
    &= \sum_{m,n} P(\varepsilon_{1,n}|\varepsilon_{0,m}) P_0(\varepsilon_{0,m}) e^{-\beta (\varepsilon_{1,n} -\varepsilon_{0,m})}  \\
    &=  \frac 1 {\cZ_0}\sum_{m,n} P(\varepsilon_{1,n}|\varepsilon_{0,m})  e^{-\beta \varepsilon_{1,n}}\\
    & = \frac{\cZ_1}{\cZ_0} \\
    & = e^{-\beta  \, \Delta \! A}\;,
\end{align}
where we used the property $\sum_{m}P(\varepsilon_{1,n}|\varepsilon_{0,m})=1$ for all $n$. Then, an estimate of $\cZ_1/\cZ_0$ or the free-energy difference $\Delta \! A$ can be obtained from the two-time measurement scheme by sampling $w$ and estimating the expectation of $e^{-\beta w}$.
This idea is investigated in Ref.~\cite{Roncaglia2014,bassman2021computing}.

The two-time measurement scheme
can be slightly modified to prepare the thermal state $\rho_1$ from $\rho_0$ as follows. 
After the projective measurement of $H_1$, the outcome
is accepted with probability  $q(w) \propto e^{-\beta w}$ or rejected with probability $1-q(w)$. Post-selecting on accepted outcomes only, the resulting state $\rho_f$ can be shown to be $\rho_1$, since:
\begin{align}
    \label{eq:reweighting}
    \rho_f &\propto \sum_{m,n} e^{-\beta (\varepsilon_{1,n}-\varepsilon_{0,m})} 
    P(\varepsilon_{1,n}|\varepsilon_{0,m}) P_0(\varepsilon_{0,m}) \ketbra{\phi_{1,n}}\\
     &= \frac{1}{\cZ_0} \sum_{m,n} e^{-\beta \varepsilon_{1,n}} P(\varepsilon_{1,n}|\varepsilon_{0,m}) \ketbra{\phi_{1,n}} \\
    &= \frac{1}{\cZ_0} \sum_{n} e^{-\beta \varepsilon_{1,n}} \ketbra{\phi_{1,n}} \\
    & \propto \rho_1 \;.
\end{align}
 
This re-weighting approach to thermal-state preparation
can be interpreted as a generalization of {\em rejection sampling}, a standard method used to generate a target probability distribution from a given initial one using an accept/reject routine. However, as an approach for solving the TSPP, it presents several shortcomings 
that we discuss below.

When the eigenvalues of $H_0$ (or $H_1$) are not distinct, a projective measurement of the Hamiltonian projects the state into a combination of states with definite eigenvalue. This degenerate case can be interpreted as a limiting instance of the non-degenerate one, and our prior statements on fluctuation theorems and thermal-state preparation using the two-time measurement scheme hold in general.

\subsubsection{Shortcomings of the two-time measurement scheme}
\label{sec:shortcomings}

The probability of accepting a given work value $w \in [w_{\min},w_{\max}]$ could be exponentially small, i.e., scaling as $e^{\beta w_{\min}}$ to satisfy $q(w) \le 1$. Note that $w_{\min}$ could be as small as $-\|H_0\|-\|H_1\|$, which occurs when, for example, the first measurement outputs the largest eigenvalue of $H_0$ and the second measurement outputs the lowest eigenvalue of $H_1$. This already gives a complexity that can scale as $e^{\beta (\|H_0\|+\|H_1\|)}$, even if $H_1$ is a small perturbation of $H_0$. Additionally, one may try to (quadratically) improve this complexity by means of amplitude amplification~\cite{BHMT02}. However, the two-time measurement scheme assumes access to the thermal state $\rho_0$, which is a mixed state, and amplitude amplification does not apply to this setting.

The two-time measurement scheme also requires perfect projective measurements of $H_0$ and $H_1$. On a quantum computer, these measurements are commonly implemented using quantum phase estimation~\cite{Kit95,CEMM98}, which is the  approach discussed in Ref.~\cite{Roncaglia2014}. However, quantum phase estimation can only implement approximate measurements, and its complexity depends polynomially on the inverse of the error. As explained, the acceptance probability following the two-time measurement scheme can be exponentially small, and the additional error made in each measurement has to be exponential in $ w_{\min}$ for guaranteed convergence. The resulting complexity of each use of quantum phase estimation is then exponentially large, being exponential in $-\beta w_{\min}$ or exponential in $\beta (\|H_0\|+\|H_1\|)$, in the worst case.

In order to avoid using quantum phase estimation, we could alternatively
use a different scheme, and attempt to prepare $\rho_1$ by acting directly on $\rho_0$ with two exponentials, since 
\begin{align}
\label{eq:expproduct}
  \rho_1 &\propto  e^{-\beta H_1} e^{\beta H_0} \rho_0 \\
  & = e^{-\beta H_1/2}   e^{\beta H_0/2} \rho_0 e^{\beta H_0/2} 
  e^{-\beta H_1/2} \;. 
\end{align}
Quantum algorithms to implement exponentials with polylogarithmic complexity on the inverse of the error, that do not use quantum phase estimation, exist~\cite{CS16,CSS18}. However, this idea suffers from an additional problem because the probability of success in implementing each exponential can be small, i.e., scaling as $e^{-\beta \|H_0\|}$ and $e^{-\beta \|H_1\|}$, respectively. 
That is, each exponential is implemented through a unitary operation, and thus the best we can hope for is to implement the normalized exponential operators $e^{\beta H_0/2}/e^{\beta \|H_0\|/2}$ and $e^{-\beta H_1/2}/e^{\beta \|H_1\|/2}$, to satisfy the norm condition. 
Unfortunately, this exponential scaling appears even if $w_{\min} \gg -\|H_0\| -\|H_1\|$ and $\beta \|V\| \ll 1$, which  can be prohibitively small. This is related to the fact that Eq.~\eqref{eq:expproduct} can be interpreted as a sequence of two steps, one that takes $\rho_0$ to $\one$, associated with the infinite-temperature state, and another that takes $\one$ back to $\rho_1$. This approach also results in unnecessary complexity overheads.

These issues show that the complexity of the two-time measurement scheme, or simple modifications thereof, for solving the TSPP can be  prohibitive.  Improved schemes are needed.

\subsection{Thermal-state preparation via the two-copy measurement scheme}
\label{sec:two-copy}

The prior shortcomings can be avoided as follows. First, in order to use amplitude amplification 
and (quadratically) improve complexity due to small acceptance probabilities, we consider a scheme
that assumes access to a pure state $\ket{\Psi_0} \in \cH_{\sy} \otimes \cH_{\sy'}$, which is a purification of $\rho_0$, and prepares the pure state 
$\ket{\Psi_1} \in  \cH_{\sy} \otimes \cH_{\sy'}$, which is a purification of $\rho_1$. The Hilbert spaces are $\cH_{\sy} \equiv \cH_{\sy'}\equiv \mathbb C^N$. 
The thermal state $\rho_1$ is then obtained by discarding or tracing out one of the subsystems, which will be $\cH_{\sy'}$ in this case.
Second, to fix the unnecessary overhead due to the sequential action of $e^{-\beta H_1/2}e^{\beta H_0/2}$, we use proper purifications for $\rho_0$ and $\rho_1$, where the product of exponentials in Eq.~\eqref{eq:expproduct}  can be effectively implemented by acting concurrently --rather than sequentially-- with two exponentials on the different subsystems, corresponding to $H_0$ and $H_1$. In Sec.~\ref{sec:qalg} we show that this approach works and avoids the prior complexity issues.

Our  \textit{two-copy measurement} scheme then works as follows. Rather than assuming access to $\rho_0$, we assume access to  the unitary $U_0$ that prepares the following state:
\begin{align}
\label{eq:Psi0}
\ket{\Psi_0} =\frac{1}{\sqrt{\cZ_0}}\sum_{m} e^{-\beta \varepsilon_{0,m}/2} \ket{\phi_{0,m}}\ket{\phi^*_{0,m}} \; ,
\end{align}
which satisfies $\rho_0=\tr_{\cH_{\sy'}}(\ketbra{\Psi_0})$. 
For a linear operator $X$ acting on $\cH_{\sy}\otimes \cH_{\sy'}$,
$\tr_{\cH_{\sy'}}(X)$ is the partial trace over $\cH_{\sy'}$.
The asterisk stands for complex conjugation:
if $\ket{\phi}=\sum_\sigma c_{\sigma}\ket \sigma$ in the computational basis $\{\ket \sigma\}_\sigma$, then $\ket{\phi^*}=\sum_\sigma c^*_{\sigma}\ket \sigma$.
We also let $H^*$  be a Hamiltonian that is the complex-conjugate of a Hamiltonian $H$ in the computational basis. These definitions imply, for example, $H^*_0 \ket{\phi^*_{0,m}} =\varepsilon_{0,m}\ket{\phi^*_{0,m}}$ and $H^*_1 \ket{\phi^*_{1,n}} =\varepsilon_{1,n}\ket{\phi^*_{1,n}}$.
The scheme consists of three steps:
i) a projective measurement of $H_0^*$ on the second copy of $\ket{\Psi_0}$, which outputs an eigenvalue $\varepsilon_{0,m}$ with (Boltzmann) probability $P_0(\varepsilon_{0,m})=e^{-\beta \varepsilon_{0,m}}/\cZ_0$ and leaves the systems in
$\ket{\phi_{0,m}}\ket{\phi^*_{0,m}}$, ii) an implementation of a unitary $\cU$ on the first copy, which corresponds to the non-equilibrium process, and iii) a projective measurement of $H_1$   on the first copy, which outputs an eigenvalue $\varepsilon_{1,n}$ with probability $P(\varepsilon_{1,n}|\varepsilon_{0,m})=|\!\bra{\phi_{1,n}} \cU \ket{\phi_{0,m}}\!|^2$ if the outcome of the first measurement was $\varepsilon_{0,m}$, and leaves the systems in $\ket{\phi_{1,n}}\ket{\phi^*_{0,m}}$.
If $w=\varepsilon_{1,n}-\varepsilon_{0,m}$,
then the two-copy measurement scheme outputs $w$
according to the same distribution $P(w)$ as Eq.~\eqref{eq:TTdist}.

This two-copy measurement scheme can be slightly modified to 
prepare the thermal state $\rho_1$ from $\ket{\Psi_0}$ as follows.
If we disregard the measurement in step i) and consider step ii), the initial state is transformed to
\begin{align}
\label{eq:two-copysecondstep}
     ( \cU \otimes \one ) \ket{\Psi_0} &=\frac{1}{\sqrt{\cZ_0}}\sum_{m,n}  e^{-\beta \varepsilon_{0,m}/2} \bra{\phi_{1,n}} \cU \ket{\phi_{0,m}}  \ket{\phi_{1,n}}\ket{\phi_{0,m}^*} \;.
\end{align}
Note that the state of $\cH_{\sy}$ obtained by tracing out $\cH_{\sy'}$ is the same as that prepared if the measurement in step i) was done but the outcome was not recorded. 
Let
\begin{equation}\label{eq:WorkOp}
    W :=  H_1 \otimes \one - \one \otimes H^*_0  \; 
\end{equation}
be a {\em work operator}
that has eigenstates $\ket{\psi_{m,n}}:=\ket{\phi_{1,n}}\ket{\phi_{0,m}^*}$
and eigenvalues $w_{m,n}:=\varepsilon_{1,n}-\varepsilon_{0,m}$; that is, $W  \ket{\psi_{m,n}}= w_{m,n} \ket{\psi_{m,n}}$.
Then, Eq.~\eqref{eq:two-copysecondstep} can be written as
\begin{align}
\label{eq:stateworkbasis}
   ( \cU \otimes \one ) \ket{\Psi_0} &=  \sum_{m,n}  \sqrt{p_{m,n}} e^{i\varphi_{m,n}} \ket{\psi_{m,n}} \; ,
\end{align}
where $p_{m,n}:=(e^{-\beta \varepsilon_{0,m}}/\cZ_0) |\bra{\phi_{1,n}} \cU \ket{\phi_{0,m}}|^2$. In particular,  $p_{m,n}=P(\varepsilon_{1,n}|\varepsilon_{0,m}) P_0(\varepsilon_{0,m})$ is the probability of measuring first $\varepsilon_{0,m}$ and then $\varepsilon_{1,n}$ in the two-time measurement scheme and, due to normalization, $\sum_{m,n} p_{m,n}=1$. The phases $\varphi_{m,n}$ are irrelevant to the argument that follows. Equation~\eqref{eq:TTdist} implies
\begin{align}
\label{eq:TCdist}
    P(w)=\sum_{m,n: \varepsilon_{1,n}-\varepsilon_{0,m}=w} p_{m,n} \;.
\end{align}
The action of $e^{-\beta W/2}$ on 
Eq.~\eqref{eq:stateworkbasis} transforms the state as
\begin{align}
    e^{-\beta W/2} ( \cU \otimes \one ) \ket{\Psi_0} & = \sum_{m,n} e^{-\beta w_{m,n}/2} \sqrt{p_{m,n}}e^{i\varphi_{m,n}} \ket{\psi_{m,n}}
    \\
    \label{eq:two-copythirdstep}
    &=
    \frac{1}{\sqrt{\cZ_0}}\sum_{m,n} e^{-\beta \varepsilon_{1,n}/2} \bra{\phi_{1,n}} \cU \ket{\phi_{0,m}}  \ket{\phi_{1,n}}\ket{\phi_{0,m}^*} \;.
\end{align}
This exponential operator has a similar effect as that of the re-weighting approach discussed in Sec.~\ref{sec:two-time}. 
In fact, the state of $\cH_{\sy}$ in Eq.~\eqref{eq:two-copythirdstep} is proportional to $\rho_1$:
\begin{align}
\nonumber
    \tr_{\cH_{\sy'}} ( e^{-\beta W/2}& ( \cU \otimes \one )  \ketbra{\Psi_0}( \cU^\dagger \otimes \one ) e^{-\beta W/2}) \\
    &\propto  \sum_{m,n,n'}e^{-\beta (\varepsilon_{1,n}+\varepsilon_{1,n'})/2}
    \bra{\phi_{1,n}} \cU \ket{\phi_{0,m}} \bra{\phi_{0,m}} \cU^\dagger \ket{\phi_{1,n'}}  \ket{\phi_{1,n}} \! \bra{\phi_{1,n'}} \\
    \label{eq:proptorho1}
    & \propto \rho_1 \;.
\end{align}
The exponential operator also has two properties that are useful. One is that $e^{-\beta W/2}=e^{-\beta H_1/2} \otimes e^{-\beta H^*_0/2}$,
and this can be implemented by acting concurrently  on $\cH_{\sy}$ and $\cH_{\sy'}$ with independent operators. Another is that
we can use a high-precision quantum algorithm, rather than quantum phase estimation, to implement  $e^{-\beta W/2}$;
see Sec.~\ref{sec:Fourier}.

A quantum algorithm to prepare $\rho_1$ from $\ket{\Psi_0}$
could follow the previous two steps. However, we find it clearer --and potentially advantageous--  to add a third step in which we implement $\cU^*$
on $\cH_{\sy'}$. This unitary, which is the complex conjugate of $\cU$, does not modify the state of $\cH_{\sy}$, which is still proportional to $\rho_1$. However, its action on Eq.~\eqref{eq:two-copythirdstep} implies
\begin{align}
\nonumber
( \one \otimes \cU^* ) & e^{-\beta W/2} ( \cU \otimes \one ) \ket{\Psi_0}  \\
& =\frac{1}{\sqrt{\cZ_0}} \sum_{m,n,n'} e^{-\beta \varepsilon_{1,n}/2} \bra{\phi_{1,n}} \cU \ket{\phi_{0,m}}\! \bra{\phi^*_{1,n'}} \cU^* \ket{\phi^*_{0,m}}  \ket{\phi_{1,n}} \ket{\phi_{1,n'}^*} \\
& = \frac{1}{\sqrt{\cZ_0}} \sum_{m,n,n'} e^{-\beta \varepsilon_{1,n}/2} \bra{\phi_{1,n}} \cU \ket{\phi_{0,m}}\! \bra{\phi_{0,m}} \cU^\dagger \ket{\phi_{1,n'}}  \ket{\phi_{1,n}} \ket{\phi_{1,n'}^*} \\
& = \frac{1}{\sqrt{\cZ_0}} \sum_{n} e^{-\beta \varepsilon_{1,n}/2}  \ket{\phi_{1,n}} \ket{\phi_{1,n}^*} \\
& = \sqrt{\frac{\cZ_1}{\cZ_0} }\ket{\Psi_1} \\
\label{eq:psi0psi1mapB}
& = e^{-\beta \, \Delta \! A/2} \ket{\Psi_1}\;,
\end{align}
where 
\begin{align}
\label{eq:Psi1}
    \ket{\Psi_1}:=\frac 1 {\sqrt{\cZ_1}}\sum_n e^{-\beta \varepsilon_{1,n}/2} \ket{\phi_{1,n}} \ket{\phi^*_{1,n}}
\end{align}
is the purification of $\rho_1$;  that is, $\rho_1=\tr_{\cH_{\sy'}}(\ketbra{\Psi_1})$.
The similitude between Eqs.~\eqref{eq:Psi0} and~\eqref{eq:Psi1} 
is clear. Preparing $\ket{\Psi_1}$ rather than Eq.~\eqref{eq:two-copythirdstep} can be useful for applying our thermal-state preparation algorithm repeatedly many times. For example, $\ket{\Psi_1}$ could serve as the initial state of the quantum algorithm to prepare the thermal state of another Hamiltonian $H_2$, and so on.     

We note that a two-copy measurement scheme (with some differences) has been previously used in \cite{solfanelli2021experimental} to sample $(\varepsilon_{0,m},\varepsilon_{1,n})$   with joint probabilities $p_{m,n}$. These samples are then used to compute quantities like $\Delta \! A$ to verify fluctuation relations such as Jarzynski equality. In contrast, here the two-copy measurement scheme is used for thermal state preparation.

\section{Quantum algorithm: Methods and complexity}
\label{sec:qalg}

\subsection{High-level description} 
\label{sec:highlevel}

We discuss the key aspects of our quantum algorithm,
which essentially follows the 
 steps discussed in Sec.~\ref{sec:two-copy}, namely the implementation of the non-equilibrium unitary $(\cU \otimes \one)$ on $\ket{\Psi_0}$, 
the action of the exponential $e^{-\beta W/2}$, and the implementation of $(\one \otimes \cU^*)$, in combination with amplitude amplification~\cite{BHMT02,KLM07}. The state $\ket{\Psi_0}$ is defined in Eq.~\eqref{eq:Psi0} and the operator $W$ is defined in Eq.~\eqref{eq:WorkOp}.
Ideally, we would like to implement the exponential operator through the action of a unitary that is of the form
\begin{align}
\label{eq:exactalgorithmunitary}
    \begin{pmatrix}  \frac 1  \alpha e^{-\beta W/2}  & . \cr . & . \end{pmatrix} \;,
\end{align}
where $\alpha >0$ is needed for normalization. 
This unitary, which is a block-encoding of $e^{-\beta W/2}$, in combination with the other operations implies [Eq.~\eqref{eq:psi0psi1mapB}]
\begin{align}
 \label{eq:exactalgorithmmap}
    \ket 0  \ket{\Psi_0} &\mapsto  \ket 0 \left(\frac 1 \alpha  (\one \otimes \cU^*) e^{-\beta W/2} (\cU \otimes \one)\ket{\Psi_0} \right)+  \ket{\chi^\perp} \\
    \label{eq:exactalgorithmmap2}
    & =  \ket 0 \left( \frac 1 \alpha e^{-\beta \, \Delta \! A/2} \ket{\Psi_1} \right)+  \ket{\chi^\perp} \;,
\end{align}
where $\ket{\chi^\perp}$ is a (subnormalized) quantum state
orthogonal to $\ket 0$ of the ancilla. The state $\ket{\Psi_1}$ is defined in Eq.~\eqref{eq:Psi1}.
Post-selecting on the state $\ket 0$ of the ancilla, this approach would produce $\ket{\Psi_1}$; that is, it would prepare $\rho_1$ after tracing out $\cH_{\sy'}$. 
Rather than measuring the ancilla, we can use amplitude amplification to produce $\ket{\Psi_1}$ with high probability and less complexity~\cite{BHMT02,KLM07}. 
The number of necessary amplitude amplification rounds is given by $\cO( (\alpha  e^{\beta \, \Delta \! A/2}))$, which is the inverse of the (absolute) amplitude of the desired state, i.e., the part of the state in Eq.~\eqref{eq:exactalgorithmmap2} where the ancilla is in $\ket 0$. 
Amplitude amplification also requires a similar number of uses of  $U_0$, the unitary needed to prepare $\ket{\Psi_0}=U_0 \ket 0$, and its inverse.

In practice, our quantum algorithm does not implement the unitary
of Eq.~\eqref{eq:exactalgorithmunitary}. It does implement, however, a different unitary that performs the map in Eq.~\eqref{eq:exactalgorithmmap} approximately when acting on $\ket 0 \ket{\Psi_0}$, but with a smaller value for the normalization constant $\alpha$.
This is to avoid undesired complexity overheads, implying
fewer rounds of amplitude amplification than if we were to act with the operation of Eq.~\eqref{eq:exactalgorithmunitary}, while still preparing a satisfactory state that approximates $\ket{\Psi_1}$ and satisfies Eq.~\eqref{eq:TSPP} after tracing out $\cH_{\sy'}$. That is, since Eq.~\eqref{eq:exactalgorithmunitary} is unitary and $\| e^{-\beta W/2}/\alpha\|\le 1$, $\alpha$ is at least $e^{-\beta w_{\min}/2}$ and $w_{\min}=-\|H_0\|-\|H_1\|$
in the worst case.
Then, $\alpha$ can be as large as $e^{\beta (\|H_0\|+\|H_1\|)/2}$, even if $H_1$ is a small perturbation of $H_0$.

To avoid this problem, we will show that it suffices to implement an approximation of $e^{-\beta W/2}$ 
on a particular subspace only, where the eigenvalues of $W$ satisfy
$w_{m,n} \ge w_l$, and $w_l \ge w_{\min}$ is the cutoff appearing in Thm.~\ref{thm:main}. 
This will imply that the normalization factor 
appearing in Eq.~\eqref{eq:exactalgorithmmap}, when using this approximation, is  $\alpha =\tilde  \cO(e^{-\beta w_l/2})$ rather than $e^{-\beta w_{\min}/2}$ (see Sec.~\ref{sec:approximations}).
Combining this result with those in Sec.~\ref{sec:applications}  that  show $w_l \gg w_{\min}$ for many instances, we obtain a significantly improved algorithm.

\subsection{Access model}
\label{sec:accessmodels}

The query complexity of the main quantum algorithm, given in Thm.~\ref{thm:main}, is best formulated using a model
that considers access to unitaries $\cU$, $\cU^*$, $U_0$, and $U_0^\dagger$.
The unitary $\cU$ describes the non-equilibrium process 
appearing in the fluctuation theorems; see Sec.~\ref{sec:fluctuationtheorems}. The unitary $U_0$ is the state
preparation operation that performs the map $\ket 0 \rightarrow U_0\ket 0 =\ket{\Psi_0}$.
 Our algorithm also uses the controlled-$U$ and controlled-$U^\dagger$ unitaries, where $U$
 is a short-time evolution under $W$ defined in Eq.~\eqref{eq:WorkOp}, and implements the map
\begin{align}
\label{eq:Uaction}
    U \ket 0 \ket \phi = \ket 0 e^{ i \lambda W} \ket \phi \;,
\end{align}
for all $\ket \phi \in \cH_{\sy} \otimes \cH_{\sy'}$. The parameter $\lambda = \delta \beta/2$ is discussed in  Thm.~\ref{thm:main} and Appendix~\ref{app:Ucomplexity}.

For simplicity, we do not consider the inner workings of these unitaries in the assumed access model, thereby avoiding implementation details 
that depend on the particular method used. The gate complexities of these unitaries follow from explicit constructions, and these also depend on the presentation of the Hamiltonians. 
For example, in Appendix~\ref{app:Ucomplexity}, we analyze the 
gate complexity of $U$ using quantum signal processing and qubitization~\cite{LC17,LC19}, for the special case where $H_0$ and $H_1$ (and $W$) are presented as linear combinations of unitaries.
Quantum signal processing and the related Taylor series method also apply to the $d$-sparse model, 
where the Hamiltonians are represented by $d$-sparse matrices~\cite{BCC+14,BCC+15}. The gate complexity of these methods is polylogarithmic in the inverse of the approximation error. This is a desired feature,
as the number of amplitude amplification rounds can be exponential in $1/\epsilon$, implying that each round
needs to be simulated with  precision that is exponentially small in $1/\epsilon$, in the worst case.

\subsection{The exponential operator}
\label{sec:exponentialoperator}

We analyze approximations of the exponential operator $e^{-\beta W/2}$ given that it is acting on $(\cU \otimes \one)\ket{\Psi_0}$.
First, we show that an $(\epsilon/2)$-relative approximation of this operator suffices to prepare $\tau_1$ from $\ket{\Psi_0}$ and satisfy Eq.~\eqref{eq:TSPP}. 
We introduce the work cutoff $w_l$ that, roughly, has the property that $(\cU \otimes \one) \ket{\Psi_0}$ has negligible support on the subspace spanned by eigenstates of $W$ of eigenvalues below $w_l$.
In other words, using either measurement scheme in Sec.~\ref{sec:fluctuationtheorems} to sample from $P(w)$, those work values below $w_l$ are not the rare events that need to be captured.
It suffices then to approximate the action of 
$e^{-\beta W/2}$ on a subspace where $w \ge w_l$ only.
We construct one such approximation using a Fourier series, with the desired relative error. This is a linear combination of unitaries, where each unitary is an integer power of $U$ in Eq.~\eqref{eq:Uaction}.

\subsubsection{Approximation errors and work cutoff}
\label{sec:approximations}

A main part of our quantum algorithm  consists of the implementation of an operator proportional to $e^{-\beta W/2}$ on $(\cU \otimes \one)\ket{\Psi_0}$.
This transformation is implemented approximately by an operator $X$, and we set sufficient conditions for the approximation errors below.

\begin{lemma}[Relative approximation]
\label{lem:relativeerror}
Let $\epsilon\ge 0$ be the approximation error and $X$ be a linear operator acting on $\cH_{\sy}\otimes \cH_{\sy'}$ that approximates $e^{-\beta W/2}$ on $(\cU \otimes \one )\ket{\Psi_0}$ and satisfies
\begin{align}
\|  (   e^{-\beta W/2} - X)(\cU \otimes \one ) \ket{\Psi_0} \| &\le  \frac{\epsilon} 2 \|e^{-\beta W/2} (\cU \otimes \one ) \ket{\Psi_0} \| \label{eq:expapprox}\;.
\end{align}
Then, the (mixed) quantum state 
\begin{align}
\label{eq:tau1}
    \tau_1:= \frac {\tr_{\cH_{\sy'}} ( X (\cU \otimes \one )\ketbra{\Psi_0} (\cU^\dagger \otimes \one )X^\dagger) }{\|X (\cU \otimes \one )\ket{\Psi_0}\|^2}
\end{align}
satisfies Eq.~\eqref{eq:TSPP}, that is,
\begin{align}
    \frac 1 2 \|\tau_1 - \rho_1\|_1 \le \epsilon \;.
\end{align}
\end{lemma}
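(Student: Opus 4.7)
The strategy is to show that the subnormalized vectors $e^{-\beta W/2}(\cU\otimes\one)\ket{\Psi_0}$ and $X(\cU\otimes\one)\ket{\Psi_0}$ are close in Euclidean norm, pass to the normalized purifications of $\rho_1$ and $\tau_1$, convert to a pure-state trace-distance bound via the standard identity, and then invoke contractivity of the partial trace.

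First, I set $\ket{\Phi} := e^{-\beta W/2}(\cU\otimes\one)\ket{\Psi_0}$ and $\ket{\Phi'} := X(\cU\otimes\one)\ket{\Psi_0}$, with norms $\gamma := \|\ket{\Phi}\|$ and $\gamma' := \|\ket{\Phi'}\|$, together with their normalizations $\ket{\psi} := \ket{\Phi}/\gamma$ and $\ket{\psi'} := \ket{\Phi'}/\gamma'$. The calculation leading to Eq.~\eqref{eq:proptorho1} shows that $\tr_{\cH_{\sy'}}(\ketbra{\psi}) = \rho_1$, and by construction $\tr_{\cH_{\sy'}}(\ketbra{\psi'}) = \tau_1$. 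The hypothesis~\eqref{eq:expapprox} reads $\|\ket{\Phi}-\ket{\Phi'}\|\le (\epsilon/2)\gamma$, so the reverse triangle inequality gives $|\gamma-\gamma'|\le (\epsilon/2)\gamma$. Writing $\ket{\psi}-\ket{\psi'}$ as $(\ket{\Phi}-\ket{\Phi'})/\gamma + \ket{\Phi'}(1/\gamma-1/\gamma')$ and applying the triangle inequality then bounds $\|\ket{\psi}-\ket{\psi'}\|\le \epsilon$.

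Second, from $\|\ket{\psi}-\ket{\psi'}\|^2\le\epsilon^2$ I get $\Re\braket{\psi}{\psi'}\ge 1-\epsilon^2/2$, which in the nontrivial regime $\epsilon<\sqrt{2}$ forces $|\braket{\psi}{\psi'}|^2\ge (1-\epsilon^2/2)^2\ge 1-\epsilon^2$. The pure-state identity $\tfrac12\|\ketbra{\psi}-\ketbra{\psi'}\|_1=\sqrt{1-|\braket{\psi}{\psi'}|^2}$ then yields $\tfrac12\|\ketbra{\psi}-\ketbra{\psi'}\|_1\le\epsilon$, and contractivity of the partial trace under the trace norm delivers the desired $\tfrac12\|\tau_1-\rho_1\|_1\le\epsilon$. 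The case $\epsilon\ge\sqrt{2}$ is trivial since any trace distance is at most $1$.

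The argument is largely mechanical. The only place that requires a moment of care is the step from $\Re\braket{\psi}{\psi'}$ to $|\braket{\psi}{\psi'}|^2$ in the pure-state trace-distance formula, which is resolved by observing that in the regime of interest the real part is already nonnegative. Conceptually, the important design choice is that the hypothesis is formulated as a \emph{relative} error on the subnormalized vector rather than an absolute one: this is exactly what allows the proof to go through uniformly in $\gamma=\|e^{-\beta W/2}(\cU\otimes\one)\ket{\Psi_0}\|$, whose value is tied to $\beta\,\Delta\!A$ through Jarzynski's equality and hence can be small.
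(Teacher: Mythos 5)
Your proof is correct and follows essentially the same route as the paper: bound the Euclidean distance between the normalized vectors $\ket{\psi}$ and $\ket{\psi'}$ by $\epsilon$ using the triangle and reverse-triangle inequalities, convert to a trace-distance bound for the pure states, and finish with contractivity of the partial trace. The only (immaterial) difference is in the conversion step: the paper bounds $\tfrac12\|\ketbra{\psi'}-\ketbra{\psi}\|_1\le\|\ket{\psi}-\ket{\psi'}\|$ directly via the operator decomposition $-\ket{\Gamma}\bra{\psi'}-\ket{\psi}\bra{\Gamma}$, whereas you go through the fidelity identity, which forces the harmless case split at $\epsilon\ge\sqrt{2}$.
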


\begin{proof}
Let
\begin{align}
   \ket{ \Phi_1} &:=\frac{e^{-\beta W/2} (\cU \otimes \one )\ket{\Psi_0}}{\| e^{-\beta W/2}(\cU \otimes \one ) \ket{\Psi_0}\|}  \;   ,\\
   \label{eq:Phi1'}
   \ket{ \Phi_1'} &:=\frac{X (\cU \otimes \one )\ket{\Psi_0}}{\| X(\cU \otimes \one ) \ket{\Psi_0}\|}  \; ,
\end{align}
be normalized states
so that, from Eqs.~\eqref{eq:proptorho1} and~\eqref{eq:tau1},
\begin{align}
\rho_1&=\tr_{\cH_{\sy'}} (\ketbra{\Phi_1}) \; ,\\
    \tau_1 &= \tr_{\cH_{\sy'}} (\ketbra{\Phi_1'})\;.
\end{align}
Let $\ket{\Gamma}:=\ket{\Phi_1}-\ket{\Phi_1'}$. The triangle inequality implies
\begin{align}
\| \ket{\Gamma}\| &=
 \left \| \frac{(e^{-\beta W/2}- X) (\cU \otimes \one )\ket{\Psi_0}}{\| e^{-\beta W/2} (\cU \otimes \one )\ket{\Psi_0}\|} + \left(\frac{\| X(\cU\otimes\one)\ket{\Psi_0}\|} {\|e^{-\beta W/2}(\cU\otimes\one)\ket{\Psi_0}\|}-1\right) \ket{\Phi_1'}\right \|\\
    &\le \frac{\epsilon}{2} + \left| \frac{\| X(\cU \otimes \one ) \ket{\Psi_0}\|}{\|e^{-\beta W/2} (\cU \otimes \one )\ket{\Psi_0}\|}- 1\right| \\
    & \le \epsilon \;.
\end{align}
We use the non-increasing property of the trace distance to obtain
\begin{align}
    \frac 1 2  \| \tau_1 - \rho_1\|_1 &=  \frac 1 2 \|\tr_{\cH_{\sy'}} \left(\ketbra{ \Phi'_1} - \ketbra{\Phi_1}\right)\|_1 \\
    & \le  \frac 1 2 \|  \ketbra{\Phi'_1} - \ketbra{\Phi_1} \|_1 \\
    & = \frac 1 2 \| - \ket \Gamma \bra{ \Phi'_1} -\ket{\Phi_1}\bra{\Gamma}\|_1\\
    & \le \| \ket \Gamma\| \\
    & \le \epsilon  \;,
\end{align}
and Eq.~\eqref{eq:TSPP} is satisfied.
\end{proof}

Note that Eq.~\eqref{eq:psi0psi1mapB} implies
\begin{align}
 \|e^{-\beta W/2} (\cU \otimes \one ) \ket{\Psi_0} \|&= \|(\one \otimes \cU^*)e^{-\beta W/2} (\cU \otimes \one ) \ket{\Psi_0} \|
 \\
 & = e^{-\beta \, \Delta \! A/2} \|\ket{\Psi_1}\| \\
 \label{eq:norm-PFratio}
 & =e^{-\beta \, \Delta \! A/2} \;.
\end{align}
This is nothing but Jarzynski equality in disguise. Then, the condition in Eq.~\eqref{eq:expapprox} can be alternatively written as
\begin{align}
\|  (   e^{-\beta W/2} - X)(\cU \otimes \one ) \ket{\Psi_0} \| &\le  \frac{\epsilon} 2 e^{-\beta \, \Delta \! A/2}   \label{eq:expapproxPF}\;.
\end{align}

For $w_l \ge w_{\min}$, we define
\begin{align}
    \Pi_{< w_l} &:= \sum_{m,n \colon w_{m,n} < w_l}
    \ketbra{\psi_{m,n}} \;,
\end{align}
which acts on $\cH_{\sy} \otimes \cH_{\sy'}$.
This is the orthogonal projector into the subspace spanned by eigenstates $\ket{\psi_{m,n}}$ of $W$ of eigenvalue $w_{m,n}$ less than $w_l$. We obtain:
\begin{lemma}[Work cutoff]
\label{lem:cutoffs}
Let $\epsilon>0$ be the approximation error and $w_l\in\mathbb R$ be a work cutoff that satisfies
\begin{align}
   \label{eq:Wlcondition}
    \| e^{-\beta W/2} \Pi_{< w_l}(\cU \otimes \one ) \ket{\Psi_0}\| &\le \frac \epsilon {6} \| e^{-\beta W/2} (\cU \otimes \one )\ket{\Psi_0}\| \; .
\end{align}
Let $X$ be a normal operator that satisfies $[X,W]=0,$ and approximates $e^{-\beta W/2}$ as
\begin{align}
\label{eq:lconstraint}
    \|(e^{-\beta W/2}-X) \ket{\psi_{m,n}}\|& \le 2 e^{-\beta w_{m,n}/2} \; , \; \forall m,n \colon w_{m,n}< w_l \; , \\
    \label{eq:hconstraint}
     \|(e^{-\beta W/2}-X)\ket{\psi_{m,n}}\|& \le \frac \epsilon {3} e^{-\beta w_{m,n}/2}\; , \; \forall m,n \colon w_{m,n} \ge w_l \; .
\end{align}
Then, Eq.~\eqref{eq:expapprox} is satisfied; that is,
\begin{align}
\|  (   e^{-\beta W/2} - X) (\cU \otimes \one)\ket{\Psi_0} \| &\le  \frac{\epsilon} 2 \|e^{-\beta W/2}  (\cU \otimes \one)\ket{\Psi_0} \| \;.
\end{align}
\end{lemma}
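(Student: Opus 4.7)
The plan is to decompose the input state using the spectral cut of $W$ at $w_l$, isolate the error contribution from each subspace using the corresponding hypothesis, and close with a Pythagorean identity enabled by $[X,W]=0$ and the normality of $X$. Set $\ket\chi:=(\cU\otimes\one)\ket{\Psi_0}$ and write $\ket\chi=\Pi_{<w_l}\ket\chi+\Pi_{\ge w_l}\ket\chi$, where $\Pi_{\ge w_l}:=\one-\Pi_{<w_l}$. Since both $e^{-\beta W/2}$ and $X$ are normal and commute with $W$, the operator $e^{-\beta W/2}-X$ is normal and commutes with the spectral projectors $\Pi_{<w_l}$ and $\Pi_{\ge w_l}$ of $W$. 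Consequently the two images $(e^{-\beta W/2}-X)\Pi_{<w_l}\ket\chi$ and $(e^{-\beta W/2}-X)\Pi_{\ge w_l}\ket\chi$ lie in orthogonal subspaces, so by Pythagoras
\begin{align*}
\|(e^{-\beta W/2}-X)\ket\chi\|^2=\|(e^{-\beta W/2}-X)\Pi_{<w_l}\ket\chi\|^2+\|(e^{-\beta W/2}-X)\Pi_{\ge w_l}\ket\chi\|^2.
\end{align*}

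To bound each piece I expand $\ket\chi=\sum_{m,n}c_{m,n}\ket{\psi_{m,n}}$ in the common eigenbasis of $W$ and $X$, which exists because $X$ is normal and $[X,W]=0$. Since $(e^{-\beta W/2}-X)$ is diagonal in this basis, the vectors $(e^{-\beta W/2}-X)\ket{\psi_{m,n}}$ are pairwise orthogonal, and Eq.~\eqref{eq:lconstraint} gives
\begin{align*}
\|(e^{-\beta W/2}-X)\Pi_{<w_l}\ket\chi\|^2=\sum_{(m,n):w_{m,n}<w_l}|c_{m,n}|^2\|(e^{-\beta W/2}-X)\ket{\psi_{m,n}}\|^2\le 4\|e^{-\beta W/2}\Pi_{<w_l}\ket\chi\|^2,
\end{align*}
which by the work-cutoff assumption Eq.~\eqref{eq:Wlcondition} is at most $(\epsilon/3)^2\|e^{-\beta W/2}\ket\chi\|^2$. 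Using Eq.~\eqref{eq:hconstraint} analogously on the complementary subspace, $\|(e^{-\beta W/2}-X)\Pi_{\ge w_l}\ket\chi\|^2\le(\epsilon/3)^2\|e^{-\beta W/2}\Pi_{\ge w_l}\ket\chi\|^2\le(\epsilon/3)^2\|e^{-\beta W/2}\ket\chi\|^2$.

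Summing the two contributions yields $\|(e^{-\beta W/2}-X)\ket\chi\|^2\le 2(\epsilon/3)^2\|e^{-\beta W/2}\ket\chi\|^2$, and since $\sqrt{2}/3<1/2$ the desired bound Eq.~\eqref{eq:expapprox} follows. The one subtle step I anticipate is the per-basis bookkeeping in the presence of degenerate eigenvalues of $W$: when a $W$-eigenspace is multi-dimensional, $X$ need not be diagonal in the $\ket{\psi_{m,n}}$ basis, so to obtain true orthogonality of the individual error vectors one must pass to the common eigenbasis of $X$ and $W$ (guaranteed by the normality of $X$ and $[X,W]=0$). In the intended application $X$ will be a Fourier series in $W$, hence diagonal in every eigenbasis of $W$, so this subtlety disappears and the per-basis norms translate directly to the sums above.
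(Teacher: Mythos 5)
Your proof is correct and follows essentially the same route as the paper's: split the state with $\Pi_{<w_l}$ and $\Pi_{\ge w_l}$, bound the low-work piece via Eq.~\eqref{eq:lconstraint} together with the cutoff condition Eq.~\eqref{eq:Wlcondition}, bound the high-work piece via Eq.~\eqref{eq:hconstraint}, and combine by Pythagoras to land at $\sqrt{2}\,\epsilon/3 \le \epsilon/2$. Your closing remark on degenerate $W$-eigenspaces is a valid point that the paper glosses over (it simply asserts that the $\ket{\psi_{m,n}}$ are eigenstates of $X$), and your resolution --- that the intended $X$ is a function of $W$ and hence diagonal in every eigenbasis of $W$ --- is the right one.
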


\begin{proof}
Following Sec.~\ref{sec:two-copy}, we write $(\cU \otimes \one)\ket{\Psi_0} = \sum_{m,n} \sqrt{p_{m,n}}e^{i\varphi_{m,n}} \ket{\psi_{m,n}}$. 
By assumption, the eigenstates of $X$ are also
$\ket{\psi_{m,n}}$.
Then, 
Eq.~\eqref{eq:lconstraint} implies 
\begin{align}
     \|(e^{-\beta W/2}-X) \Pi_{<w_l} (\cU \otimes \one)\ket{\Psi_0}\|^2 & = 
     \sum_{m,n \colon w_{m,n}<w_l} p_{m,n} \; \|(e^{-\beta w_{m,n}/2} - X)\ket {\psi_{m,n}}\|^2   \\
     & \le 4 \sum_{m,n \colon w_{m,n}<w_l} p_{m,n} e^{-\beta w_{m,n}}  \\
     & =4 \|e^{-\beta W/2} \Pi_{<w_l} (\cU \otimes \one) \ket{\Psi_0}\|^2 \\
     & \le  \frac {\epsilon^2} {9} \|e^{-\beta W/2} (\cU \otimes \one)\ket{\Psi_0} \|^2 \;,
\end{align}
where the last inequality follows from the assumption in Eq.~\eqref{eq:Wlcondition}.
Let $\Pi_{\ge w_l}:=\one -\Pi_{<w_l}$
be the orthogonal projector into the subspace spanned by eigenstates of $W$ of eigenvalue at least $w_l$. Then, 
 Eq.~\eqref{eq:hconstraint} implies
\begin{align}
    \|(e^{-\beta W/2}-X) \Pi_{\ge w_{l}} (\cU \otimes \one) \ket{\Psi_0}\|^2 & =
    \sum_{m,n \colon w_{m,n} \ge w_l} p_{m,n} \; \|(e^{-\beta W/2}-X) \ket{\psi_{m,n}}\|^2 
    \\& \le \frac {\epsilon^2} {9} \sum_{m,n \colon w_{m,n} \ge w_l} p_{m,n} e^{-\beta w_{m,n}}  \\
    & = \frac {\epsilon^2} {9} \|e^{-\beta W/2} \Pi_{\ge w_l} (\cU \otimes \one) \ket{\Psi_0}\|^2 \\
   & \le \frac {\epsilon^2} {9} \|e^{-\beta W/2} (\cU \otimes \one) \ket{\Psi_0}\|^2 \;.
\end{align}
It follows that
\begin{align}
\nonumber
\| &(e^{-\beta W/2}- X) (\cU \otimes \one)\ket{\Psi_0}\| = \\
& = \| (e^{-\beta W/2}- X) (\Pi_{<w_l} +\Pi_{\ge w_{l}} )(\cU \otimes \one)\ket{\Psi_0}\| \\
& = \left( \| (e^{-\beta W/2}-X) \Pi_{<w_l} (\cU \otimes \one) \ket{\Psi_0}\|^2 + \| (e^{-\beta W/2}- X) \Pi_{\ge w_l} (\cU \otimes \one)\ket{\Psi_0}\|^2 \right)^{1/2}\\
& \le \left(\frac 2 9 \epsilon^2 \| e^{-\beta W/2} (\cU \otimes \one)\ket{\Psi_0}\|^2 \right)^{1/2} \\
& \le \frac \epsilon 2 \| e^{-\beta W/2}(\cU \otimes \one) \ket{\Psi_0}\|\;.
\end{align}

\end{proof}

Lemma~\ref{lem:cutoffs} basically states
that the action of $X$ needs to be an $\cO(\epsilon)$-relative approximation of $e^{-\beta W/2}$ only in the subspace where the eigenvalues of $W$ are, at least, $w_l$. For the subspace where the eigenvalues are less than $w_l$, an $\cO(1)$-relative approximation suffices.

The assumption in Eq.~\eqref{eq:Wlcondition}, stated in that way for clarity, is equivalent to Eq.~\eqref{eq:Wlconditionmain}:
\begin{lemma}
Equation~\eqref{eq:Wlcondition} is equivalent to Eq.~\eqref{eq:Wlconditionmain}, that is,
\begin{align}\label{eq:WlconditionRewrite}
    \sum_{w<w_l} P(w) e^{-\beta \, (w-\Delta \! A)} \le \left( \frac {\epsilon} {6} \right)^2  \;.
\end{align}
\end{lemma}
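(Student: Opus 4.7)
The plan is to square both sides of Eq.~\eqref{eq:Wlcondition} and directly compute each norm squared using the spectral decomposition of $W$, then rearrange. The key observation is that the state $(\cU\otimes\one)\ket{\Psi_0}$ has already been expanded in the eigenbasis of $W$ in Sec.~\ref{sec:two-copy}, namely
\begin{equation}
(\cU\otimes\one)\ket{\Psi_0} = \sum_{m,n}\sqrt{p_{m,n}}\,e^{i\varphi_{m,n}}\ket{\psi_{m,n}},
\end{equation}
with $W\ket{\psi_{m,n}}=w_{m,n}\ket{\psi_{m,n}}$, and that the work distribution is $P(w)=\sum_{m,n:\,w_{m,n}=w}p_{m,n}$ by Eq.~\eqref{eq:TCdist}. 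Because $\Pi_{<w_l}$ is diagonal in this same basis, the left-hand side of Eq.~\eqref{eq:Wlcondition} squared becomes
\begin{equation}
\|e^{-\beta W/2}\Pi_{<w_l}(\cU\otimes\one)\ket{\Psi_0}\|^2 = \sum_{m,n:\,w_{m,n}<w_l} p_{m,n}\,e^{-\beta w_{m,n}} = \sum_{w<w_l} P(w)\,e^{-\beta w}.
\end{equation}

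Next I would apply Eq.~\eqref{eq:norm-PFratio} (itself Jarzynski's equality in disguise) to evaluate the right-hand side squared as
\begin{equation}
\left(\frac{\epsilon}{6}\right)^{\!2}\|e^{-\beta W/2}(\cU\otimes\one)\ket{\Psi_0}\|^2 = \left(\frac{\epsilon}{6}\right)^{\!2} e^{-\beta \Delta\! A(\beta)}.
\end{equation}
Putting these two calculations together, Eq.~\eqref{eq:Wlcondition} is equivalent to
\begin{equation}
\sum_{w<w_l} P(w)\,e^{-\beta w} \le \left(\frac{\epsilon}{6}\right)^{\!2} e^{-\beta\Delta\! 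A(\beta)},
\end{equation}
and multiplying both sides by $e^{\beta \Delta\! A(\beta)}$ recovers Eq.~\eqref{eq:WlconditionRewrite} exactly. Running the implications in reverse gives the other direction.

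There is no real obstacle here: the lemma is essentially a bookkeeping statement showing that the geometric condition on the norm in $\cH_\sy\otimes\cH_{\sy'}$ coincides with the statistical tail-probability condition on $P(w)$ once one exploits (i) the fact that $W$ and $\Pi_{<w_l}$ share the basis $\{\ket{\psi_{m,n}}\}$ in which the initial state is naturally expanded and (ii) Jarzynski's equality to normalize by $e^{-\beta\Delta\!A(\beta)}$. The only mild subtlety is keeping track of which quantities are squared versus unsquared when converting the norm inequality to the probability inequality, but this is purely notational.
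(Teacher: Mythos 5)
Your proposal is correct and follows essentially the same route as the paper: both compute $\|e^{-\beta W/2}\Pi_{<w_l}(\cU\otimes\one)\ket{\Psi_0}\|^2=\sum_{w<w_l}P(w)e^{-\beta w}$ via the eigenbasis expansion of Eq.~\eqref{eq:TCdist}, and both evaluate the right-hand side as $(\epsilon/6)^2 e^{-\beta\Delta\!A}$ using Eq.~\eqref{eq:norm-PFratio}. No gaps.
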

\begin{proof}
According to Eq.~\eqref{eq:TCdist}, the left hand side of Eq.~\eqref{eq:Wlcondition}
satisfies 
\begin{align}
 \|e^{-\beta W/2}\Pi_{<w_l}(\cU \otimes \one)\ket{\Psi_0}\|^2 &=
    \sum_{m,n: w_{m,n}<w_l} p_{m,n} e^{-\beta w_{m,n}}  \\
    \label{eq:supportandnorm}
    & = \sum_{w<w_l} P(w) e^{-\beta w} \;.
\end{align}
According to Eq.~\eqref{eq:norm-PFratio},
the right hand side of Eq.~\eqref{eq:Wlcondition}
satisfies 
\begin{align}
  \left(  \frac{\epsilon} {6} \| e^{-\beta W/2} (\cU \otimes \one )\ket{\Psi_0}\| \right)^2 = \left( \frac {\epsilon} {6} \right)^2 e^{-\beta \Delta \! A} \; .
\end{align}
\end{proof}

Equation~\eqref{eq:WlconditionRewrite} can alternatively be stated as
\begin{align}
    \label{eq:WlconditionPrev}
    \sum_{w>-w_l} P^{\rm rev}(w) \le \left(\frac{\epsilon}{6}\right)^2 \; ,
\end{align}
where $P^{\rm rev}(w)$ is the work distribution of the reverse process further discussed in Sec.~\ref{sec:generalHamiltonians}. In the reverse process, the roles of $H_0$ and $H_1$ are interchanged, and the non-equilibrium unitary is $\cU^\dagger$. For concreteness, in the two-time measurement scheme we start with the thermal state $\rho_1$, perform a projective measurement of $H_1$ to obtain the energy (eigenvalue)  $\varepsilon_{1,n}$, evolve with $\cU^\dagger$, and finally perform a projective measurement of $H_0$ to obtain $\varepsilon_{0,m}$. Work for this realization is $\varepsilon_{0,m}-\varepsilon_{1,n}$. At first, it might come as a surprise that the complexity of our quantum algorithm can also be determined from properties of this reverse process. However, Crooks fluctuation theorem~\cite{Crooks1999Entropy} relates the work distributions of the two processes, $P(w)$ and $P^{\rm rev}(w)$, allowing us to express the condition in Eq.~\eqref{eq:Wlconditionmain} in terms of either process.

We note that Ref~.\cite{lu2001accuracy} proposed a ‘‘neglected-tail’’ bias model that estimates the bias of free-energy calculations by considering explicitly the effect of poor sampling of the tails of the work distribution. This model was later refined in Ref.~\cite{halpern2016number}.
In both works a threshold value for work is introduced for analyzing the error of the algorithm which itself does not depend on the threshold. In contrast, in our case the work cutoff plays an important role in the quantum algorithm and its complexity.

\subsubsection{Fourier series approximation}
\label{sec:Fourier}

We approximate the action of the exponential operator $e^{-\beta W /2}$
using an approach based on Fourier series. This approach differs from that of Ref.~\cite{CSS21} for computing partition functions as we are addressing a more general case where the Hamiltonians are not necessarily positive semidefinite. Another difference is that here 
we  use the cutoff $w_l$ for better convergence.

Our starting point is the identity ($x \in \mathbb R)$
\begin{align}
\label{eq:expconvolution}
    e^{- x}&= (f \star g) (x) \\
    & = \int dy \; f(x-y) g(y) \;,
\end{align}
where $f(x)$ is the normal distribution
\begin{align}
\label{eq:fFourier}
    f(x):= \frac{1}{\sqrt \pi}e^{- x^2}
\end{align} 
and $g(x):= e^{-1/4} e^{- x}$.
As we are only interested in the case where $x$ is bounded, we modify $g(x)$ such that 
\begin{align}
\label{eq:gFourier}
    g(x):= \left \{ \begin{matrix} e^{-1/4} e^{- x} & x \ge -\Delta -1/2 \\ 0 & x < -\Delta -1/2  \end{matrix} \right . \;,
\end{align}
where $\Delta >0$ depends on the approximation error and is chosen below. That is, we approximate $e^{-x}$ by $h(x):=(f \star g)(x)$, which will be a good approximation in the region of interest ($x \ge 0$). A simple integration gives
\begin{align}
\label{eq:gdef}
    h(x)& = \frac{e^{-1/4}}{\sqrt \pi} \int_{-\Delta-1/2}^\infty dy \; e^{-(x-y)^2 -y} \\
    & = e^{-x}\frac{1}{\sqrt \pi} \int_{-\Delta-1/2}^\infty dy \; e^{-(y-(x-1/2))^2} \\
    & = e^{-x}\frac{1}{\sqrt \pi} \int_{-\Delta-x}^\infty dy \; e^{-y^2}  \\
    \label{eq:herf}
    & = e^{-x} (1+{\rm Erf}(\Delta+x))/2 \;,
\end{align}
where ${\rm{Erf}}(z)= \frac{2}{\sqrt{\pi}} \int^z_{0} dy \; e^{-y^2}$ is the error function.
The convolution theorem implies
\begin{align}
\label{eq:H(omega)}
    H(\omega)=\sqrt{2\pi} F(\omega)G(\omega) \;,
\end{align}
where $F(\omega)$, $G(\omega)$, and $H(\omega)$ are the (unitary) Fourier transforms of $f(x)$, $g(x)$, and $h(x)$, respectively.
 In particular,
\begin{align}
\label{eq:FFouriertransform}
  F(\omega) &= \frac {1} {\sqrt{2\pi}} e^{-\frac{\omega^2}{4 }} \;, \\
  G(\omega) & =  \frac {e^{-1/4}} {\sqrt{2\pi}}  \int_{-\Delta -1/2}^\infty  dx \; e^{-  x} e^{-i \omega x }\\
  \label{eq:GFouriertransform}  & = \;  \frac {e^{-1/4}} {\sqrt{2\pi}} \frac{e^{(1+ i \omega)(\Delta + 1/2) }}{1+ i \omega},
\end{align}
and thus $|H(\omega)|$ decays rapidly with $|\omega|$, a property that is useful for our final approximation to work. Note that $h(x)=\frac 1 {\sqrt{2\pi}} \int d \omega \; H(\omega) e^{i \omega x}$ and we can start from this expression for the final approximation; see below. In Fig.~\ref{fig:Fourierapproximation} we plot $h(x)$ and $|H(\omega)|$.

\begin{figure}\centering
\includegraphics[scale=.4]{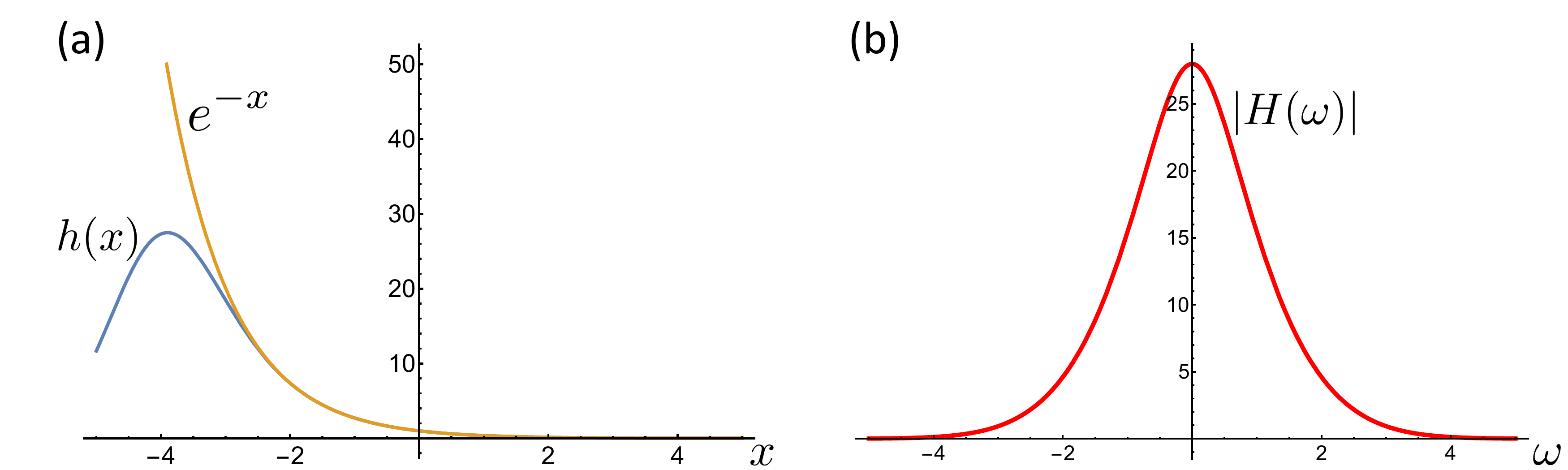}
\caption{(a) The approximation $h(x)$ of $e^{-x}$ for $\Delta=4$. (b) The absolute value of the Fourier transform of $h(x)$, $|H(\omega)|$, which is observed to decay rapidly with $|\omega|$.}
\label{fig:Fourierapproximation}
\end{figure}

As discussed in Sec.~\ref{sec:approximations}, we are interested in an approximation that uses the cutoff $w_l$. Using the previous identity and replacing $x \rightarrow \beta (W-w_l)/2$ , we propose 
\begin{align}
\label{eq:Fourierapproximationoperator}
 X(\beta,\epsilon)&:=
    \sum_{j=-J}^J \alpha_j U^j, \\
    \label{eq:Fouriercoefficients}
    \alpha_j &:= e^{-\beta w_l/2} \frac{\delta} {\sqrt{2\pi}}H(\omega_j)e^{-i \omega_j \beta w_l/2}  \;,
\end{align}   
where $\omega_j=j \delta$, and the unitary $U$ is that of Eq.~\eqref{eq:Uaction}, that is,
\begin{align}
\label{eq:Fourierunitary}
U:=   e^{i \delta \beta W /2} \;.
\end{align}
The integer $J>0$ and the values of $\delta>0$ and $\Delta>0$
determine the accuracy of the approximation. These
are chosen in the following lemma so that the error bounds in Eqs.~\eqref{eq:lconstraint} and~\eqref{eq:hconstraint} of Lemma~\ref{lem:cutoffs} are satisfied.

\begin{lemma}[Fourier series approximation]
\label{lem:lemmafourier}
Let $\epsilon>0$ and $w_{\max} \ge w_l \ge w_{\min}$.
Then, if
\begin{align}
\Delta &= \max\{4,\sqrt{\ln(6/\epsilon)}\} \;,\\
z & =   \beta(w_{\max}-w_l)+  2\Delta^2 \; ,\\
 \; 
\delta& = 2\pi/z \;,
\end{align}
and for all integer $J \ge \lceil \frac 1 3 z^{3/2} \rceil-1$, the operator $X(\beta,\epsilon)$ in Eq.~\eqref{eq:Fourierapproximationoperator} satisfies Eqs.~\eqref{eq:lconstraint} and~\eqref{eq:hconstraint}, that is,
\begin{align}
    \|(e^{-\beta W/2}- X(\beta,\epsilon)) \ket{\psi_{m,n}}\|& \le 2 e^{-\beta w_{m,n}/2} \; , \; \forall m,n \colon w_{m,n}< w_l \; , \\
     \|(e^{-\beta W/2}- X(\beta,\epsilon))\ket{\psi_{m,n}}\|& \le \frac \epsilon {3} e^{-\beta w_{m,n}/2}\; , \; \forall m,n \colon w_{m,n} \ge w_l \; .
\end{align}
In addition, the coefficients appearing in $X(\beta,\epsilon)$ satisfy
\begin{align}
\label{eq:L1fourier}
\alpha:= \sum_{j=-J}^J |\alpha_j| \le 2 e^{\Delta} e^{-\beta w_l/2} \;.
\end{align}

\end{lemma}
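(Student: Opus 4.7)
The plan is to diagonalise both $e^{-\beta W/2}$ and $X(\beta,\epsilon)$ in the eigenbasis $\{\ket{\psi_{m,n}}\}$ of $W$, thereby reducing the two operator inequalities~\eqref{eq:lconstraint} and~\eqref{eq:hconstraint} to a pair of pointwise scalar estimates, which I would then control by Poisson summation. Since $U^j\ket{\psi_{m,n}} = e^{ij\delta\beta w_{m,n}/2}\ket{\psi_{m,n}}$, direct substitution into~\eqref{eq:Fourierapproximationoperator}--\eqref{eq:Fourierunitary} gives $X(\beta,\epsilon)\ket{\psi_{m,n}} = e^{-\beta w_l/2}\tilde h(x_{m,n})\ket{\psi_{m,n}}$ with $x_{m,n}:=\beta(w_{m,n}-w_l)/2$ and $\tilde h(x):=\sum_{|j|\le J}(\delta/\sqrt{2\pi})H(j\delta)e^{ij\delta x}$, while $e^{-\beta W/2}\ket{\psi_{m,n}}=e^{-\beta w_l/2}e^{-x_{m,n}}\ket{\psi_{m,n}}$. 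Hence~\eqref{eq:lconstraint} and~\eqref{eq:hconstraint} are equivalent to the scalar bounds $|\tilde h(x)-e^{-x}|\le 2 e^{-x}$ for $x<0$ and $|\tilde h(x)-e^{-x}|\le(\epsilon/3) e^{-x}$ for $x\ge 0$, uniformly in the admissible range of $x$.

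Applying Poisson summation to the Schwartz-like function $h$ identifies the infinite Riemann sum $\sum_{j\in\mathbb Z}(\delta/\sqrt{2\pi})H(j\delta)e^{ij\delta x}$ with $\sum_{k\in\mathbb Z}h(x-kz)$, where $z=2\pi/\delta=\beta(w_{\max}-w_l)+2\Delta^2$, suggesting the decomposition
\begin{align}
\tilde h(x)-e^{-x} = \bigl[h(x)-e^{-x}\bigr] + \sum_{k\ne 0}h(x-kz) - \sum_{|j|>J}\tfrac{\delta}{\sqrt{2\pi}}H(j\delta)e^{ij\delta x}.
\end{align}
The analytic tail is in closed form: by~\eqref{eq:herf}, $h(x)-e^{-x}=-e^{-x}{\rm Erfc}(\Delta+x)/2$. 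For $x\ge 0$ the standard bound ${\rm Erfc}(u)\le e^{-u^2}$ combined with $\Delta\ge\sqrt{\ln(6/\epsilon)}$ yields $|h(x)-e^{-x}|\le(\epsilon/12)e^{-x}$; for $x<0$ the trivial ${\rm Erfc}\le 2$ gives $|h(x)-e^{-x}|\le e^{-x}$, which already consumes the factor of $2$ allowed by~\eqref{eq:lconstraint} and forces the remaining pieces to be strictly smaller than $e^{-x}$.

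For the aliasing terms with $k\ge 1$, $x-kz$ lies far in the left tail of $h$, and the explicit formula $h(y)=e^{-y}(1+{\rm Erf}(\Delta+y))/2$ combined with ${\rm Erfc}(z-x-\Delta)\le e^{-(z-x-\Delta)^2}$ yields Gaussian decay of $h(x-kz)$ in $k$; with $z\ge 2\Delta^2$ and $\Delta\ge 4$ this makes the ratio $h(x-kz)/e^{-x}$ smaller than $\epsilon/12$ for $x\ge 0$ and strictly smaller than $1$ uniformly in $x$. For $k\le -1$, $x-kz$ is large and positive, so $h(x-kz)\le e^{-(x-kz)}$ and the ratio $h(x-kz)/e^{-x}\le e^{-|k|z}$ is negligible. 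The Fourier-truncation sum is bounded via $|H(\omega)|\le e^{\Delta+1/4}e^{-\omega^2/4}/\sqrt{2\pi}$, deduced from~\eqref{eq:FFouriertransform}--\eqref{eq:GFouriertransform}, together with the standard Gaussian-tail bound $\int_{J\delta}^\infty e^{-\omega^2/4}d\omega\le(2/J\delta)e^{-(J\delta)^2/4}$. The choice $J\ge \lceil z^{3/2}/3\rceil-1$ gives $(J\delta)^2/4\ge\pi^2 z/9$ up to lower-order corrections, which dominates the factor $e^x\le e^{(z-2\Delta^2)/2}$ and delivers a contribution $\le(\epsilon/12)e^{-x}$. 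Summing the three estimates produces the scalar bounds from the first paragraph.

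The $L_1$ bound~\eqref{eq:L1fourier} is the easiest step: $\alpha = e^{-\beta w_l/2}(\delta/\sqrt{2\pi})\sum_{|j|\le J}|H(j\delta)|$, and I would bound the Riemann sum by the full Gaussian integral $\int(d\omega/\sqrt{2\pi})|H(\omega)|d\omega$ plus a boundary correction from $|H(0)|$, yielding $\alpha\le 2e^{\Delta}e^{-\beta w_l/2}$ after routine prefactor tracking. The principal obstacle is the aliasing in the regime $w_{m,n}\ll w_l$, i.e.\ $x$ very negative: the shifted arguments $x-kz$ for $k\ge 1$ are pushed deep into the left tail of $h$, where the crude bound $h(y)\le e^{-y}$ is useless because $e^{-y}$ is enormous, and one must invoke the Gaussian decay supplied by the ${\rm Erfc}$ factor. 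Precisely here the margin $\Delta\ge 4$, beyond the $\sqrt{\ln(6/\epsilon)}$ required for the analytic tail, buys the constant factor that keeps the overall error within the loose factor of $2$ afforded by~\eqref{eq:lconstraint}.
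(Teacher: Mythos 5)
Your proposal is correct and follows essentially the same route as the paper's Appendix~D proof: the same convolution $h=f\star g$, the same Poisson-summation/aliasing decomposition into analytic-tail, aliasing, and truncation errors, and the same role for the parameter conditions $\Delta\ge 4$, $2\pi/\delta\ge 2\Delta+\beta(w_{\max}-w_l)$ and $2\pi/\delta \ge 2\Delta^2$, and $\omega_{J+1}^2/4\gtrsim \Delta+x+\ln(30/\epsilon)$. The only quibble is your closing remark: the left-shifted aliasing terms are actually tightest when $x$ is at its largest positive value (which is what forces $z\ge 2\Delta+\beta(w_{\max}-w_l)$), not when $x$ is very negative, where the Gaussian decay of ${\rm Erfc}$ is strongest and the error budget is the loose factor of $2$.
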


The proof of this result is given in Appendix~\ref{app:lemmafourier}.
It consists of three approximation steps: if $\hat x:=\beta(W-w_l)/2$, and starting from $e^{-\beta W/2}=e^{-\beta w_l/2}e^{-\hat x}$, we approximate this operator by $e^{-\beta w_l/2}h(\hat x)=e^{-\beta w_l/2}\frac 1 {\sqrt{2\pi}} \int d\omega H(\omega) e^{i  \omega \hat x}$, where $H(\omega)$ is given in Eq.~\eqref{eq:H(omega)}. Then we approximate this integral by an infinite (Riemann) sum and, last, we truncate this sum to obtain $X(\beta,\epsilon)$. The three approximation steps set sufficient conditions in $\Delta$, $\delta$, and $\omega_{J+1}$ to satisfy Eqs.~\eqref{eq:lconstraint} and~\eqref{eq:hconstraint}. Lemma~\ref{lem:lemmafourier} presents some choices for these parameters but further optimizations might be possible; see Appendix~\ref{app:lemmafourier} for details.
Also, in Appendix~\ref{app:HSapproximation}, we provide
an improved result for the special case where $W\ge 0$~\cite{CS16,CSS18}.

\vspace{0.2cm}
It is important to remark that the dependence of $\Delta$ and $\delta$
in $1/\epsilon$ is only (sub)logarithmic. Also, 
if we choose $J+1=\lceil \frac 1 3 z^{3/2}\rceil$,
$\omega_{J+1}= (J+1)\delta$ is polylogarithmic in $1/\epsilon$. 
 Moreover,  the term $e^\Delta$ is subpolynomial in $1/\epsilon$ and implies $e^\Delta =\cO(1/\epsilon^{o(1)})$ in Eq.~\eqref{eq:L1fourier}.
In practice, we expect that $\Delta=4$
according to our choices for the parameters, since $\sqrt{\ln(6/\epsilon)} > 4$ implies $\epsilon < 6.75 \times 10^{-7}$.

Our quantum algorithm  implements the operator $X(\beta,\epsilon)$.
This is a linear combination of unitaries $U^j$ that
can be realized as the evolution under $W$ for times $\omega_j \beta /2$. 
To implement the linear combination we discuss two approaches, one based on linear-combination-of-unitaries (LCU) in Sec.~\ref{sec:LCU},
and the other based on quantum signal processing (QSP) in Sec.~\ref{sec:QSP}. In the rest of the paper we drop the explicit dependence of $X(\beta,\epsilon)$ on $\beta$ and $\epsilon$ for simplicity.


\subsubsection{Implementation of $X$ using linear combination of unitaries}
\label{sec:LCU}

Our quantum algorithm requires  the implementation of the operator $X=\sum_{j=-J}^J \alpha_j U^j$ in Eq.~\eqref{eq:Fourierapproximationoperator}, which acts on $\cH_{\sy}\otimes \cH_{\sy'}$. This is a linear combination of unitaries (LCU), which
can be realized using ancilla qubits as follows.
Without loss of generality, we assume $2J+2=2^{\mathfrak {m}}$, where $\mathfrak m$
is the number of ancilla qubits and $\cH_{\rm anc} \equiv \mathbb C^{2^{\mathfrak m}}$
is the ancilla space.
We let $B$ be a unitary acting on $\cH_{\rm anc}$ that
maps the $\ket 0$ state of the ancilla to $\frac 1 {\sqrt{\alpha}} \sum_{j=0}^{2J} (\alpha_{-J+j})^{1/2} \ket j$, where the $\ket j$
are some basis states for the ancilla and $\alpha=\sum_{j=-J}^J |\alpha_j|$ is given in Eq.~\eqref{eq:L1fourier}. We also let $R:=\sum_{j=0}^{2J+1} \ketbra j \otimes  U^j$ be the unitary that implements $U^j$ conditional on $\ket j$. (This is the select unitary operation used in prior work on Hamiltonian simulation~\cite{BCC+15}.) Using $B$, $R$, and $B^T$, which is the transpose of $B$ in the computational basis,
we can realize the LCU.
The following result is a generalization of Lemma~6 in Ref.~\cite{CKS17} for the case $\alpha_j\in \mathbb C$:
\begin{lemma}[LCU]
\label{lem:lemmaLCU}
Let $X$ be a degree-$J$ Laurent polynomial of $U$ given as in Eq.~\eqref{eq:Fourierapproximationoperator}. Then, there exists a quantum circuit $S_{\rm{LCU}}$ that uses $\mathfrak m=\log_2(2J+2)$ ancilla qubits, $\mathcal{O}(J)$ controlled-$U$ and $U^\dagger$ operations, and $\mathcal{O}(J)$ two-qubit gates, such that
\begin{align}
\label{eq:lemmaLCU}
     S_{{\rm LCU}} \ket 0  \! \ket {\Psi}& =\ket 0  \left( \frac X \alpha   \ket \Psi \right) + \ket{\chi^\perp}  \;,
\end{align}
for all input states $\ket \Psi \in \cH_{\sy}\otimes \cH_{\sy'}$.
The state 
$\ket{\chi^\perp}$ is a (subnormalized) state supported on the  subspace orthogonal to $\ket 0$ of the ancilla, i.e., $(\ketbra 0 \otimes \mathds{1} \otimes \mathds{1}) \ket{\chi^\perp}=0$.
\end{lemma}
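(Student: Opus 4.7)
The plan is the standard linear-combination-of-unitaries (LCU) circuit, with one twist needed to accommodate complex $\alpha_j$: the uncompute step uses the transpose $B^T$ rather than the conjugate transpose $B^\dagger$ (for real coefficients these coincide, recovering Lemma~6 of~\cite{CKS17}). The circuit $S_{\rm LCU}$ would be the composition (reading right to left) of $B\otimes \mathds{1}$, then $R$, then $B^T\otimes \mathds{1}$, followed by a system-only $(U^\dagger)^J$ to absorb the shift converting the Laurent polynomial $X$ into an ordinary polynomial, with $B$ and $R$ specified next.

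I would take $B$ to be any unitary on $\cH_{\rm anc} \equiv \mathbb{C}^{2^{\mathfrak m}}$ satisfying $B\ket{0} = \alpha^{-1/2}\sum_{j=0}^{2J}\sqrt{\alpha_{-J+j}}\,\ket{j}$ with an arbitrary fixed branch of the square root, and mapping the remaining basis state with amplitude $0$. Normalization is immediate from $\sum_j |\sqrt{\alpha_{-J+j}}|^2 = \sum_j |\alpha_{-J+j}| = \alpha$. Any such $B$ can be realized with $\mathcal{O}(2^{\mathfrak m}) = \mathcal{O}(J)$ two-qubit gates by standard state preparation; the transpose $B^T$ shares the same gate count, since reversing the order of the gates and transposing each two-qubit gate (the transpose of a unitary is a unitary) yields a circuit of identical size. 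For the select I would set $R := \sum_{j=0}^{2J+1}\ketbra{j}\otimes U^{j}$, implemented via the binary expansion $j = \sum_k j_k 2^k$ by applying controlled-$U^{2^k}$ conditioned on the $k$-th ancilla qubit, for a total of $\sum_{k=0}^{\mathfrak{m}-1} 2^k = \mathcal{O}(J)$ calls to controlled-$U$. The final $(U^\dagger)^J$ adds $J$ uses of $U^\dagger$, keeping the overall cost at $\mathcal{O}(J)$ two-qubit gates and $\mathcal{O}(J)$ controlled-$U$/$U^\dagger$ operations.

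Verifying Eq.~\eqref{eq:lemmaLCU} is a short calculation. After $B\otimes\mathds{1}$ and $R$ the joint state is $\alpha^{-1/2}\sum_j \sqrt{\alpha_{-J+j}}\,\ket{j}\,U^{j}\ket{\Psi}$. The key identity is $\bra{0}B^T\ket{j} = \bra{j}B\ket{0} = \alpha^{-1/2}\sqrt{\alpha_{-J+j}}$, since transposition in the computational basis swaps indices without conjugating. Hence the $\ket{0}$-component of the ancilla after $B^T\otimes \mathds{1}$ equals $\alpha^{-1}\sum_j \alpha_{-J+j}\, U^{j}\ket{\Psi}$, and the final $(U^\dagger)^J$ on the system turns this into $\alpha^{-1}\sum_{k=-J}^{J}\alpha_{k}\,U^{k}\ket{\Psi} = X\ket{\Psi}/\alpha$ (the $k=J+1$ term vanishes because its coefficient is zero). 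The remainder of the state lies by construction in the subspace orthogonal to $\ket{0}$ of the ancilla, giving the advertised $\ket{\chi^\perp}$.

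I do not anticipate any real obstacle. The one non-standard point, which is also the only place the proof could go wrong if copied verbatim from the real-coefficient case, is insisting on $B^T$ rather than $B^\dagger$: with $B^\dagger$ the cross-terms would collapse to $|\alpha_{-J+j}|$ and the phases $\arg(\alpha_j)$ would be lost, which would be fatal here because the Fourier coefficients in Eq.~\eqref{eq:Fouriercoefficients} are genuinely complex.
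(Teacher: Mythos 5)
Your proposal is correct and follows essentially the same route as the paper: the same circuit $(\one\otimes U^{-J})(B^{\rm T}\otimes\one\otimes\one)R(B\otimes\one\otimes\one)$, the same select operation via binary expansion, and the same key observation that the uncompute step must use $B^{\rm T}$ rather than $B^\dagger$ so that $\bra{0}B^{\rm T}\ket{j}=\bra{j}B\ket{0}$ preserves the phases of the complex coefficients $\alpha_j$.
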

\begin{proof}
The quantum circuit is $S_{{\rm LCU}}:= (\one \otimes U^{-J})(B^{\rm T} \otimes \one \otimes \one) R (B\otimes \one \otimes \one)$ and
 is given in Fig.~\ref{fig:LCU}.
 First, we note
\begin{align}
\label{eq:SBaction}
   R (B\otimes \one \otimes \one)\ket 0 \! \ket{\Psi} = \frac 1 {\sqrt \alpha} \sum_{j=0}^{2J}(\alpha_{-J+j})^{1/2} \ket j   U^j \ket \Psi \;.
\end{align}
Then, since $\bra 0 B^{\rm T} \ket j =(\alpha_{-J+j}/\alpha)^{1/2} $, we obtain
\begin{align}
\label{eq:lemmaLCU2}
  (B^{\rm T}\otimes \one \otimes \one)   R (B\otimes \one \otimes \one)\ket 0 \! \ket{\Psi} =\ket 0 \left(\frac 1 { \alpha}
  \sum_{j=0}^{2J}(\alpha_{-J+j})   U^j \ket \Psi \right) + \ket{\chi'^\perp} \;,
\end{align}
where $\ket{\chi'^\perp}$ is a (subnormalized)
state supported on the subspace orthogonal to $\ket 0$ of the ancilla. Acting with $U^{-J}=U^{-(2^{\mathfrak m -1}-1)}$
in Eq.~\eqref{eq:lemmaLCU2} implies Eq.~\eqref{eq:lemmaLCU}. Note that $\alpha_j \in \mathbb C$ and
we have some flexibility in the choice of $(\alpha_{-J+j})^{1/2}$, but any such choice works.

The unitary $S_{{\rm LCU}}$ acts on
$\cH_{\rm anc} \otimes \cH_{\sy} \otimes \cH_{\sy'}$ and implements $X/\alpha$ on any state $\ket{\Psi}$ of $\cH_{\sy} \otimes \cH_{\sy'}$. That is, $S_{{\rm LCU}}$ is a block-encoding of $X/\alpha$. Assuming $2J+2=2^{\mathfrak m}$, the quantum circuit uses $\mathfrak m$ ancilla qubits. This simplifies the implementation of $R$, which can be done by controlling $U$ on individual ancillas, as shown in Fig.~\ref{fig:LCU}. Implementing $S_{{\rm LCU}}$ then requires $\cO(J)$ uses of controlled-$U$ and   $U^\dagger$, and also requires $\cO(J)$ two-qubit gates for $B$, $B^{\rm T}$, and other controlled gates.
\end{proof}

\begin{figure}
    \centering
    \mbox{
    \Qcircuit @C=1.5em @R=0.7em {
        &&&&&&& R &&&&& \\
        \hspace{5em}&\lstick{\ket{0}} & \qw   & \multigate{3}{B} & \qw          & \ctrl{4} & \qw & \qw &&& \qw & \multigate{3}{B^{\rm{T}}} & \qw\\
        \mathfrak{m} \hspace{5em}&\lstick{\ket{0}} & \qw   & \ghost{B}        & \qw  & \qw& \ctrl{3}  & \qw   &&& \qw & \ghost{B^{\rm{T}}} & \qw\\
        \hspace{5em}&& \vdots &&&&&& \cdots &&&&& \\
        \hspace{5em}&\lstick{\ket{0}} & \qw   & \ghost{B}        & \qw & \qw & \qw & \qw  &&& \ctrl{1} & \ghost{B^{\rm{T}}} & \qw \gategroup{2}{1}{5}{1}{2em}{\{} \\
        &                   & \qw   & \qw & \qw & \multigate{3}{U} & \multigate{3}{U^2} & \qw &&& \multigate{3}{U^{2^{\mathfrak{m}-1}}}& \multigate{3}{U^{-2^{\mathfrak{m}-1}+1}} & \qw\\
        &                   & \qw   & \qw & \qw & \ghost{U} & \ghost{U^2} & \qw &&& \ghost{U^{2^{\mathfrak{m}-1}}}& \ghost{U^{-2^{\mathfrak{m}-1}+1}} & \qw\\
        &\lstick{\ket{\Psi}}& \vdots &    &     &           &           && \cdots & &&&&\\
        &                   & \qw   & \qw &\qw & \ghost{U} & \ghost{U^2} & \qw   &&& \ghost{U^{2^{\mathfrak{m}-1}}} & \ghost{U^{-2^{\mathfrak{m}-1}+1}} & \qw \gategroup{2}{6}{9}{11}{1em}{--}
    }}
    \caption{
        The quantum circuit $S_{\rm LCU}$, which is a block-encoding of $X/\alpha$, and realizes the LCU. The filled circles denote operations controlled on the state $\ket 1$ of the corresponding ancilla qubit. The number of ancilla qubits is $\mathfrak{m}= \log_2(2J+2)$. }
    \label{fig:LCU}
\end{figure}
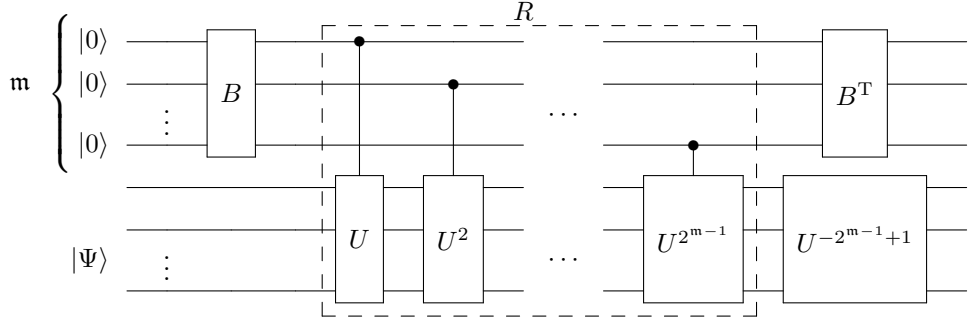


\subsubsection{Implementation of $X$ using quantum signal processing}
\label{sec:QSP}
Quantum signal processing (QSP) also allows us to implement $X= \sum^J_{j= -J} \alpha_j U^j$~\cite{LC17,LYC16}.
QSP is a sequence of rotations on an ancilla qubit interleaved
with controlled-$U$ and controlled-$U^\dagger$ operations. It allows one
to reduce the ancilla overhead to a constant, which contrasts the LCU approach of Sec.~\ref{sec:LCU}, at the expense of classically
computing the phases for the ancilla rotations.

Let $\Phi= \{\phi_0, \ldots, \phi_d\}$ be a set
of $d+1$ real phases and 
\begin{align}
\label{eq:QSP_encoding}
V_{\Phi}:= e^{i\phi_0 ( {\rm Z} \otimes \one)} \cW e^{i\phi_1 ({\rm Z} \otimes \one)} \cdots \cW e^{i\phi_d( {\rm Z} \otimes \one)}
\end{align}
be a quantum circuit shown in Fig~\ref{fig:QSPPACKcircuit}, where 
\begin{align}
    {\rm Z} \otimes \one 
    & =  \begin{pmatrix} \one & 0 \cr 0& -\one  \end{pmatrix}
\end{align}
is the unitary operator that applies the diagonal Pauli ${\rm Z}=\begin{pmatrix} 1& 0 \cr 0 & -1 \end{pmatrix}$ on the ancilla qubit,
\begin{align}
\label{eq:cWoperator}
    \cW &= \ketbra + \otimes \cV + \ketbra - \otimes \cV^\dagger \\
    & = \frac 1 2 \begin{pmatrix} \cV + \cV^\dagger & \cV - \cV^\dagger \cr \cV- \cV^\dagger & \cV + \cV^\dagger\end{pmatrix}\;,
\end{align}
and $\cV$ is any unitary. That is, the unitary $\cW$ implements $\cV$ or $\cV^\dagger$ conditioned on the state of the ancilla being $\ket +=\frac 1 {\sqrt 2}(\ket 0 + \ket 1)$ or $\ket -=\frac 1 {\sqrt 2}(\ket 0 - \ket 1)$, respectively.
The quantum circuit $V_{\Phi}$ is given in Fig~\ref{fig:QSPPACKcircuit}.
It can be shown to satisfy~\cite{GSYW18}
\begin{align}
\label{eq:VPhi}
   V_{\Phi}= \begin{pmatrix} P(\frac 1 2 (\cV+\cV^\dagger)) & \frac 1 2 (\cV-\cV^\dagger) Q(\frac 1 2 (\cV+\cV^\dagger)) \cr \frac 1 2 (\cV-\cV^\dagger) Q^*(\frac 1 2 (\cV+\cV^\dagger)) & P^*(\frac 1 2 (\cV+\cV^\dagger))\end{pmatrix}  \;,
\end{align}
where $P(y)$ and $Q(y)$ are complex polynomials of degree $\le d$ and $\le d-1$, respectively,  $P(y)=P(-y)$ is an even function and $Q(y)=-Q(-y)$ is an odd function, if $d$ is even. Our objective is to find two sets of phases, $\Phi_1$ and $\Phi_2$,
such that $V_{\Phi_1}$ and $V_{\Phi_2}$
encode parts of the operator $X$ in their blocks when we replace $\cV \rightarrow U^{\frac 1 2}$.

To show that $\Phi_1$ and $\Phi_2$
exist, we apply Thm. 5 of Ref.~\cite{GSYW18} 
in the following result.
(See also Ref.~\cite{haah2019product} and Ref.~\cite{dong2021} for an overview of QSP and how the phases can be determined).

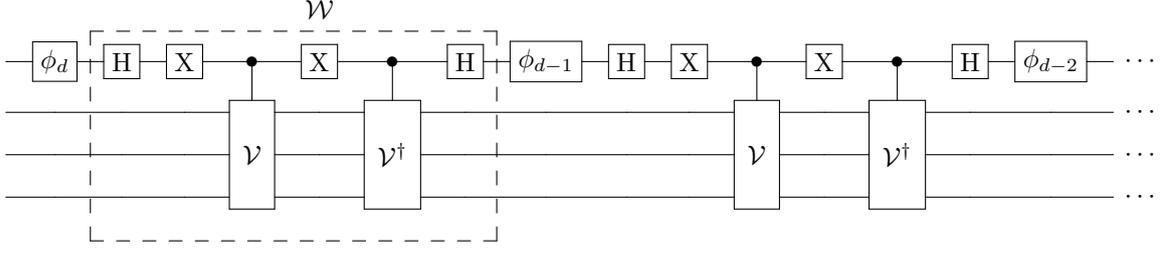
\begin{figure}[htb]
    \centering
    \mbox{
   \Qcircuit @C=1em @R=0.7em {
 && &&\push{\rule{0em}{1em}}&\mbox{$\cW$}&&&&&&&&&&&& \\
& \gate{\phi_d} & \gate{{\rm H}} & \gate{{\rm X}} & \ctrl{1}         &\gate{{\rm X}} & \ctrl{1}                   & \gate{{\rm H}} &  \gate{\phi_{d-1}} & \gate{{\rm H}} & \gate{{\rm X}} & \ctrl{1}         &\gate{{\rm X}} & \ctrl{1}                   & \gate{{\rm H}} & \gate{\phi_{d-2}} & \qw   & \cdots \\
& \qw           & \qw      & \qw      & \multigate{2}{\cV} & \qw     & \multigate{2}{\cV^{\dagger}}  & \qw      & \qw           & \qw      & \qw      & \multigate{2}{\cV} & \qw     & \multigate{2}{\cV^{\dagger}}  & \qw    &\qw            & \qw   & \cdots  \\
& \qw           & \qw      & \qw      & \ghost{\cV}       & \qw     & \ghost{\cV^{\dagger}}         & \qw      & \qw            & \qw      & \qw      & \ghost{\cV}       & \qw     & \ghost{\cV^{\dagger}}         & \qw   &\qw            & \qw   & \cdots  \\
& \qw           & \qw      & \qw      &  \ghost{\cV}       & \qw     & \ghost{\cV^{\dagger}}         & \qw      & \qw            & \qw      & \qw      & \ghost{\cV}       & \qw     & \ghost{\cV^{\dagger}}         & \qw    &\qw            & \qw   & \cdots   \gategroup{2}{3}{6}{8}{1em}{--} \\
 &&&&&&&&&&&&&&&&&
}}
    \caption{
        The quantum circuit to implement the unitary $V_\Phi=e^{i \phi_0 ( \rm{Z} \otimes \one)}\cW e^{i \phi_1 ( \rm{Z} \otimes \one)}\cdots \cW e^{i \phi_d ( \rm{Z} \otimes \one)}$. The one-qubit gates H and X are the Hadamard and Pauli X gates, respectively.}
    \label{fig:QSPPACKcircuit}
\end{figure}

\begin{lemma}[QSP]
\label{lem:lemmaQSP}
Let $X$ be a degree-$J$ Laurent polynomial of $U$ given as in Eq.~\eqref{eq:Fourierapproximationoperator}. 
Then there exists a quantum circuit $S_{\rm QSP}$ that uses $\mathfrak m=3$ ancilla qubits, $\mathcal{O}(J)$ controlled-$U$ and controlled-$U^\dagger$ operations, and $\mathcal{O}(J)$ two-qubit gates,  
such that 
\begin{align}\label{eq:lemmaQSP}
S_{\rm QSP} \ket{0}  \ket{\Psi}=\ket 0 \left( \frac{X}{2 \alpha}   \ket{\Psi} \right) + \ket{\chi^\perp}\;,
\end{align}
for all input states $\ket \Psi \in \cH_{\sy}\otimes \cH_{\sy'}$.
The state $\ket{\chi^\perp}$ is a (subnormalized) quantum state supported on the  subspace orthogonal to $\ket 0$ of the ancilla, i.e., $(\ketbra{0} \otimes \one \otimes \one )\ket{\chi^\perp}=0$.
\end{lemma}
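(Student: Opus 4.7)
The plan is to instantiate Eq.~\eqref{eq:QSP_encoding} with $\cV = U$, so that $(\cV+\cV^\dagger)/2 = \cos(\delta \beta W/2)$. By the Chebyshev identities $T_n(\cos\theta) = \cos(n\theta)$ and $\sin\theta \, U_{n-1}(\cos\theta) = \sin(n\theta)$, a polynomial $P((\cV+\cV^\dagger)/2)$ appearing in the block structure of Eq.~\eqref{eq:VPhi} translates into a symmetric Laurent polynomial in $U$ built from $(U^n+U^{-n})/2$, and $(\cV-\cV^\dagger)Q((\cV+\cV^\dagger)/2)/2$ translates into an antisymmetric one built from $(U^n - U^{-n})/2$. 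The parity constraints on $P$ and $Q$ stated with Eq.~\eqref{eq:VPhi} imply that a single $V_\Phi$ can reach Laurent monomials $U^j$ only for $j$ of one fixed parity; the strategy is therefore to use two QSP sequences, one with $d$ even and one with $d$ odd, to jointly cover all $j \in \{-J, \ldots, J\}$.

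I would split $X/\alpha = X_e/\alpha + X_o/\alpha$ into its even-$j$ and odd-$j$ parts, each bounded by $1$ in operator norm since $\alpha = \sum_j |\alpha_j|$ gives $|X(\theta)/\alpha| \le 1$ by the triangle inequality. Applying Thm.~5 of Ref.~\cite{GSYW18} to each part yields phase sequences $\Phi_e$ and $\Phi_o$ of length $\mathcal{O}(J)$ that define QSP circuits $V_{\Phi_e}$ and $V_{\Phi_o}$ block-encoding $X_e/\alpha$ and $X_o/\alpha$, respectively. The QSP existence prerequisite $|P(y)|^2 + (1-y^2)|Q(y)|^2 \le 1$ on $[-1,1]$ is met after a Fejer--Riesz-type completion of the target polynomials (cf.\ Refs.~\cite{haah2019product,dong2021}). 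Each $V_{\Phi_\bullet}$ then uses $\mathcal{O}(J)$ controlled-$U$ and controlled-$U^\dagger$ operations (through $\cW$) and $\mathcal{O}(J)$ one-qubit phase gates.

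Finally, $S_{\rm QSP}$ is assembled by applying a Hadamard on the QSP ancilla of each $V_{\Phi_\bullet}$ to merge its $P$ and $(\cV-\cV^\dagger)Q^*/2$ blocks into a single $\ket 0$-ancilla branch, and then performing a one-qubit LCU on an auxiliary ancilla that sums the even- and odd-parity contributions. The all-zeros ancilla branch then carries $X/(2\alpha)$, giving Eq.~\eqref{eq:lemmaQSP}; the circuit uses at most $\mathfrak m = 3$ ancilla qubits, $\mathcal O(J)$ controlled-$U$ and controlled-$U^\dagger$ operations, and $\mathcal O(J)$ two-qubit gates. The main technical obstacle is the Fejer--Riesz completion that guarantees existence of the phase sequences $\Phi_e$ and $\Phi_o$, a standard but nontrivial step in the QSP literature; the rest of the argument is bookkeeping of the block encodings and the LCU amplitudes.
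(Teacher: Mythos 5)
There is a genuine gap in the assembly step. Your plan asks a \emph{single} QSP sequence $V_{\Phi_e}$ (resp.\ $V_{\Phi_o}$) to block-encode all of $X_e/\alpha$, which, because the $\alpha_j$ are complex, contains both a cosine part $\sum_j 2\Re(\alpha_j)\cos(j\theta)$ and a sine part $-\sum_j 2\Im(\alpha_j)\sin(j\theta)$. In the block structure of Eq.~\eqref{eq:VPhi} these necessarily live in \emph{different} blocks: the cosine part must come from $P(\tfrac12(\cV+\cV^\dagger))$ and the sine part from $\tfrac12(\cV-\cV^\dagger)Q(\tfrac12(\cV+\cV^\dagger))$. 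So your construction requires prescribing both $P$ and $Q$ of the same circuit. That is not what Thm.~5 of Ref.~\cite{GSYW18} (or the Fej\'er--Riesz completion) provides: one prescribes a single real polynomial subject to $|p|^2\le 1$ (or $(1-y^2)|q|^2\le1$), and the \emph{complementary} polynomial is then produced by the completion --- it cannot be independently chosen, because an achievable pair must satisfy the exact unitarity identity $|P(y)|^2+(1-y^2)|Q(y)|^2=1$ on $[-1,1]$, which a generic prescribed pair will violate. Even setting that aside, the ``Hadamard merge'' does not produce the right object: conjugating $V_\Phi$ by ${\rm H}$ on the QSP ancilla puts $\Re(P)+ \tfrac12(\cV-\cV^\dagger)\Re(Q) = \Re(P)(\cos\theta)+i\sin\theta\,\Re(Q)(\cos\theta)$ in the $\ket0$ branch, i.e.\ the sine contribution acquires a factor of $i$, whereas the target $X_e(\theta)/\alpha$ is real-valued. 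Extracting the real sine part requires taking the imaginary part of the off-diagonal block, which needs an extra phase/conjugation (this is exactly what the combination $V_{\Phi_2}e^{-i\frac{\pi}{2}{\rm X}}+e^{i\frac{\pi}{2}{\rm X}}V_{\Phi_2}^\dagger$ accomplishes in the paper), not a plain Hadamard.

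The paper's proof avoids both problems by splitting $X$ according to the real and imaginary parts of the coefficients, $X=X_1+X_2$ with $X_1$ the reciprocal (cosine) part and $X_2$ the anti-reciprocal (sine) part, and using \emph{two} QSP circuits, each of which prescribes only one real polynomial: $p_1$ realized in the diagonal block of $V_{\Phi_1}$ and $q_2$ in the off-diagonal block of $V_{\Phi_2}$, so Thm.~5 of Ref.~\cite{GSYW18} applies directly with an inequality constraint. The four unitaries $V_{\Phi_1},V_{\Phi_1}^\dagger,V_{\Phi_2}e^{-i\frac{\pi}{2}{\rm X}},e^{i\frac{\pi}{2}{\rm X}}V_{\Phi_2}^\dagger$ are then combined by a weight-one LCU on two extra ancillas, giving $X/(2\alpha)$ and the count $\mathfrak m=3$. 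Your even/odd-$j$ split is also unnecessary in that route: the paper sets $\cV=U^{1/2}$ so that $X=\sum_j\alpha_j\cV^{2j}$ has definite (even) parity from the outset, and a compilation trick merges adjacent controlled-$\cV$, controlled-$\cV^\dagger$ pairs so that only controlled-$U^{\pm1}$ are ever invoked. Your approach could be repaired --- per parity class you would still need two QSP circuits (one for the cosine part, one for the sine part) and a four-term LCU with the correct phases --- but as written the single-circuit-per-parity construction does not go through.
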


\begin{proof}
We begin our proof by decomposing the polynomial $X$ of Eq.~\eqref{eq:Fourierapproximationoperator} as a sum of other polynomials, such that Thm. 5 of Ref.~\cite{GSYW18} can be used.
Each polynomial will then be realized by unitaries $V_{\Phi_1}$ and $V_{\Phi_2}$, respectively, and $X/(2\alpha)$
can be implemented as a LCU involving these unitaries. The method of Sec.~\ref{sec:LCU}
can be used for the last step.

As a function of $U$, the operator $X$
does not have definite parity. To address this issue, we define the unitary $\cV:=U^{\frac 1 2}$, so that
\begin{align}
    X= \sum_{j=-J}^J \alpha_j \cV^{2j}\;.
\end{align}
The transformation $\cV \rightarrow -\cV$
leaves $X$ invariant, which now is an even  function of $\cV$. Since $\alpha_j=(\alpha_{-j})^*$ for all $j$, we can write
$X=X_1 + X_2$, where
\begin{align}
\label{eq:X1def}
    X_1&:= \alpha_0 \one+ 2\sum_{j=1}^J \Re(\alpha_j)\frac{\cV^{2j}+\cV^{-2j}} 2 \\
    \label{eq:X2def}
    X_2&:=- 2\sum_{j=1}^J \Im(\alpha_j)\frac{\cV^{2j}-\cV^{-2j}}{2i} \; .
\end{align}
These operators are also even functions of $\cV$.
We define the polynomials
\begin{align}
\label{eq:pQSP-main}
    p_1(y)&:=\frac 1 \alpha \left(\alpha_0 +2 \sum_{j=1}^J \Re(\alpha_j) \cT_{2j}(y) \right)\;, \\
\label{eq:qQSP-main}
    q_2(y)&:=-\frac 2 \alpha \sum_{j=1}^J \Im(\alpha_j) \cR_{2j-1}(y)\;,
\end{align}
where $\mathcal{T}_{j}(y)$ and $\mathcal{R}_{j}(y)$ are the $j$-th Chebyshev polynomials of the first and second kind, respectively, and $y \in [-1,1]$. In particular, if $y=\cos(\theta/2)$, where $\theta$ is an eigenphase of $U$, we obtain
\begin{align}
    p_1(\cos(\theta/2))&= \frac 1 \alpha \left( \alpha_0 + 2 \sum_{j=1}^J \Re(\alpha_j) \cos(j \theta)\right)\\
    \label{eq:qQSP-maincondition}
   i\sin(\theta/2) q_2(\cos(\theta/2))&=- i\frac  2 \alpha \sum_{j=1}^J \Im(\alpha_j)\sin(j \theta) \;,
\end{align}
where we used the properties $\cT_k(\cos \gamma)=\cos(k\gamma)$ and $\sin(\gamma)\cR_k(\cos(\gamma))=\sin((k+1)\gamma)$. Replacing $\cV \rightarrow e^{i\theta/2}$
and $\cV^\dagger \rightarrow e^{-i\theta/2}$
in Eqs.~\eqref{eq:X1def} and~\eqref{eq:X2def}
produces $\alpha p_1(\cos(\theta/2))$ and
$\alpha \sin(\theta/2) q_2(\cos(\theta/2))$, respectively. This implies
\begin{align}
  p_1\left(\frac 1 2 (\cV+\cV^\dagger)\right) & =\frac  {X_1} \alpha  \; , \\
 \frac 1 2 (\cV-\cV^\dagger)  q_2\left(\frac 1 2 (\cV+\cV^\dagger)\right) & =i\frac{  X_2 } \alpha \;.
\end{align}
Following Eq.~\eqref{eq:VPhi}, the
operators $X_1/\alpha$ and $i X_2/\alpha$
can appear in the first or the second block of unitaries $V_{\Phi_1}$ and $V_{\Phi_2}$, respectively.

We now show that the polynomials $p_1(y)$ and $q_2(y)$ satisfy the conditions in Thm. 5 of Ref.~\cite{GSYW18}.
First, we note that $p_1(y)$ and $q_2(y)$ have real coefficients, i.e., $p_1, q_2 \in \mathbb{R}[y]$. Second, $p_1(y)$
is an even function of degree $\le 2J$, and $q_2(y)$ is an odd function of degree $\le 2J-1$. Third,  $|p_1(y)|^2 < 1$, and $(1-y^2)|q_2(y)|^2 < 1$ for $y \in[-1,1]$, which result from Eq.~\eqref{eq:pQSP-main} and Eq.~\eqref{eq:qQSP-maincondition}, respectively.  
It follows that there exists an odd polynomial $q_1 \in \mathbb{R}[y]$ of degree $\le 2J-1$, and an even polynomial $p_2 \in \mathbb{R}[y]$ of  degree $\le 2J$, such that $(p_1(y))^2 + (1-y^2)(q_1(y))^2 \le 1$ and $(p_2(y))^2 + (1-y^2)(q_2(y))^2 \le 1$, for all $y \in [-1,1]$. 
There are many ways of choosing $q_1(y)$ and $p_2(y)$ (e.g., $q_1(y) =0$ and $p_2(y)=0$), 
however any such choice suffices.  Then, Thm. 5 of Ref.~\cite{GSYW18} implies the existence of two sets $\Phi_1$ and $\Phi_2$, of $2J+1$ phases each, such that
\begin{align}
    \Re(V_{\Phi_1})& = \begin{pmatrix} \frac {X_1}\alpha & . \cr . & . \end{pmatrix} \;, \\
    \Im(V_{\Phi_2}) &= \begin{pmatrix} . & \frac {X_2}\alpha \cr . & . \end{pmatrix} \;.
\end{align}

We can combine the unitaries so that
\begin{align}
\label{eq:LCUQSP-main}
  \frac 1 4 (  V_{\Phi_1} + V^\dagger_{\Phi_1}+  V_{\Phi_2} e^{-i \frac \pi 2 X} + e^{i \frac \pi 2 X}V^\dagger_{\Phi_2})=
  \begin{pmatrix} \frac X {2\alpha} &. \cr . & . \end{pmatrix} \;,
\end{align}
and the operator $X/(2\alpha)$ appears in the first block
of a linear combination of four unitaries.
The weight of the linear combination is 1 and the LCU can 
be implemented following the method in Sec.~\ref{sec:LCU}, where $B$ is now composed of two Hadamard gates. This requires two additional ancilla qubits and implements a unitary $S_{\rm QSP}$, shown in Fig.~\ref{fig:QSP-LCU},
that acts on $\cH_{\rm anc}\otimes \cH_{\sy}\otimes \cH_{\sy'}$ and is a block-encoding of $X/(2\alpha)$; that is
\begin{align}
   S_{\rm QSP} \ket{0}  \ket{\Psi}= \ket 0 \left( \frac{X}{2 \alpha}   \ket{\Psi} \right) + \ket{\chi^\perp} \;,
\end{align}
where $\ket{\chi^\perp}$ is supported on the subspace orthogonal to $\ket 0$ of the ancilla. The total number of ancilla qubits is $1+2=3$, $\cH_{\rm anc}=\mathbb C^{2^3}$, and the number of rotations on the ancilla qubit for $V_{\Phi_1}$, $V^\dagger_{\Phi_1}$, $V_{\Phi_2}$, or $V^\dagger_{\Phi_2}$, is $2J+1$, bringing the gate complexity to $\cO(J)$.

\begin{figure}
    \centering
    \mbox{
    \Qcircuit @C=1.5em @R=0.7em {
        &\lstick{\ket{0}}   & \gate{\rm H}  & \ctrl{1} & \ctrlo{1} & \ctrl{1} & \ctrl{1} & \ctrlo{1} & \ctrlo{1} & \gate{\rm H} & \qw \\
       &\lstick{\ket{0}}   & \gate{\rm H} & \ctrl{2} & \ctrl{2}  & \ctrlo{2}   & \ctrlo{2}   & \ctrlo{2} & \ctrlo{2} & \gate{\rm H} & \qw \\
      && &&&&&& && \\
        & \lstick{\ket{0}} & \qw & \multigate{3}{V_{\Phi_1}} & \multigate{3}{V^{\dagger}_{\Phi_1}} & \multigate{3}{e^{-i\frac{\pi}{2} X}} & \multigate{3}{V_{\Phi_2}} & \multigate{3}{V^{\dagger}_{\Phi_2}} & \multigate{3}{e^{i\frac{\pi}{2} X}} & \qw & \qw\\
        & & \qw & \ghost{V_{\Phi_1}} & \ghost{V^{\dagger}_{\Phi_1}} & \ghost{e^{-i\frac{\pi}{2} X}} & \ghost{V_{\Phi_2}} & \ghost{V^{\dagger}_{\Phi_2}} & \ghost{e^{i\frac{\pi}{2} X}} & \qw & \qw\\
        & \lstick{\ket{\Psi}} & \qw & \ghost{V_{\Phi_1}} & \ghost{V^{\dagger}_{\Phi_1}} & \ghost{e^{-i\frac{\pi}{2} X}} & \ghost{V_{\Phi_2}} & \ghost{V^{\dagger}_{\Phi_2}} & \ghost{e^{i\frac{\pi}{2} X}} & \qw & \qw \\ 
        & & \qw & \ghost{V_{\Phi_1}} & \ghost{V^{\dagger}_{\Phi_1}} & \ghost{e^{-i\frac{\pi}{2} X}} & \ghost{V_{\Phi_2}} & \ghost{V^{\dagger}_{\Phi_2}} & \ghost{e^{i\frac{\pi}{2} X}} & \qw & \qw \gategroup{5}{2}{7}{2}{0.7em}{\{}
    }}
    \caption{
        The quantum circuit $S_{\rm QSP}$ that is a block-encoding of $X/(2 \alpha)$ and implements the linear combination in Eq.~\eqref{eq:LCUQSP-main}. The filled and open circles denote operations controlled on the states $\ket 1$ and $\ket 0$ of a corresponding ancilla qubit, respectively.
        Each unitary $V_{\Phi_1}$, $V^\dagger_{\Phi_1}$, $V_{\Phi_2}$, and $V^\dagger_{\Phi_2}$, uses one ancilla qubit from QSP. The total number of ancilla qubits is $\mathfrak m=3$.}
    \label{fig:QSP-LCU}
\end{figure}
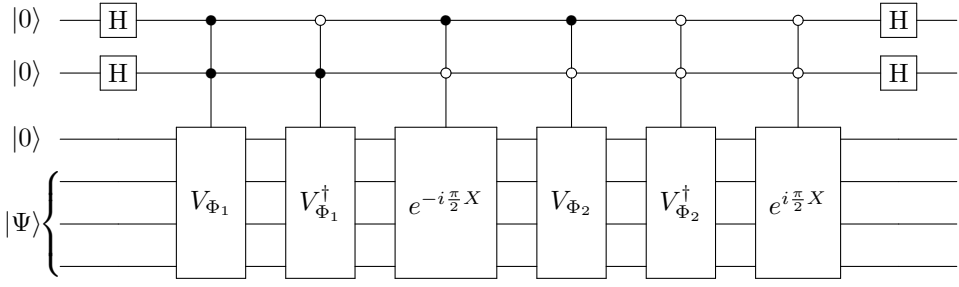

Last, we note that  $V_{\Phi_1}$, $V_{\Phi_2}$, or their inverses, make $2J$ uses of $\cW$
given in Eq.~\eqref{eq:cWoperator}. This requires $2J$ uses of $\cV$ and $2J$ uses of $\cV^\dagger$. However, a simple compilation of the circuit, discussed in Appendix~\ref{app:QSPImplementation}, allows us to implement $V_{\Phi_1}$, $V_{\Phi_2}$, or their inverses, using $J$ controlled-$U$ and $J$ controlled-$U^\dagger$ operations; see Fig.~\ref{fig:VPhi}.

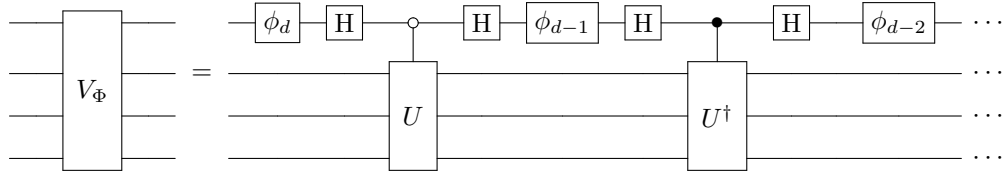
\begin{figure}[htb]
    \centering
    \mbox{
 \Qcircuit @C=1em @R=0.7em {
&&\qw & \multigate{3}{V_{\Phi}}&\qw& \qw &&& \gate{\phi_d} & \gate{{\rm H}}  & \ctrlo{1}                                             & \gate{{\rm H}} &  \gate{\phi_{d-1}} & \gate{{\rm H}}                      & \ctrl{1}                         & \gate{{\rm H}} & \qw   &\gate{\phi_{d-2}} & \qw   & \cdots  \\
 &&\qw & \ghost{V_{\Phi}}&\qw& \qw &\mbox{=} && \qw           & \qw       & \multigate{2}{U}   & \qw & \qw            & \qw       & \multigate{2}{U^{\dagger}}  & \qw      & \qw  &\qw            & \qw   & \cdots  \\
&&\qw & \ghost{V_{\Phi}}&\qw& \qw &  && \qw           & \qw       & \ghost{U}                 & \qw       & \qw           & \qw            & \ghost{U^{\dagger}}          &  \qw     & \qw  &\qw            & \qw   & \cdots  \\
&&\qw & \ghost{V_{\Phi}}&\qw& \qw &&& \qw           & \qw       & \ghost{U}                & \qw       &\qw            & \qw            & \ghost{U^{\dagger}}          & \qw     & \qw  &\qw            & \qw   & \cdots 
}
}
 \caption{
        Quantum circuit to implement the unitary $V_{\Phi}$ of Fig.~\ref{fig:QSPPACKcircuit} using controlled-$U$ and controlled-$U^\dagger$ operations, for $\cV=U^{\frac 1 2}$. The filled and open circles denote operations controlled on the states $\ket 1$ and $\ket 0$ of the ancilla qubit, respectively. The number of single qubit rotations is $d=2J+1$. The one-qubit gate H is the Hadamard gate.}
    \label{fig:VPhi}
\end{figure}

\end{proof}

In Appendix~\ref{app:QSPImplementation}
we obtain the sets of phases $\Phi_1$ and $\Phi_2$
using the MATLAB package QSPPACK (available at https://github.com/qsppack/QSPPACK).

\subsection{Main quantum algorithm}
\label{sec:algorithm}

To solve the TSPP, our quantum algorithm prepares an approximation of $\ket{\Psi_1}$, the purification of $\rho_1$ given in Eq.~\eqref{eq:Psi1}, from $\ket{\Psi_0}$,
the purification of $\rho_0$ given in Eq.~\eqref{eq:Psi0}.
Let $X$ be the operator of Sec.~\ref{sec:Fourier} that approximates $e^{-\beta W/2}$ and $\cU$ any unitary that acts on $\cH_{\sy}\equiv \cH_{\sy'}$.
\begin{definition}[Quantum state prepared by the quantum algorithm]
\label{def:preparedstate}
The (normalized) quantum state that is prepared by the quantum algorithm with high probability is
\begin{align}
    \ket{\Psi_1'}:= \frac{(\one \otimes \cU^*)X(\cU \otimes \one) \ket{\Psi_0}} {\| (\one \otimes \cU^*)X(\cU \otimes \one) \ket{\Psi_0}\|} \;.
\end{align}
\end{definition}

Let $S$ be a unitary that is a block encoding of 
$X/\alpha$, which can be implemented using the LCU approach of Sec.~\ref{sec:LCU} ($S_{\rm LCU}$) or the QSP approach of Sec.~\ref{sec:QSP} ($S_{\rm QSP}$)\footnote{The unitary $S_{\rm QSP}$ of Sec.~\ref{sec:QSP} is a block-encoding of $X/(2\alpha)$. Here we disregard the factor $1/2$ for simplicity. Nevertheless, this factor needs to be considered in the complexity if we implement $X$ using QSP.}. This unitary acts on the  space $\cH =\cH_{{\text anc}} \otimes \cH_{\sy} \otimes \cH_{\sy'}$. We also let $S':= S (\one \otimes \cU \otimes \one)$ be another unitary that acts on $\cH$, and 
define the projectors $P':= S' (\ketbra{0} \otimes \ketbra{\Psi_{0}})S'^\dagger$ and $P:= \ketbra{0} \otimes \one \otimes \one$.

\begin{algorithm}[H]
\caption{Quantum algorithm for the TSPP}
\label{algo:main} 
\vspace{1mm}
{\bf{Input:}} Given are an inverse temperature $\beta \ge 0$, an approximation error $\epsilon>0$, an
arbitrary unitary $\cU$ that acts on $\cH_{\sy}\equiv \cH_{\sy'}$. \\
\\
{\bf{Steps:}}\\
1. Obtain a work cutoff $w_l$ that satisfies Eq.~\eqref{eq:Wlconditionmain} and compute $X$ according to Eq.~\eqref{eq:Fourierapproximationoperator}  and Lemma~\ref{lem:lemmafourier}. Construct the unitaries $S'$ and $S'^\dagger$.\\
2. Prepare the initial state $\ket 0 \ket{\Psi_0}$ and perform amplitude amplification with reflections $R':= \mathds{1} - 2P'$ and $R:= \mathds{1} - 2P$. \\
3. Apply the unitary $\one \otimes \one \otimes \cU^*$. \\
\\
{\bf{Output:}} A pure state $\ket{\Psi_1'}$ such that $\tau_1=\tr_{\cH_{\sy'}}(\ketbra{\Psi_1'})$ satisfies $\frac 1 2 \| \tau_1- \rho_1 \|_1 \le \epsilon$.
\vspace{1mm}
\end{algorithm}

To show that this quantum algorithm prepares $\ket{\Psi_1'}$ of Def.~\ref{def:preparedstate}, we first note
 \begin{align}
 \label{eq:S'action}
     S' \ket 0 \ket{\Psi_0} &=\ket 0  \frac 1 \alpha  X (\cU \otimes \one) \ket{\Psi_0}+ \ket{\chi^\perp} \;,
 \end{align}
where $\ket{\chi^\perp}$ is a subnormalized state supported in the subspace orthogonal to $\ket 0$ of the ancilla. Amplitude amplification using the reflections $R$ and $R'$ prepares
\begin{align}
    \frac{X (\cU \otimes \one) \ket{\Psi_0}} {\|X (\cU \otimes \one) \ket{\Psi_0}\|}  
\end{align}
with high probability.
This coincides with $\ket{\Phi_1'}$ in Eq.~\eqref{eq:Phi1'}.  The action of  $( \one \otimes \cU^*)$ on $\ket{\Phi_1'}$ transforms it to $\ket{\Psi_1'}$. The error bound follows from Lemma~\ref{lem:relativeerror}, where the hypothesis
is satisfied due to Lemma~\ref{lem:lemmafourier} and Lemma~\ref{lem:cutoffs}.

\subsection{Complexity: Proof of Thm.~\ref{thm:main}}
\label{sec:algorithmcomplexity}

The query complexity of the quantum algorithm is determined by the number of amplitude amplification rounds and the number of uses of the unitaries in each round.
From Eq.~\eqref{eq:S'action}, the average number of amplitude amplification rounds is~\cite{BHMT02,KLM07}
\begin{align}
\label{eq:aarounds}
    Q= \cO \left( \frac \alpha {\|X(\cU \otimes \one) \ket{\Psi_0}\|}\right) \;.
\end{align}
Replacing for $\Delta$ in Eq.~\eqref{eq:L1fourier}
gives $\alpha =\cO(e^{\sqrt{\ln(1/\epsilon)}} e^{-\beta w_l/2})$. The triangle inequality and Eq.~\eqref{eq:expapprox}
imply
\begin{align}
\| e^{-\beta W/2} (\cU \otimes \one) \ket{\Psi_0}\| - 
\| X(\cU \otimes \one) \ket{\Psi_0} \|& \le
\| (e^{-\beta W/2}-X) (\cU \otimes \one) \ket{\Psi_0}\| \\
& \le \frac \epsilon 2 \| e^{-\beta W/2} (\cU \otimes \one) \ket{\Psi_0}\| \\
&\le \frac 1 2 \| e^{-\beta W/2} (\cU \otimes \one) \ket{\Psi_0}\|
\;.
\end{align}
Equivalently,
\begin{align}
   \| X(\cU \otimes \one) \ket{\Psi_0} \| &\ge  \frac{1}{2} \| e^{-\beta W/2} (\cU \otimes \one) \ket{\Psi_0}\| \\
   & =  \frac{1}{2}  e^{-\beta \, \Delta \! A/2} \;,
\end{align}
where the last equality follows from Eq.~\eqref{eq:norm-PFratio}.
Then, Eq.~\eqref{eq:aarounds} is
\begin{align}
    Q &= \cO \left( e^{\sqrt{\ln(1/\epsilon)}} e^{\beta \, (\Delta \! A- w_l)/2}  \right) \;, \\
\end{align}
which proves the statement in Thm.~\ref{thm:main}.

Each amplitude amplification round applies  the reflections $R$ and $R'$ once, requiring one use
of $U_0$, the unitary that prepares $\ket{\Psi_0}=U_0 \ket0$, and one use of $U_0^\dagger$, $\cU$, and $\cU^\dagger$. 
It also makes $\cO(J)$ uses of the controlled-$U$ and controlled-$U^\dagger$, where
\begin{align}
    J = \cO(    (\ln(1/\epsilon) + \beta(w_{\max}-w_l))^{3/2}) 
\end{align}
according to Lemma~\ref{lem:lemmafourier}.
If $X$ is implemented using the LCU approach of Sec.~\ref{sec:LCU},
then each amplitude amplification round uses $B$, $B^\dagger$, $B^{\rm T}$, and $B^*$ once. The number of two-qubit gates for each of these is $\cO(J)$.
If $X$ is implemented using the QSP approach of Sec.~\ref{sec:LCU},
then each amplitude amplification round also requires $\cO(J)$ two-qubit gates to implement the corresponding sequences of single-qubit rotations required by QSP.
The  additional  gate complexity per round using the LCU or QSP approaches is then 
$\cO((\ln(1/\epsilon) + \beta(w_{\max}-w_l))^{3/2})$.
 These results prove the remaining complexity statements in Thm.~\ref{thm:main}. To prepare $\ket{\Psi_1'}$, the quantum algorithm uses $\cU^*$ once.

Last,   Lemma~\ref{lem:lemmafourier} implies
\begin{align}
\delta &\le \frac {2\pi} {32 + \beta(w_{\max}-w_l)} \\
& \le \frac \pi {16} \;.
\end{align}
\qed

The quantum algorithm requires obtaining a work cutoff $w_l$ in Step 1
that satisfies Eq.~\eqref{eq:Wlconditionmain} or, equivalently, Eq.~\eqref{eq:Wlcondition}. 
This incurs in an additional complexity
that was disregarded in our analysis. 
Next we show how to efficiently determine possible values of $w_l$ for several classes of Hamiltonians and, if these values are used, the dominant complexity of the quantum algorithm is expected to be that of Step 2. 
The values of $w_l$ discussed below might be improved. However, finding the largest $w_l$
that satisfies Eq.~\eqref{eq:Wlconditionmain}
is expected to be computationally intensive in general.


\section{Applications: $w_l$ for various Hamiltonians}
\label{sec:applications}

In this section we provide suitable choices
for the work cutoff $w_l$ in various examples, including 
general Hamiltonians, commuting Hamiltonians, and local spin Hamiltonians.

\subsection{General Hamiltonians: Proof of Thm.~\ref{thm:w_lgeneral}}
\label{sec:generalHamiltonians}

The proof follows from Crooks equality~\cite{Crooks1999Entropy}, which we provide below, and we consider first the case where the non-equilibrium unitary is simply $\cU=\one$.
Let $\rho_0$ and $\rho_1$ be the thermal states at inverse temperature $\beta \ge 0$ associated with $H_0$ and $H_1$, respectively, and $P(w)$ the work distribution that follows from the two-time or two-copy measurement schemes of Sec.~\ref{sec:fluctuationtheorems}. If $w_l \le 0$, we obtain
\begin{align}
    \sum_{w < w_l} P(w) e^{-\beta w}
    & \le \sum_{w < w_l}  \frac{w^2}{(w_l)^2} P(w) e^{-\beta w} \\
    \label{eq:MarkovIneq}
    & \le \frac{1}{(w_l)^2} \sum_w P(w) w^2 e^{-\beta w} \;.
\end{align}

Crooks equality relates $P(w)$ to a reverse work distribution $P^{\rm rev}(w)$ that follows from running the two-time measurement scheme in Sec.~\ref{sec:two-time}
in reverse. That is, starting from $\rho_1$, we first perform a measurement of $H_1$ and then a measurement of $H_0$. If the eigenvalues are distinct, the first measurement outputs $\varepsilon_{1,n}$ with (Boltzmann) probability $P_1(\varepsilon_{1,n})=e^{-\beta \varepsilon_{1,n}}/\cZ_1$. The second measurement outputs $\varepsilon_{0,m}$ with probability $P^{\rm rev}(\varepsilon_{0,m}|\varepsilon_{1,n})=|\! \bra{\phi_{0,m}}\phi_{1,n}\rangle|^2$, if the outcome of the first measurement is $\varepsilon_{1,n}$. Note that 
$P^{\rm rev}(\varepsilon_{0,m}|\varepsilon_{1,n})=P(\varepsilon_{1,n}|\varepsilon_{0,m})$, where the latter is given in Sec.~\ref{sec:two-time}. The amount of work is $w=\varepsilon_{0,m}-\varepsilon_{1,n}$ and is sampled according to 
\begin{align}
    P^{\rm rev}(w) &= \sum_{m,n:\varepsilon_{0,m}-\varepsilon_{1,n}=w} P^{\rm rev}(\varepsilon_{0,m}|\varepsilon_{1,n}) P_1(\varepsilon_{1,n}) \;.
\end{align}
For any function $f(w)$, a slight generalization of
Crooks equality states~\cite{Crooks1999Entropy,Pohorille2010}
\begin{align}
    \frac{\sum_w P(w)e^{-\beta w/2} f(w)}{\sum_w P^{\rm rev}(w) e^{-\beta w/2} f(-w)} &=  e^{-\beta \, \Delta \! A}  \;,
\end{align}
and, in particular, choosing $f(w) = e^{-\beta w/2} w^2$,
\begin{align}
\label{eq:Crooksw^2}
    \sum_w P(w) w^2 e^{-\beta w} = e^{-\beta \, \Delta \! A} \sum_w P^{\rm rev}(w) w^2 \;.
\end{align}
It is possible to bound the right hand side as follows.
The term $\sum_w P^{\rm rev}(w) w^2$ is the expected value of $w^2$ in the reverse process. This term can also be written as
\begin{align}
    \sum_{m,n} P^{\rm rev}(\varepsilon_{0,m}|\varepsilon_{1,n}) P_1(\varepsilon_{1,n}) (\varepsilon_{0,m}-\varepsilon_{1,n})^2 &=
   \sum_{m,n} P_1(\varepsilon_{1,n}) |\!\bra{\phi_{0,m}}\phi_{1,n}\rangle|^2 (\varepsilon_{0,m}-\varepsilon_{1,n})^2 \\
    & =   \sum_{m,n} P_1(\varepsilon_{1,n}) |\!\bra{\phi_{0,m}}(H_0-H_1)\ket{\phi_{1,n}}|^2 \\
    & = \sum_n P_1(\varepsilon_{1,n}) \bra{\phi_{1,n}} V^2 \ket{\phi_{1,n}} \;,
\end{align}
which is the expected value of $V^2$ in $\rho_1$, that is, $\tr(V^2 \rho_1)$.
Then,
\begin{align}
\label{eq:reversebound}
   \sum_w P^{\rm rev}(w) w^2\le \|V\|^2 \;.   
\end{align}

Equations~\eqref{eq:MarkovIneq},~\eqref{eq:Crooksw^2}, and~\eqref{eq:reversebound} imply
\begin{align}
\label{eq:WlV}
    \sum_{w<w_l} P(w) e^{-\beta w} \le  \left(\frac{\|V\|}{w_l} \right)^2 e^{-\beta \, \Delta \! A}
\end{align}
and, for all $w_l \le -6 \|V\|/\epsilon$, we obtain
Eq.~\eqref{eq:Wlconditionmain}; that is
\begin{align}
     \sum_{w<w_l} P(w) e^{-\beta w} \le  \left(\frac{\epsilon}{6} \right)^2 e^{-\beta \, \Delta \! A} \;.
\end{align}

This result is easily generalized to the case $\cU \ne \one$, which is equivalent to considering a two-time measurement scheme where $H_0$, and $\rho_0$ are replaced by $H_0':=\cU H_0 \cU^\dagger$ and $\rho_0':=\cU \rho_0 \cU^\dagger$, respectively. That is, 
the two-time measurement scheme with $\cU \ne \one$ is equivalent to that where a projective measurement of $H_0'$ is performed on $\rho_0'$ first, and 
a projective measurement of $H_1$ is done next.
Then, the perturbation
$V$ in Eq.~\eqref{eq:WlV} must be replaced by $V_\cU:=H_1-H_0'=H_1 - \cU H_0 \cU^\dagger$.
This proves Thm.~\ref{thm:w_lgeneral}.
\qed

\vspace{0.1cm}
Theorem~\ref{thm:w_lgeneral} is useful as it allows us to choose the work cutoff depending on $\|V_\cU \|$ and $\epsilon$ only. In the case where $\cU = \one$ it might be possible to exactly compute or tightly bound $\|V_\one \| \equiv \| V \|$ and thus efficiently compute a work cutoff. However, in the more general case where $\cU \neq \one$, it may not be straightforward to tightly bound $\|V_\cU \|$ such that $w_l \leq -6 \|V_\cU\|/\epsilon$ provides a useful result. Thus the complexity of computing a useful work cutoff in this more general setting remains open.
We further note, for $\epsilon \ll 1$, the result in Thm.~\ref{thm:w_lgeneral} can be impractical
as the complexity of our quantum algorithm is exponential in $-\beta w_l$ and $1/\epsilon$. In this case a better choice for the cutoff might be
$w_l=w_{\min}\ge -\|H_0\|-\|H_1\|$, so that $P(w)=0$ for all $w<w_l$.

\subsection{Commuting Hamiltonians: Proof of Thm.~\ref{thm:w_lcommuting}}
\label{sec:commutingHamiltonians}

When Hamiltonians commute, i.e., $[H_1,H_0]=[V,H_0]=0$,
they can be diagonalized in the same basis where $\ket{\phi_{0,m}}$ are also eigenstates of $H_1$:
\begin{align}
    H_1 \ket{\phi_{0,m}} &= (H_0+V) \ket{\phi_{0,m}} \\
    & = (\varepsilon_{0,m} + \nu_m)\ket{\phi_{0,m}} \\
    & = \varepsilon_{1,m} \ket{\phi_{0,m}} \;,
\end{align}
where $\nu_m$ is the corresponding eigenvalue of $V$. 
Then,
\begin{align}
    w_{\min}&= \min_m (\varepsilon_{1,m}-\varepsilon_{0,m})\\
    & = \min_m \nu_m \\
    & \ge - \|V\| \;.
\end{align}
If $w < -\|V\| \le w_{\min}$, we obtain $P(w)=0$. Then,
Eq.~\eqref{eq:Wlconditioncommuting} is satisfied for all $w_l < -\|V\|$.
\qed

\subsection{Local Hamiltonians}
\label{sec:localHamiltonians}

We first define local Hamiltonians precisely:
\begin{definition}[Local Hamiltonians]\label{def:LocalHamiltonian}
For a given lattice $\Lambda$,
$H:= \sum_{X \in \Lambda} h_{X}$ is a $k$-local Hamiltonian of degree $g$ if each local term $h_X$ acts on at most $k$ qudits (spins) of dimension $d$ and if each qudit is involved in, at most, $g$ of the subsets $X$. The number of qudits is $\mathfrak {n}$. The local terms satisfy $\|h_X\|\leq h$, where $h$ is a strength parameter.
\end{definition}

The structure of local Hamiltonians allow us
to obtain complexity bounds for our quantum algorithm that are a significant improvement of the general case; see below.

\subsubsection{The case $\cU = \one$: Proof of Theorem~\ref{thm:w_llocal}}
\label{sec:localHamiltoniansU=1}

Theorem~\ref{thm:w_llocal} is mainly a consequence of the following lemma that analyzes the support of distributions of eigenstates of $H_0$
on the eigenspaces of $H_1$. This result might be of independent interest and is proven in Appendix~\ref{app:EigenspaceOverlap}.

\begin{lemma}[Eigenspace modification under perturbations]
\label{lem:EigenspaceOverlap}
Let $H_1=H_0+V$, where $H_1=\sum_{X \in \Lambda} h_{1,X}$, $H_0=\sum_{X \in \Lambda} h_{0,X}$, and $V=\sum_{X \in \Lambda} v_X$ are $k$-local Hamiltonians as in Def.~\ref{def:LocalHamiltonian}. The number of local terms in $V$ is, at most, $M$ and $\|h_{0,X}\| \le h$ and $\|v_X\| \le v$ for all subsets $X \in \Lambda$. Then,
\begin{align}
\|{\Pi^1}_{\le \varepsilon_1} \Pi^0_{> \varepsilon_0}\| \leq e^{-\frac{\varepsilon_0 -\varepsilon_1- 2Mv}{2hgk}}\;,
\end{align}
where ${\Pi^1}_{\le \varepsilon_1}:=\sum_{n: \varepsilon_{1,n}\le \varepsilon_1} \ketbra{\phi_{1,n}}$ and $\Pi^0_{> \varepsilon_0}:=\sum_{m: \varepsilon_{0,m}> \varepsilon_0} \ketbra{\phi_{0,m}}$ are the orthogonal projectors into the subspaces of $H_1$ and $H_0$ of eigenvalues at most $\varepsilon_1$ and larger than $\varepsilon_0$, respectively, and we assume $\varepsilon_1 \le \varepsilon_0$.
\end{lemma}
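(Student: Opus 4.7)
My plan is to reduce the operator-norm bound to a Markov-type concentration inequality for $\Pi^0_{>\varepsilon_0}$ in a single state $\ket{\psi}$ drawn from the low-energy subspace of $H_1$, and then to establish such an inequality by controlling the moment generating function $\bra{\psi}e^{\lambda H_0}\ket{\psi}$ using the locality of $H_0$. For the reduction, I would use $\|\Pi^1_{\le\varepsilon_1}\Pi^0_{>\varepsilon_0}\|^2=\sup\|\Pi^0_{>\varepsilon_0}\ket{\psi}\|^2$, where the supremum runs over unit vectors in the range of $\Pi^1_{\le\varepsilon_1}$. The pointwise inequality $\chi_{x>\varepsilon_0}\le e^{\lambda(x-\varepsilon_0)}$, valid for every $\lambda>0$ and all real $x$, lifts to the operator inequality $\Pi^0_{>\varepsilon_0}\le e^{\lambda(H_0-\varepsilon_0)}$, yielding the Markov-type bound $\|\Pi^0_{>\varepsilon_0}\ket{\psi}\|^2\le e^{-\lambda\varepsilon_0}\bra{\psi}e^{\lambda H_0}\ket{\psi}$.

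The critical step is then to show $\bra{\psi}e^{\lambda H_0}\ket{\psi}\le e^{\lambda(\varepsilon_1+2Mv)}$ for $\lambda$ up to a value of order $1/(hgk)$. Writing $H_0=H_1-V$ and using the identity $e^{\lambda H_0}=e^{\lambda H_1}T(\lambda)$ with $T(\lambda):=e^{-\lambda H_1}e^{\lambda H_0}$ satisfying the Dyson-type equation $T'(\lambda)=-V(\lambda)T(\lambda)$, where $V(\lambda):=e^{-\lambda H_1}Ve^{\lambda H_1}$, I would expand $T(\lambda)$ as a time-ordered series. The main difficulty is that $V(\lambda)$ is an imaginary-time conjugation and hence not norm-preserving, so the naive bound $\|V(\lambda)\|\le\|V\|$ fails. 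I would instead use the locality structure of $H_0$ to bound iterated commutators as $\|[H_0,[H_0,\ldots,[H_0,v_Y]\ldots]]\|\le j!\,(2hgk)^j\|v_Y\|$ for a $j$-fold nested commutator, a bound that follows from a combinatorial estimate tracking how the support of each $v_Y$ grows by at most $k$ sites per commutator layer while at most $gk\cdot j$ local terms of $H_0$ contribute at the $j$-th layer. Summing over the at most $M$ local terms of $V$ and over all Dyson orders produces a series convergent for $\lambda<1/(2hgk)$; together with the factor $e^{\lambda\varepsilon_1}$ coming from acting with $e^{\lambda H_1}$ on $\ket{\psi}$, this yields the desired MGF bound.

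Combining the two estimates gives $\|\Pi^0_{>\varepsilon_0}\ket{\psi}\|^2\le e^{-\lambda(\varepsilon_0-\varepsilon_1-2Mv)}$; taking $\lambda=1/(hgk)$ and a square root recovers the stated $e^{-(\varepsilon_0-\varepsilon_1-2Mv)/(2hgk)}$. The main obstacle is the MGF bound: the standard Heisenberg-type estimate is useless for imaginary time since $\|e^{-\lambda H_1}Ve^{\lambda H_1}\|$ can grow exponentially with $\lambda\|H_1\|$, so the correct control must instead pass through the nested-commutator expansion. This is precisely where the locality scale $hgk$ enters the denominator of the lemma's exponent, and is likely the central technical content developed in Appendix~\ref{app:EigenspaceOverlap}.
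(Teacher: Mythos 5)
Your core engine is the right one and is essentially what the paper uses: the lemma reduces to bounding the imaginary-time interaction-picture operator $e^{-aH_1}e^{aH_0}$ via a time-ordered (Dyson) expansion whose integrand $e^{-aH_0}Ve^{aH_0}$ is controlled by locality --- the paper invokes Lemma~3.1 of Arad--Kuwahara--Landau, which is precisely the nested-commutator estimate you sketch and gives $\|e^{-aH_0}Ve^{aH_0}\|\le Mv/(1-ahgk)$. What differs is the outer packaging, and that is where your argument does not close. A first, minor, consistency slip: you define $V(\lambda)=e^{-\lambda H_1}Ve^{\lambda H_1}$ (conjugation by $H_1$) but then bound nested commutators with $H_0$; you must either write the Dyson equation in the right-multiplied form $T'(\lambda)=-T(\lambda)\,e^{-\lambda H_0}Ve^{\lambda H_0}$ so that the commutators are with $H_0$ (as the paper does), or accept the weaker local strength $h+v$ for the terms of $H_1$.

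The more serious issue is quantitative: because you square at the outset, via $\|\Pi^1_{\le\varepsilon_1}\Pi^0_{>\varepsilon_0}\|^2=\sup_\psi\bra{\psi}\Pi^0_{>\varepsilon_0}\ket{\psi}$, and take a square root at the end, you need the MGF bound at $\lambda=1/(hgk)$ to recover the stated exponent --- but your own Dyson series only converges for $\lambda<1/(2hgk)$, and the locality bound on $\|e^{-aH_0}Ve^{aH_0}\|$ diverges exactly at $a=1/(hgk)$. With any admissible $\lambda$ your route yields at best $e^{-(\varepsilon_0-\varepsilon_1-2Mv)/(4hgk)}$, a factor of two worse in the exponent. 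The fix is to avoid paying the square on the hard factor: write $\bra{\psi}e^{\lambda H_0}\ket{\psi}=\|e^{\lambda H_0/2}e^{-\lambda H_1/2}\,e^{\lambda H_1/2}\ket{\psi}\|^2\le\|e^{\lambda H_0/2}e^{-\lambda H_1/2}\|^2e^{\lambda\varepsilon_1}$, so the interaction picture runs only to time $\lambda/2=1/(2hgk)$ and the square exactly compensates the final square root. The paper sidesteps this bookkeeping entirely with a symmetric three-factor splitting $\|\Pi^1_{\le\varepsilon_1}e^{aH_1}\|\cdot\|e^{-aH_1}e^{aH_0}\|\cdot\|e^{-aH_0}\Pi^0_{>\varepsilon_0}\|$ at $a=1/(2hgk)$, which requires no squaring and no single-state reduction.
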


We now use Lemma~\ref{lem:EigenspaceOverlap} to prove Thm.~\ref{thm:w_llocal}. For $\cU=\one$ and $w_l<0$, we obtain
\begin{align}
     \sum_{w<w_l} P(w) e^{-\beta w}
     & =  \sum_{m,n\colon \varepsilon_{1,n}-\varepsilon_{0,m}<w_l} P(\varepsilon_{1,n}|\varepsilon_{0,m})P_0(\varepsilon_{0,m}) e^{-\beta (\varepsilon_{1,n}-\varepsilon_{0,m})}\\
     &=\frac 1 {\cZ_0}
     \sum_{m,n\colon \varepsilon_{1,n}-\varepsilon_{0,m}<w_l} |\!\bra{\phi_{1,n}}\phi_{0,m}\rangle \! |^2  e^{-\beta \varepsilon_{1,n} }  \\
     & = \frac 1 {\cZ_0} \sum_{n} \bra{\phi_{1,n}} \Pi^0_{>\varepsilon_{1,n}-w_l}\ket{\phi_{1,n}}  {e^{-\beta \varepsilon_{1,n}}} \\
     & \le \frac 1 {\cZ_0} \sum_{n}  \| \Pi^1_{\le \varepsilon_{1,n}}\Pi^0_{>\varepsilon_{1,n}-w_l}\|^2 {e^{-\beta \varepsilon_{1,n}}}
     \\
     & \le \frac 1 {\cZ_0} e^{-\frac{-w_l - 2Mv}{hgk}}\ \sum_{n}  {e^{-\beta \varepsilon_{1,n}}} 
     \\ 
     & = e^{\frac{w_l + 2Mv}{hgk}} \frac{\cZ_1} {\cZ_0} \\
     & = e^{\frac{w_l + 2Mv}{hgk}} e^{-\beta \, \Delta \! A}\;.
\end{align}
The result follows by demanding $e^{\frac{w_l + 2Mv}{hgk}} \leq (\epsilon/6)^2$ or, equivalently,
\begin{align}
\label{eq:mainlocalbound}
    w_l \le -2Mv -2hgk \ln(6/\epsilon) \;.
\end{align}
\qed

\subsubsection{The case $\cU \ne \one$: Transverse-field Ising model}
\label{sec:localHamiltoniansUne1}

As explained in Sec.~\ref{sec:generalHamiltonians}, the case $\cU \ne \one$ is equivalent to that where we replace $H_0 \rightarrow H_0' =\cU H_0 \cU^\dagger$ and $V \rightarrow V_\cU= H_1 - H_0'$, and we start from the thermal state $\rho'_0$ of $H_0'$. This allows for a direct generalization of Thm.~\ref{thm:w_llocal} to the case $\cU \ne \one$, which is useful as long as $H_0'$ and $V_\cU$ remain
as $k$-local Hamiltonians of degree $d$.
In fact, a generalization of this result
applies to the more general case  $\cU=\cU_1^\dagger \cU_0$, where $\cU_0$ and $\cU_1$ are two unitaries that take $H_0$ and $H_1$ into $k$-local Hamiltonians of degree $d$; that is $H_0'=\cU_0 H_0 \cU^\dagger_0$ and 
$H_1'=\cU_1 H_1 \cU^\dagger_1$ are local Hamiltonians.
This result also follows from an equivalent interpretation of the two-time measurement scheme of Sec.~\ref{sec:two-time}, where the projective measurement of $H_0$ on $\rho_0$
followed by a unitary $\cU_0$ is equivalent to a projective measurement of $H_0'$ on $\rho_0'$. If another unitary $\cU_1^\dagger$ is applied and then $H_1$ is measured, this step is equivalent to the projective measurement of $H_1'$ followed by the action of $\cU_1^\dagger$.
The projective measurements  of $H_0'$ and $H_1'$ produce a work value that is sampled according to the same distribution $P(w)$ of the original process. Hence, Thm.~\ref{thm:w_lcommuting} applies
to the case $\cU =\cU_1^\dagger \cU_0$
if we replace $V \rightarrow V_{\cU}:= H_1'-H_0'$.
 Often, a unitary $\cU$ will break the locality of the Hamiltonians and the bound in Eq.~\eqref{eq:mainlocalbound} does not apply to that case.

One example of a local spin system is the well-known transverse-field Ising model in one dimension, which has the Hamiltonian
\begin{align}
    \label{eq:TFIM}
    H=\sum_{j=1}^{\mathfrak{n}} {\rm Z}_j -v \sum_{j=1}^{\mathfrak{n}-1} \;
    {\rm X}_j {\rm X}_{j+1} \;,
\end{align}
where ${\rm X}_j$ and ${\rm Z}_j$ are the corresponding Pauli operators of the $j$-th qubit. This model can be solved exactly using the Jordan-Wigner mapping~\cite{LSM61,Pfeu70}. In particular, this mapping
provides a unitary that diagonalizes $H$
and takes it to a linear combination of the ${\rm Z}_j$'s 
only.
As a consequence, the thermal state can be prepared efficiently using other methods than the one presented in here. Since our goal in this section is to illustrate the benefits of using a nontrivial unitary $\cU$ in our quantum algorithm in general, we do not take advantage of the integrability or exact-solvability of the model in our calculations.

Let $\beta=1$ and consider $H_0$ given by $v=0$ and $H_1=H_0+V$ given by $v=1/2$ in Eq.~\eqref{eq:TFIM}, i.e.,  
\begin{align}
    H_0&=\sum_{j=1}^{\mathfrak{n}} {\rm Z}_j \;, \\
    V &= - \frac{1}{2} \sum_{j=1}^{\mathfrak{n}-1}
    {\rm X}_j{\rm X}_{j+1} \; .
\end{align}
Our quantum algorithm requires a unitary $U_0$ that prepares $\ket{\Psi_0}$, the purification of the thermal state of $H_0$ given in Eq.~\eqref{eq:Psi0}. A quantum circuit for $U_0$ can be simply constructed as follows. Let $R_y(\theta)=e^{-i\theta Y}$ be a single qubit rotation with $\theta=\arccos(e^{-\beta/2}/\sqrt{2 \cosh(\beta)})$. Acting on two $\mathfrak{n}$-qubit registers, we first apply $R_y(\theta)^{\otimes {\mathfrak n}}$ to the first register and next apply a sequence of $\mathfrak n$ controlled-NOT gates between the corresponding qubits of each register, controlled by the qubits in the first register. The prepared state is
\begin{align}
    U_0 \ket 0 \rightarrow \ket{\Psi_0}= \frac 1 {\sqrt{\cZ_0}}\sum_{b_1,\ldots,b_{\mathfrak n}} e^{-\frac \beta 2 ({\mathfrak n}-2b_1 - \ldots -2 b_{\mathfrak n})}\ket{b_1,\ldots,b_{\mathfrak n}} \ket{b_1,\ldots,b_{\mathfrak n}} \;,
\end{align}
where $b_j \in \{0,1\}$ and $\ket{b_1,\ldots,b_{\mathfrak n}}$ are states in the computational basis.
The gate complexity of $U_0$ is $\cO(\mathfrak n)$ in this case.

To illustrate the advantages of considering certain efficiently-implementable $\cU$'s in our quantum algorithm, we numerically analyze the work distribution arising in this model for various unitaries. 
In particular, we consider the unitaries $\cU_T= \mathcal{T} e^{-i \int_0^T dt H(t)}$, which are time-evolution with Hamiltonians $H(t) = H_0 + (t/T)V$ from time 0 to $T$, with the convention $\cU_0=\one$. 
In the limit $T\rightarrow \infty$ this corresponds to adiabatic evolution and, in the absence of level crossings (which is the case for this model) we have $\lim_{T\rightarrow \infty}\cU_T= \cU_{\rm opt}(\epsilon=0) \equiv \cU_{\rm opt}$ (formally defined in Cor.~\ref{cor:Uopt_eps0} in Appendix~\ref{app:optimal_unitary}).~\footnote{We do not consider the true optimal unitary $\cU_{\rm opt}(\epsilon\neq 0)$ as this can be very hard to implement in practice, even approximately.} 
For each $\cU_T$, we numerically compute the largest value of the cutoff $w_l$ that satisfies Eq~\eqref{eq:Wlconditionmain} of Thm.~\ref{thm:main}. To this end, we compute all possible $2^{\mathfrak n} \times 2^{\mathfrak n}$ transition probabilities $P(\varepsilon_{1,n}|\varepsilon_{0,m})$ and determine $P(w)$ as in Eq.~\eqref{eq:TTdist}. 
While it might be possible to do this more efficiently in this model, obtaining the largest value of $w_l$ that satisfies Eq~\eqref{eq:Wlconditionmain} for general models might require simulating the entire scheme.

In the left panel of Fig.~\ref{fig:EpsilonAndTmax} we plot this value for $\mathfrak{n}=6$ against $\epsilon$ and for different values of $T$, and in the right panel against $T$ and for different values of $\epsilon$. It can be observed that whereas reducing error tolerance decreases the (largest) work cutoff for all interpolation times, the cutoff for the optimal unitary does not have a strong dependence on $\epsilon$. We also observe that increasing the interpolation time in general increases the work cutoff.  

\begin{figure}[htb]
    \centering
    \includegraphics[scale=0.6]{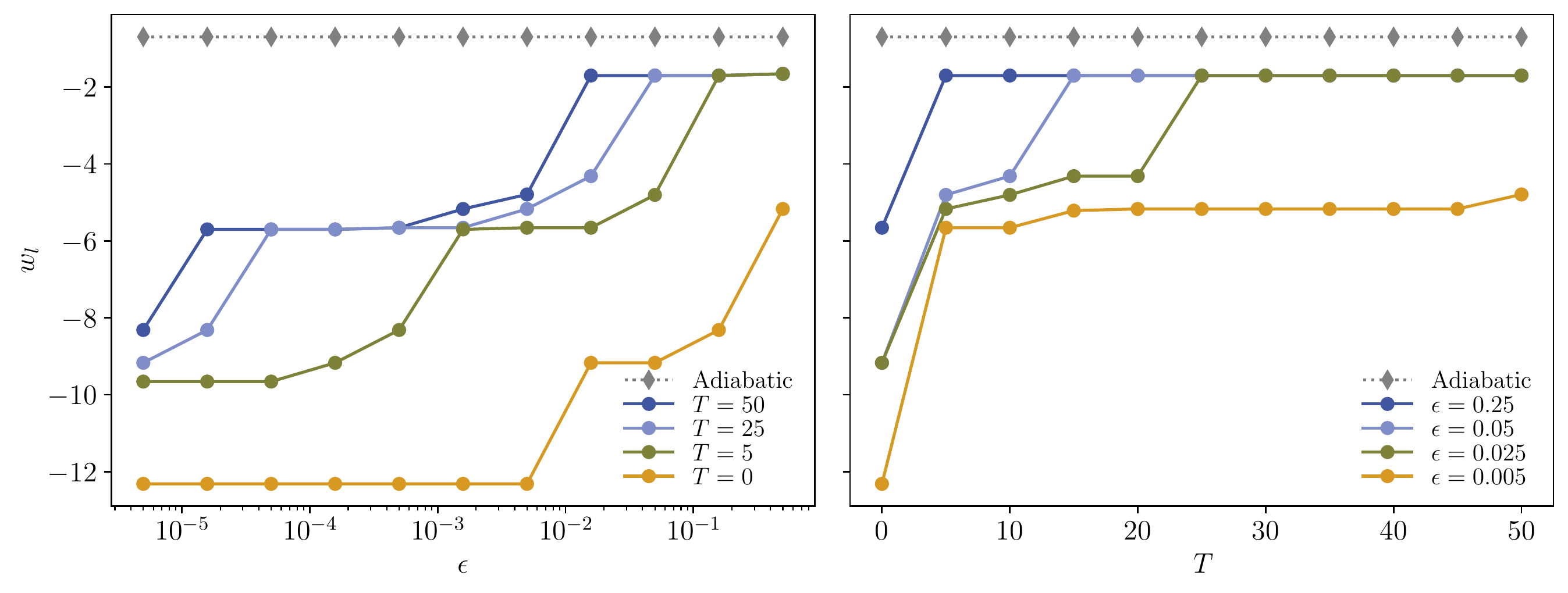}
    \caption{Largest work cutoff $w_{l}$ that satisfies Eq.~\eqref{eq:Wlconditionmain} for each $\cU$, given an initial Hamiltonian $H_0 = \sum_{j=1}^\mathfrak{n} {\rm Z}_j$ and the perturbation $V = -(1/2) \sum_{j=1}^\mathfrak{n} {\rm X}_j {\rm X}_{j+1}$, for an $\mathfrak{n}=6$ qubit system at  $\beta =1$. On the left, we plot $w_l$  as a function of the error $\epsilon$, for when $\cU$ is i) adiabatic evolution, 
    and ii) the evolution under the time dependent Hamiltonian $H(t) =  H_0 + (t/T) V $ for different choices in $T$. On the right, we consider the same scenario but plot $w_l$ as a function of $T$ for different choices in $\epsilon$. 
    }
    \label{fig:EpsilonAndTmax}
\end{figure}

In Fig.~\ref{fig:WorkDistributions} we plot work distributions for different choices of the unitaries along with the associated work cutoffs. We observe that the optimal unitary results in a most peaked distribution with largest mean, and distributions tend to get narrower for longer time evolutions and their means increase. These observations are consistent with those in Fig.~\ref{fig:EpsilonAndTmax}. Note that the work cutoffs are at the far tails of the distributions, which is easier to appreciate in the inset. This is due to the importance of rare events~\cite{Jarzynski2006}.

Finally, in Fig.~\ref{fig:WorkCutScaling} we plot the work cutoff against system size for $\epsilon=0.005$ and for  interpolation times $T=0$, $T=2$, and interpolation times linear in system size, $T=\mathfrak{n}$ and $T=5\mathfrak{n}$. We observe that even a time evolution of modest duration can significantly increase the work cutoff.

\begin{figure}[htb]
    \centering
    \includegraphics[scale=0.6]{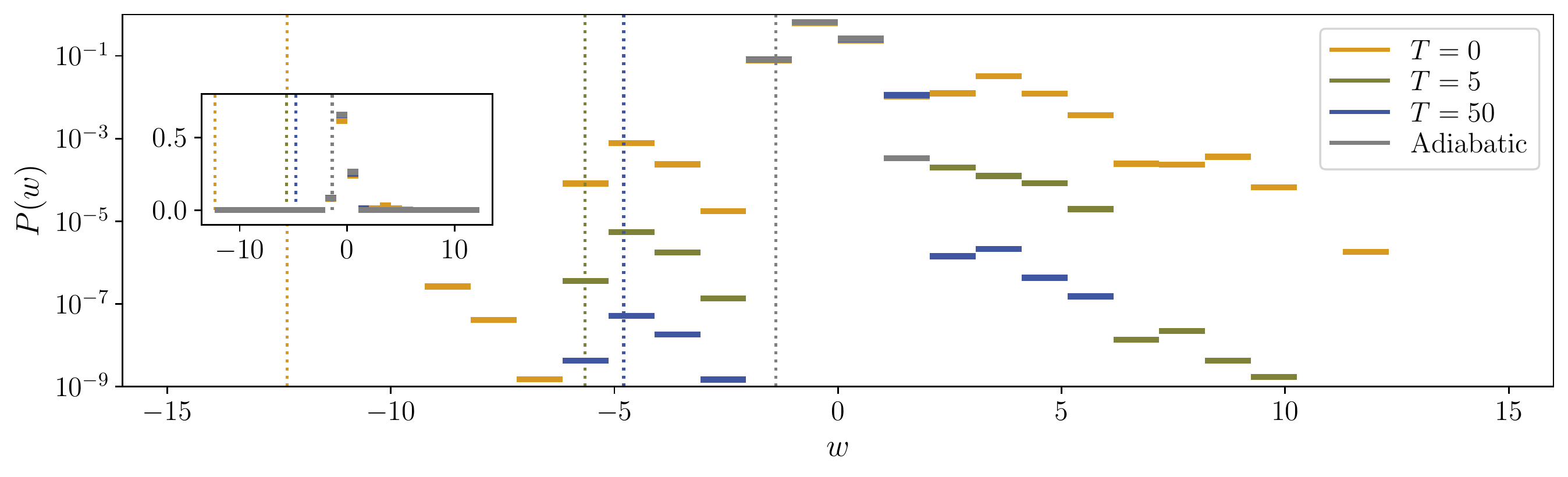}
    \caption{The binned work distributions for the scenario described in Fig.~\ref{fig:EpsilonAndTmax} (i.e., $H_0 = \sum_{j=1}^\mathfrak{n} {\rm Z}_j$ and $V = -(1/2) \sum_{j=1}^\mathfrak{n} {\rm X}_j {\rm X}_{j+1}$ for an $\mathfrak{n}=6$ qubit system at $\beta =1$ and $\epsilon = 0.005$) for the adiabatic evolution and dynamical evolution for different choices in $T$.
    The main plot and the inset show the same data using a logarithmic and a linear scale for the vertical axis, respectively. The vertical dotted lines indicate the corresponding $w_l$ value for each case.
    }
    \label{fig:WorkDistributions}
\end{figure}

\begin{figure}[htb]
    \centering
    \includegraphics[scale=0.35]{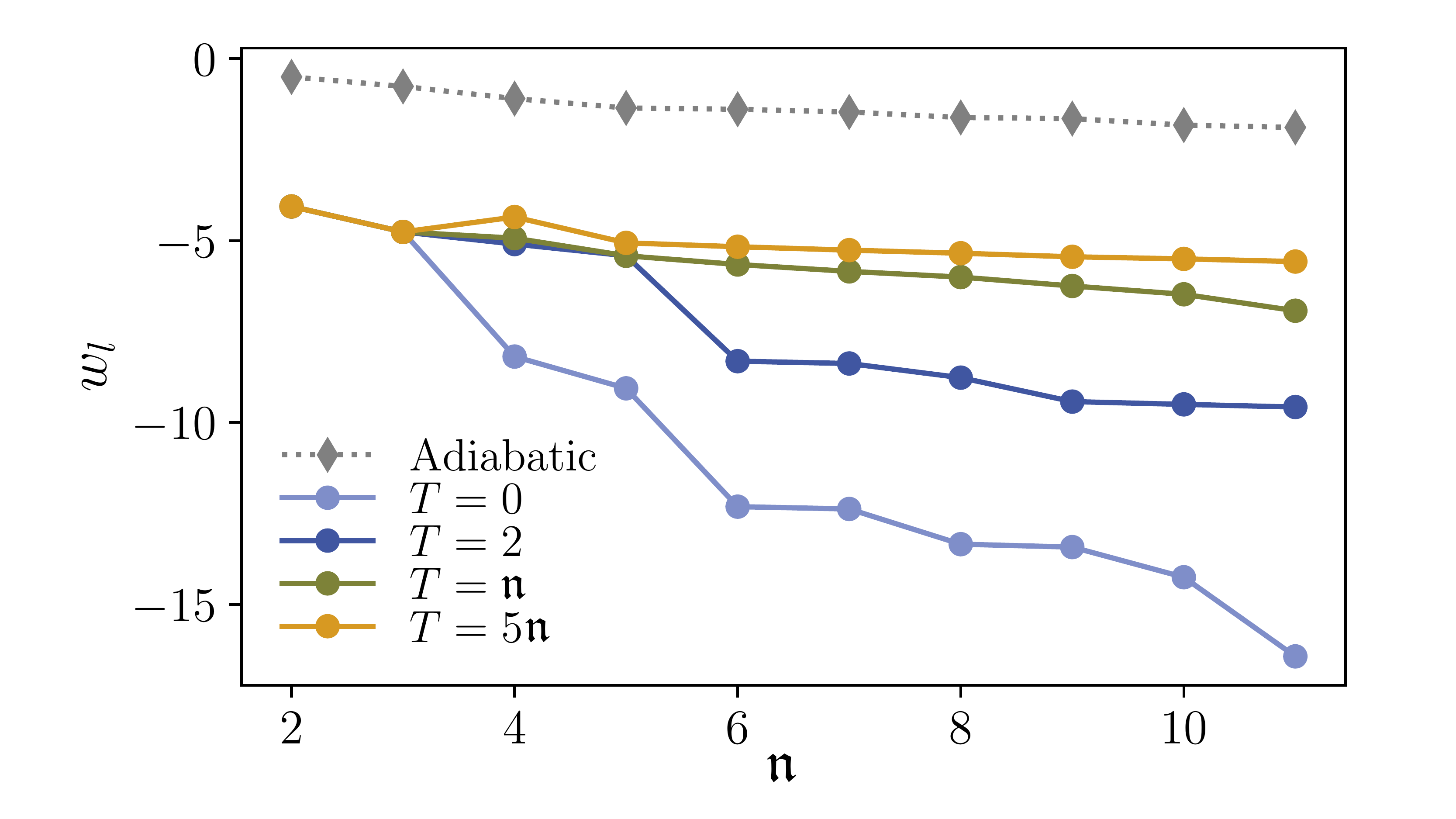}
    \caption{Largest work cutoff $w_l$ for each $\cU$ for the scenario described in Fig.~\ref{fig:EpsilonAndTmax} (i.e. $H_0 = \sum_{j=1}^\mathfrak{n} {\rm Z}_j$ and $V = -(1/2) \sum_{j=1}^\mathfrak{n} {\rm X}_j {\rm X}_{j+1}$ at $\beta =1$ and $\epsilon = 0.005$) but with varying system size $\mathfrak{n}$. We compare the cases where the total evolution time $T$ is fixed, i.e. to $T = 0$ (light blue) or $T = 2$ (dark blue), and when it is allowed to scale linearly with $\mathfrak{n}$, i.e. $T = \mathfrak{n}$ (green) or $T = 5 \mathfrak{n}$ (yellow).}
    \label{fig:WorkCutScaling}
\end{figure}


\section{Discussions and open problems}
\label{sec:conclusions}

Using fluctuation theorems, we provided a quantum algorithm to prepare a purification of the thermal state of a quantum system $H_1$ at inverse temperature $\beta \ge 0$, starting from a purification of the thermal state of a quantum system $H_0$. The dominant factor in  the complexity of our quantum algorithm is $e^{\frac \beta 2 (\Delta\! A- w_l)}$, where $\Delta \! A$ is the free-energy difference and $w_l$ is a work cutoff. 
This result is a significant improvement of prior quantum algorithms~\cite{PW09,CS16}. The work cutoff also depends on a non-equilibrium unitary $\cU$ that can be arbitrary. 
We showed that certain $\cU$'s allow for an improvement in $w_l$ and the runtime of our algorithm.
We also obtained suitable choices of $w_l$ for large classes of Hamiltonians, including general Hamiltonians, commuting Hamiltonians, and the local spin Hamiltonians. In all these cases we found a lower bound of $w_l$  that, for $\cU=\one$, depends on the strength of the perturbation $V$.
These results are especially useful in the regime where $\beta \|V\|  \ll 1$.

In the construction of our algorithm we produced 
an approximation of the operator $e^{-\beta W/2}$
as a linear combination of unitaries, each corresponding to an evolution with $W$ for some time [Eq.~\eqref{eq:WorkOp}]. This approximation
results from a Fourier analysis, which has been proven useful in  related contexts (cf.~\cite{CKS17}).
Our approximation allowed us to take advantage of the cutoff $w_l$, which is important to reduce complexity. 

Other useful approximations of $e^{-\beta W/2}$ for quantum algorithms might be obtained using, for example, Chebyshev polynomials~\cite{CKS17,CSS18,LC19}. In contrast with approximations based on time evolutions, these polynomials can be constructed and implemented exactly using quantum walks~\cite{CKS17,LC19}. However, these Chebyshev approximations 
might not help when we seek the approximation to be accurate on a subspace only. For example, 
following the method in Ref.~\cite{CKS17} to approximate the action of $e^{-\beta W/2}$ produces a Chebyshev approximation $\sum_j \alpha_j \cT_j(W/|W|)$, where $\alpha_j \in \mathbb C$,
$\cT_j$ is the $j$-th Chebyshev polynomial of the first kind, and $|W|$ is an upper bound of $\|W\|$ that is obtained
from the presentation of $W$. 
The $L_1$ norm of this approximation, given by $\sum_j |\alpha_j|$, is exponential in $\beta \|W\|=\beta(\|H_1\|+\|H_0\|)$ in the worst case, 
even when we require the approximation to be accurate in the subspace where $w \ge w_l$ only. 
This is in sharp contrast with Eq.~\eqref{eq:L1fourier}, which is exponential in $-\beta w_l$.  Finding a suitable Chebyshev approximation that is accurate in the subspace 
and whose $L_1$ norm exponential in $-\beta w_l$ remains as an open problem. (A related problem was encountered in Ref.~\cite{LowEnergySim} for simulating quantum dynamics on a subspace.)

Last, we emphasize that our quantum algorithm considers the thermal-state preparation problem only. Preparing thermal states suffices to compute thermal properties accurately. However, it is possible to build other quantum algorithms that compute thermal properties directly without preparing the thermal state. Such techniques can also be constructed from fluctuation
theorems and are left for future work.


\section{Acknowledgements}
RS thanks Augusto Roncaglia and YS thanks Carleton Coffrin for  discussions. 
This material was supported by the U.S. Department of Energy, Office of Science, High-Energy Physics and
Office of Advanced Scientific Computing Research, under the Accelerated Research in Quantum Computing
(ARQC) program, and by the U.S. Department of Energy, Office of Science, National Quantum Information Science Research Centers, Quantum Science Center. This material was also supported by the LDRD program at Los Alamos National Laboratory. Los Alamos National Laboratory is managed by Triad National Security, LLC, for the National Nuclear Security Administration of the U.S. Department of Energy under Contract No. 89233218CNA000001

\newpage

\onecolumn\newpage
\appendix

\section{Gate complexity of $U$}
\label{app:Ucomplexity}

Our quantum algorithm uses the unitary $U$, which corresponds to the time-evolution operator for time $-\delta \beta/2$ of the Hamiltonian 
$H_1 \otimes \one - \one \otimes H_0^*$, or $V \otimes \one=(H_1-H_0)\otimes \one$ if the Hamiltonians commute.
The parameter $\delta$ is discussed in Sec.~\ref{sec:Fourier}.
The gate complexity of $U$ can be obtained from known results in Hamiltonian simulation and depends on the presentation of the Hamiltonians and the desired approximation error~\cite{Llo96,SOGKL02,BAC07,WBH+10,CW12,BCC+14,BCC+15,LC17,LC19}.  A common setting is one where the Hamiltonians 
are presented as linear combination of unitaries.
The results in quantum signal processing and qubitization in Refs.~\cite{LC17,LC19} apply to this setting and
imply:

\begin{theorem}[Gate complexity of $U$  using quantum signal processing and qubitization]
\label{thm:Ucomplexity}
Let $H_0=\sum_{l=1}^L \alpha_{0,l} U_{0,l}$ and $H_1=\sum_{l=1}^L \alpha_{1,l} U_{1,l}$ be two Hamiltonians acting on $\cH_{\sy}$, where the $U_{0,l}$'s and $U_{1,l}$'s are $m$-qubit unitaries, and $\alpha_{0,l} >0$, $\alpha_{1,l}>0$ for all $1 \le l \le L$. Then, if the quantum algorithm makes $Q'$ uses the time-evolution unitary $U:=e^{i\delta \beta W/2}$, where $W:=H_1 \otimes \one - \one \otimes H_0^*$, $\beta \ge 0$,
and $\delta \le \pi/16$,
we can simulate $U$ with approximation error $\epsilon/Q'$ using $\tilde \cO(L2^m((\alpha_0+\alpha_1)\delta \beta + \log(Q'/\epsilon)))$ two-qubit gates, where $\alpha_0=\sum_l \alpha_{0,l}$ and $\alpha_1=\sum_l \alpha_{1,l}$. 
In addition, if $V=H_1-H_0=\sum_l \alpha_{V,l} V_l$, where the $V_l$ are also $m-$qubit unitaries, $\alpha_{V,l}>0$, and $[H_0,H_1]=0$, we can simulate $U$ with approximation error $\epsilon/Q'$ using
 $\tilde \cO(L2^m(\alpha_V \delta \beta + \log(Q/\epsilon)))$ two-qubit gates, where $\alpha_V=\sum_l \alpha_{V,l}$. The $\tilde \cO$ notation hides a factor that is logarithmic in $L$. The method implements a unitary
 that acts on the space $\cH_{\rm anc} \otimes \cH_{\sy} \otimes \cH_{\sy}$, where $\cH_{\rm anc}$ is an ancillary space of $\mathfrak m'=\cO(\log(L))$ qubits.
\end{theorem}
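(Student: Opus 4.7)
The plan is to express $W$ as a linear combination of unitaries and then invoke the quantum signal processing / qubitization framework of Refs.~\cite{LC17,LC19} for Hamiltonian simulation, paying careful attention to the error budget that arises from the $Q'$ repeated uses of $U$ within the larger algorithm.

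First I would write
\begin{align}
W = H_1 \otimes \one - \one \otimes H_0^* = \sum_{l=1}^L \alpha_{1,l}\, (U_{1,l} \otimes \one) \;+\; \sum_{l=1}^L \alpha_{0,l}\, \bigl(\one \otimes (-U_{0,l}^*)\bigr),
\end{align}
which is a linear combination of $2L$ unitaries, each acting on at most $m$ qubits of $\cH_{\sy}\otimes \cH_{\sy'}$, with positive weights summing to $\alpha_0+\alpha_1$. A standard PREPARE--SELECT pair built from this decomposition yields a block-encoding of $W/(\alpha_0+\alpha_1)$ using $\mathfrak m'=\cO(\log L)$ ancilla qubits. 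The PREPARE oracle acts on the $\cO(\log L)$ ancilla register and costs $\cO(L)$ two-qubit gates (classical state preparation on $\log L$ qubits), while SELECT executes one of the $2L$ controlled $m$-qubit unitaries and so requires $\cO(L\cdot 2^m)$ two-qubit gates using a generic multi-controlled decomposition. Thus one query to the block-encoding uses $\tilde\cO(L\,2^m)$ two-qubit gates.

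Next I would apply the Hamiltonian-simulation result based on qubitization and QSP: for a Hamiltonian $W$ presented as such a block-encoding with normalization $\alpha_0+\alpha_1$, the unitary $e^{i(\delta\beta/2)W}$ can be implemented to precision $\epsilon'$ using $\cO\bigl((\alpha_0+\alpha_1)\,\delta\beta + \log(1/\epsilon')\bigr)$ queries to the block-encoding and its inverse, plus a comparable number of single-qubit rotations on an additional ancilla. Since our overall algorithm uses $U$ up to $Q'$ times, a union bound on simulation errors forces us to set the per-call precision to $\epsilon' = \epsilon/Q'$, giving $\cO((\alpha_0+\alpha_1)\,\delta\beta + \log(Q'/\epsilon))$ block-encoding queries per call of $U$. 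Multiplying by the per-query gate cost $\tilde\cO(L\, 2^m)$ yields the claimed bound $\tilde\cO\bigl(L\, 2^m\bigl((\alpha_0+\alpha_1)\,\delta\beta + \log(Q'/\epsilon)\bigr)\bigr)$, with the $\tilde\cO$ absorbing the $\log L$ factors from PREPARE, reflection, and ancilla handling.

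For the commuting case $[H_0,H_1]=0$, I would exploit the fact established in Sec.~\ref{sec:commutingHamiltonians} that $H_0$, $H_1$, and hence $V$ share an eigenbasis, and that the purification $\ket{\Psi_0}$ is supported on the subspace spanned by $\ket{\phi_{0,m}}\ket{\phi_{0,m}^*}$ on which $W$ and $V\otimes \one$ act identically (both as multiplication by $\nu_m=\varepsilon_{1,m}-\varepsilon_{0,m}$). Since in this regime $\cU=\one$, the state is never rotated out of this invariant subspace, so replacing $e^{i(\delta\beta/2)W}$ by $e^{i(\delta\beta/2)\,V\otimes \one}$ in the algorithm produces the same state. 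Applying the same qubitization argument to the block-encoding of $V/\alpha_V$ coming from the $L$-term linear combination for $V$ gives gate count $\tilde\cO\bigl(L\, 2^m\bigl(\alpha_V\, \delta\beta + \log(Q'/\epsilon)\bigr)\bigr)$.

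The main obstacle I anticipate is properly bookkeeping the precision: one must choose the per-invocation error $\epsilon'=\epsilon/Q'$ (so that after amplitude amplification the approximate unitaries compose to within total error $\epsilon$) and confirm that the logarithmic cost of QSP in $1/\epsilon'$ is what gives the additive $\log(Q'/\epsilon)$ term rather than a multiplicative one. A secondary technical point, worth checking but not deep, is the $\cO(2^m)$ cost of decomposing each controlled $m$-qubit unitary $U_{0,l}$, $U_{1,l}$, and $U_{0,l}^*$ into two-qubit gates; for the complex-conjugate unitaries $U_{0,l}^*$ one uses the standard observation that conjugation in the computational basis can be implemented with the same gate count as $U_{0,l}$ itself.
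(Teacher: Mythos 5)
Your proposal is correct and follows essentially the same route as the paper: write $W$ as a linear combination of $2L$ $m$-qubit unitaries with $L_1$ weight $\alpha_0+\alpha_1$, build the PREPARE--SELECT block-encoding with $\cO(\log L)$ ancillas and $\tilde\cO(L2^m)$ gates per query, invoke the qubitization/QSP simulation theorem of Ref.~\cite{LC19} with per-call precision $\epsilon/Q'$, and multiply the query count by the per-query gate cost. Your justification of the commuting-case substitution $W \to V\otimes\one$ (via the invariance of the diagonal subspace spanned by $\ket{\phi_{0,m}}\ket{\phi_{0,m}^*}$ when $\cU=\one$) is in fact spelled out more explicitly than in the paper, but the conclusion and all complexity bounds agree.
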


\begin{proof}
Consider a block-encoding for a Hamiltonian $H'$ acting on $\cH_{\sy}\otimes\cH_{\sy} =\mathbb C^{2N}$, $N=2^{\mathfrak{n}}$, where $H'=\bra G U_{H'} \ket G$. The operation $U_{H'}$ is unitary and acts on $2\mathfrak{n}+\tilde {\mathfrak{m}}$ qubits,  $\ket G =U_G \ket 0$ is some fiducial state of the $\tilde {\mathfrak{m}}$ ancilla qubits, and $U_G$ is a unitary acting on the ancilla space $\mathbb C^{2^{\tilde{\mathfrak{m}}}}$. Theorem 1 of Ref.~\cite{LC19} states that $H'$ can be simulated for time $t'$ and error $\epsilon'$
with $\cO(t' + \log(1/\epsilon'))$ uses of the controlled-$U_{H'}$ and controlled-$U_G$, and their inverses, and additional two-qubit gates. 
The dominant gate complexity is the one coming from implementing 
the relevant unitaries. The total number of ancilla qubits is $\mathfrak m'=2+\tilde{\mathfrak{m}}$ where, beyond the  $\tilde{\mathfrak{m}}$ ancilla qubits needed for $\ket G$, one additional ancilla qubit is needed for the controlled operations in quantum signal processing and another ancilla qubit is needed to construct a so-called ``unitary iterate'' from $U_{H'}$. This unitary iterate is basically
a controlled operation that implements $U_{H'}$ or $U_{H'}^\dagger$ depending on the state of an ancilla qubit, followed by a sequence of other unitaries that do not use $U_l$; see  Ref.~\cite{LC19} for details. Repeated ($k$) implementations of the unitary iterate provide a block-encoding of the $k$-th Chebyshev polynomial of $H'$, which is useful to accurately approximate $e^{-it'H'}$ as a linear combination of these polynomials.

Let $H = \sum_{l=1}^{2L} \alpha_l U_l$ be a Hamiltonian presented as a linear combination of $m$-qubit unitaries $U_l$, where $\alpha_l >0$. Our goal is to simulate $H$ for time $t$.
Then, we can rescale the Hamiltonian
to simulate $H'=H/\alpha$ for time $t' = \alpha t$, where $\alpha = \sum_{l=1}^{2L} \alpha_l$, i.e., $e^{-itH}=e^{-it'H'}$. In this setting, the unitary that encodes the Hamiltonian $H'$ is
\begin{align}
\label{eq:UH'}
   U_{H'}= \sum_{l=1}^{2L} \ketbra l \otimes U_l \;,
\end{align}
and the unitary that prepares $\ket G$ performs the map
\begin{align}
  \ket 0 \rightarrow U_G \ket 0= \sum_{l=1}^{2L} \sqrt{\alpha_l/\alpha} \ket l \;.
\end{align}
If we use a binary encoding, the number of ancilla qubits is $\tilde {\mathfrak{m}}=\lceil \log_2(2L) \rceil$, hence  ${\mathfrak{m}'}=\cO(\log(L))$, and the gate complexity of $U_G$ is $\cO(L)$.

Implementing $U_{H'}$ in Eq.~\eqref{eq:UH'} requires implementing each $U_l$ controlled on the state $\ket l$ of the $\tilde{\mathfrak{m}}$ ancilla qubits. If each $U_l$ acts on, at most, $m$ qubits,
then $U_{H'}$ requires $\cO(L2^m \log (L))$ two-qubit gates. The results in Ref.~\cite{LC19} imply that the overall number of two-qubit gates
to simulate $H$ for time $t$ and approximation error $\epsilon'$ is
\begin{align}
\label{eq:Ucomplexity}
{\tilde \cO}(L 2^m (\alpha t + \log(1/\epsilon'))  ) \;,
\end{align}
where we dropped the logarithmic factor in $L$.

Theorem~\ref{thm:Ucomplexity} follows directly from Eq.~\eqref{eq:Ucomplexity} if we replace $H$, $\alpha$, $t$, and $\epsilon'$ with $H_1 \otimes \one - \one \otimes H_0^*$ (or $H_1-H_0$), $\alpha_0+\alpha_1$, $\delta \beta/2$, and $\epsilon/Q'$.
\end{proof}

In general, we do not have an upper bound of $\alpha_0$
or $\alpha_1$ in terms of the spectral norms of $H_0$
or $H_1$, but in many cases it is simple to show
$\alpha_0 \le c\|H_0\|$ and $\alpha_1 \le c\|H_1\|$,
where $c>0$ is a constant. 
This implies $(\alpha_0+\alpha_1)\delta \beta =\cO((\|H_0\|+\|H_1\|)\beta)$ and the gate complexity of $U$ from Thm.~\ref{thm:Ucomplexity} is
\begin{align}
\label{eq:UcomplexityB}
{\tilde \cO}(L 2^m ((\|H_0\|+\|H_1\|) \delta \beta + \log(Q'/\epsilon))  ) \;.
\end{align}
Moreover, in Thm.~\ref{thm:main}, $Q'$ can be determined 
from the number of amplitude amplification rounds
times the number of uses of $U$ per round. 
Then,  the dominant factor in Eqs.~\eqref{eq:Ucomplexity} and \eqref{eq:UcomplexityB} is expected to be
$\ln(Q'/\epsilon)$, since $Q' >Q$
is exponential in $\beta$. 

Evolving with $H_1 \otimes \one - \one \otimes H_0^*$
can be done in parallel, acting with $H_1$ on one system and 
with $-H_0^*$ on the other. Also, by setting the approximation error to $\epsilon/Q'$, we guarantee that the additional error coming from the simulation of $U$ is, at most, $\epsilon$. In Thm.~\ref{thm:main}, this would imply that the state prepared by our quantum algorithm satisfies $\frac {1}{2}\|\tau_1 -\rho_1 \|_1 \le  \epsilon$ due to the approximation of $U$. 

\vspace{-.3cm}

\section{Optimal non-equilibrium unitaries}
\label{app:optimal_unitary}

Our quantum algorithm uses a non-equilibrium unitary $\cU$ that determines
the work distribution $P(w)$ and the work cutoff $w_l$. 
To reduce the query complexity, a good choice for $\cU$ is 
one that, for a given $\epsilon \ge 0$, allows us to maximize $w_l$ in Eq.~\eqref{eq:Wlconditionmain}.
Similarly, we can treat the left hand side of Eq.~\eqref{eq:Wlconditionmain} as a cost function $C(\cU) \ge 0$ to minimize.
Due to the positivity of the terms in the sum, finding the unitary $\cU$ that gives the largest work cutoff $w_l^*$, i.e. the largest $w_l$ for which Eq.~\eqref{eq:Wlconditionmain} is satisfied for a given $\epsilon \ge 0$, is equivalent to finding the unitary $\cU$ that minimizes $C(\cU)$ for a given work cutoff $w_l^*$. We focus on the latter problem.

Instead of Eq.~\eqref{eq:Wlconditionmain}, it will prove more convenient to work with the equivalent condition Eq.~\eqref{eq:WlconditionPrev}. Our goal is then to find the non-equilibrium unitary that minimizes the following cost function 
\begin{align}
C(\cU) &=\sum_{w>-w_l^*} P^\text{rev}(w) \\
\label{eq:costdef}
&=\sum_{m,n:\varepsilon_{0,m}-\varepsilon_{1,n}>-w_l^*} P^{\rm rev}(\varepsilon_{0,m}|\varepsilon_{1,n}) P_1(\varepsilon_{1,n}) 
\end{align}
for the largest work cutoff $w_l^*$. Here,  $P^\text{rev}(w)$ is the work distribution of the reverse two-time measurement scheme discussed in Sec.~\ref{sec:generalHamiltonians},
$P^{\rm rev}(\varepsilon_{0,m}|\varepsilon_{1,m})=|\!\bra{\phi_{0,m}} \cU^\dagger \ket{\phi_{1,n}}\!|^2$ is the transition probability that depends on $\cU$,
and $P_1(\varepsilon_{1,n}) = e^{-\beta \varepsilon_{1,n}}/\cZ_1$ is the Boltzmann weight, which
is independent of $\cU$.
Intuitively, the optimal unitary, which we call $\cU_\text{opt}^\epsilon$, forbids transitions from $\ket{\phi_{0,m}}$ to $\ket{\phi_{1,n}}$ corresponding to $\varepsilon_{0,m}-\varepsilon_{1,n} > -w_l^*$. If this can be achieved for all transitions, i.e. $P^{\rm rev}(\varepsilon_{0,m}|\varepsilon_{1,n})=0$ if $\varepsilon_{0,m}-\varepsilon_{1,n} > -w_l^*$, then $C(\cU_\text{opt}^\epsilon)=0$. However, this is not always possible. The unitary $\cU_\text{opt}^\epsilon$ described below is such that it minimizes the contribution to the cost function due to these transitions.

It is easiest to state the form of $\cU_\text{opt}^\epsilon$ in terms of the action of its inverse $\cU_\text{opt}^{\epsilon \; \dagger}$ on the eigenstates of $H_1$, so we adopt this approach. Recall that our convention is such that $\varepsilon_{0,0}\le \varepsilon_{0,1} \le \ldots \le \varepsilon_{0,N-1}$ and $\varepsilon_{1,0}\le \varepsilon_{1,1} \le \ldots \le \varepsilon_{1,N-1}$. The rough intuition underlying $\cU_\text{opt}^{\epsilon \; \dagger}$ is to pursue a greedy strategy to minimize $C(\mathcal{U})$. Namely, starting with the lowest eigenvalue eigenstate of $H_1$, $\ket{\phi_{1,0}}$, this is mapped to the lowest eigenvalue eigenstate of $H_0$, $\ket{\phi_{0,0}}$, if $\varepsilon_{0,0}-\varepsilon_{1,0} \leq -w_l^*$, without contributing to the cost function. If $\varepsilon_{0,0}-\varepsilon_{1,0}> -w_l^*$, 
the $\ket{\phi_{1,0}}$ state will contribute $P_1(\varepsilon_{1,0})$ to the cost function
``no matter what", and so $\varepsilon_{1,0}$ is mapped to the largest eigenvalue eigenstate of $H_0$. 
This is to leave more ``room'' at the bottom of the spectrum of $H_0$, ensuring that other  eigenstates of $H_1$ with $n \ge 1$ have eigenstates of $H_0$ of lower eigenvalue (shifted by $w_l^*$) available to them, and so are less likely to contribute to the cost function.
This general greedy strategy is then applied iteratively to each of the eigenstates of $H_1$ in increasing order $n=0,1,\ldots$ to obtain $\cU_\text{opt}^{\epsilon \; \dagger}$. More concretely, $\cU_\text{opt}^{\epsilon \; \dagger}$ can be described as follows.

\begin{theorem}
[Optimal non-equilibrium unitary]
\label{thm:Uopt_finite_eps}
Let $H_0$ and $H_1$ be two Hamiltonians acting on $\cH_{\sy}$, $\epsilon\ge 0$, and $w_l^*$ be the largest work cutoff for which Eq.~\eqref{eq:Wlconditionmain} holds upon optimizing over all unitaries $\cU$. Then, an optimal unitary $\cU_\text{opt}^\epsilon$, which maps eigenstates of $H_0$ to eigenstates of $H_1$, can be constructed as follows: \\ 
\vspace{-3mm}
\\
Let $S_0:           =\{0,1,\dots,N-1\}$. \\
For $n=0,1,\dots,N-1$: \\
\phantom{----} if $\varepsilon_{1,n} \ge \varepsilon_{0,\min(S_n)} + w_l^*$:\\
\phantom{----------} set $\cU_\text{opt}^{\epsilon \; \dagger}\ket{\phi_{1,n}}=\ket{\phi_{0,\min(S_n)}}$, \\ 
\phantom{----------} set $S_{n+1} = S_{n}\setminus \min(S_n)$; \\
\phantom{----} else: \\
\phantom{----------} set $\cU_\text{opt}^{\epsilon \; \dagger}\ket{\phi_{1,n}}=\ket{\phi_{0,\max(S_n)}}$,\\
\phantom{----------} set $S_{n+1} = S_{n}\setminus \max(S_n)$;
\end{theorem}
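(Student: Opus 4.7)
My plan is to split the proof into two reductions: first from arbitrary unitaries to permutations of the eigenbasis, and then from permutations to the specific greedy permutation constructed in the statement.

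For the first reduction, observe that the cost function $C(\cU)$ in Eq.~\eqref{eq:costdef} depends on $\cU$ only through the transition probabilities $T_{m,n} := |\bra{\phi_{0,m}}\cU^{\dagger}\ket{\phi_{1,n}}|^2$, and it is \emph{linear} in $T$:
\begin{align}
C(\cU) = \sum_n p_n \sum_{m \,:\, \varepsilon_{0,m} > \varepsilon_{1,n} - w_l^*} T_{m,n},
\end{align}
where $p_n := P_1(\varepsilon_{1,n})$. The matrix $T$ is doubly stochastic, so by the Birkhoff--von Neumann theorem the minimum of this linear functional over the Birkhoff polytope is attained at a permutation matrix. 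Since every permutation matrix is realized by a unitary that permutes the eigenbasis, it suffices to optimize over permutations $\pi$, with $\cU^{\dagger}\ket{\phi_{1,n}} = \ket{\phi_{0,\pi(n)}}$ (up to phases).

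The second step is a standard exchange argument. Writing $C(\pi) = \sum_n p_n\,\mathbb{1}[\varepsilon_{0,\pi(n)} > \varepsilon_{1,n} - w_l^*]$ and using that $p_n$ is non-increasing in $n$, I would take any optimal $\pi^*$, let $n$ be the smallest index at which $\pi^*$ disagrees with the greedy $\pi_g$, locate the unique $n' > n$ with $\pi^*(n') = \pi_g(n)$, and swap the assignments at $n$ and $n'$. There are two cases, parallel to the two branches of the algorithm. In Case~A, where $\pi_g(n) = \min(S_n) =: m^*$ and $\varepsilon_{0,m^*} \le \varepsilon_{1,n}-w_l^*$: the swap makes position $n$ incur zero cost; for $n'$ the threshold $\varepsilon_{1,n'}-w_l^*$ is only larger, so monotonicity of $\mathbb{1}[\varepsilon_{0,\pi^*(n)} > \,\cdot\,]$ in the threshold, combined with $p_{n'} \le p_n$, yields $\Delta C \le 0$. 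In Case~B, where $\pi_g(n) = \max(S_n) =: M^*$ and every element of $S_n$ violates the threshold at $n$: position $n$ contributes $p_n$ no matter which element of $S_n$ is assigned, so the cost at $n$ is unchanged by the swap; at $n'$, the new assignment has smaller eigenvalue than $M^*$, so monotonicity of the indicator in the eigenvalue gives $\Delta C \le 0$.

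Iterating this argument produces an optimal permutation that agrees with $\pi_g$ at every position, proving that $\cU_{\mathrm{opt}}^{\epsilon}$ as constructed is optimal. The main technical obstacle I anticipate is the careful case analysis of the swap, in particular checking that the monotonicity of the indicator in both its threshold and its eigenvalue arguments is exploited correctly and that the ordering $p_{n'} \le p_n$ is used in the right direction in each case. Degeneracies in the spectra of $H_0$ or $H_1$ can be accommodated by choosing any fixed ordering of each eigenspace: since $C$ depends only on the eigenvalues $\varepsilon_{0,m},\varepsilon_{1,n}$, not on the specific eigenvectors, the greedy rule is well-defined up to this tie-breaking, and Cor.~\ref{cor:Uopt_eps0} (order preservation at $\epsilon=0$) follows as the special case $w_l^* = w_{\min}$ in which Case~A is triggered for every $n$.
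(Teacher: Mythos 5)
Your proposal is correct and follows essentially the same route as the paper's proof: a Birkhoff--von Neumann reduction from the doubly stochastic transition matrix to permutations (exploiting linearity of the cost), followed by an exchange argument that swaps any permutation toward the greedy one without increasing the cost. Your two-case swap analysis using monotonicity of the indicator in the threshold (Case A, with $p_{n'}\le p_n$) and in the eigenvalue (Case B) is a more compact packaging of the paper's six-subcase enumeration, and your treatment of the $\epsilon=0$ corollary and of degeneracies is sound.
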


\vspace{0.1cm}
We provide the proof of this theorem below. One immediate implication is:

\begin{corollary}[Optimal non-equilibrium unitary for $\epsilon=0$]
\label{cor:Uopt_eps0}
Let $H_0$ and $H_1$ be two Hamiltonians acting on $\cH_{\sy}$.  Let $\cU_\text{opt}$ be the unitary that maximizes the largest possible work cutoff $w_l$ that satisfies Eq.~\eqref{eq:Wlconditionmain} with $\epsilon=0$, i.e. $\cU_\text{opt} := \cU_\text{opt}^0$.
Equivalently, $\cU_\text{opt}$ is the unitary that maximizes the minimum value of work $w_{\min}$ that can occur with nonzero probability.   It follows from Thm.~\ref{thm:Uopt_finite_eps} that
$\cU_\text{opt}$ maps eigenstates of $H_0$ to eigenstates of $H_1$, while preserving the ascending order of the eigenvalues, i.e. $\cU_\text{opt} \ket{\phi_{0,n}}=\ket{\phi_{1,n}}$ for all $0 \le n \le N-1$.
\end{corollary}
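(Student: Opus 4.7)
The plan is to invoke Theorem~\ref{thm:Uopt_finite_eps} with $\epsilon=0$ and unpack the resulting greedy construction. For $\epsilon=0$, the left-hand side of Eq.~\eqref{eq:Wlconditionmain} must vanish, so $w_l^*$ coincides with the largest possible value of $w_{\min}(\cU)$ --- the minimum work with nonzero probability --- attainable over all choices of $\cU$. Because Theorem~\ref{thm:Uopt_finite_eps} already guarantees that an optimum is realized by a unitary that permutes eigenstates of $H_0$ into eigenstates of $H_1$, I would restrict attention to permutations $\sigma \in S_N$ with $\cU\ket{\phi_{0,n}}=\ket{\phi_{1,\sigma(n)}}$, producing work values $\varepsilon_{1,\sigma(n)}-\varepsilon_{0,n}$.

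First I would establish $w_l^*=\min_n(\varepsilon_{1,n}-\varepsilon_{0,n})$ via a rearrangement argument. Suppose $\sigma$ has an inversion $i<j$ with $\sigma(i)>\sigma(j)$, and set $x=\varepsilon_{0,i}\le y=\varepsilon_{0,j}$ and $p=\varepsilon_{1,\sigma(j)}\le q=\varepsilon_{1,\sigma(i)}$. The pre-swap pair $(q-x,\,p-y)$ has minimum $p-y$ (since $p-y\le p-x$ and $p-y\le q-x$), while the post-swap pair $(p-x,\,q-y)$ consists of two quantities both lower-bounded by $p-y$. Thus swapping $\sigma(i)\leftrightarrow\sigma(j)$ weakly increases $\min_n(\varepsilon_{1,\sigma(n)}-\varepsilon_{0,n})$. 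Iterating, bubble-sort style, reduces any $\sigma$ to the identity with monotone weak improvement, so $\sigma=\text{id}$ achieves the supremum and $w_l^*=\min_n(\varepsilon_{1,n}-\varepsilon_{0,n})$.

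Given this value of $w_l^*$, I would finish by induction on $n$ in the algorithm of Theorem~\ref{thm:Uopt_finite_eps}. At step $n=0$, $\min(S_0)=0$ and the test $\varepsilon_{1,0}\ge\varepsilon_{0,0}+w_l^*$ holds by definition of $w_l^*$, yielding $\cU_{\rm opt}^{\dagger}\ket{\phi_{1,0}}=\ket{\phi_{0,0}}$ and $S_1=\{1,\dots,N-1\}$. Assuming inductively that $\min(S_n)=n$, the test $\varepsilon_{1,n}\ge\varepsilon_{0,n}+w_l^*$ again holds, producing $\cU_{\rm opt}^{\dagger}\ket{\phi_{1,n}}=\ket{\phi_{0,n}}$ and $\min(S_{n+1})=n+1$. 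Equivalently, $\cU_{\rm opt}\ket{\phi_{0,n}}=\ket{\phi_{1,n}}$ for every $n$, as claimed.

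The only step of real substance is the rearrangement argument; the remainder amounts to a routine trace of the greedy procedure. The main subtlety I anticipate is verifying monotonicity of the swap step without circularly invoking the corollary itself, and confirming that finitely many such swaps indeed terminate at the identity with an overall weak improvement in the minimum difference.
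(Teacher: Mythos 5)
Your proof is correct, and it fills in details the paper leaves implicit (the paper simply declares the corollary an ``immediate implication'' of Thm.~\ref{thm:Uopt_finite_eps}). The intended route in the paper is shorter: for $\epsilon=0$ the cost $C(\cU)$ must vanish, and every ``else'' branch of the greedy construction necessarily contributes $P_1(\varepsilon_{1,n})>0$ to the cost (since $\varepsilon_{0,\max(S_n)}-\varepsilon_{1,n}\ge\varepsilon_{0,\min(S_n)}-\varepsilon_{1,n}>-w_l^*$ whenever the test fails); hence the optimal greedy unitary at the largest zero-cost cutoff can never take that branch, which forces $\min(S_n)=n$ at every step and yields the order-preserving map directly, without ever computing $w_l^*$. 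Your route instead first establishes $w_l^*=\min_n(\varepsilon_{1,n}-\varepsilon_{0,n})$ by an exchange argument over permutations (valid, since the Birkhoff--von Neumann reduction in the theorem's proof shows the linear cost is minimized at a permutation, so zero cost is achievable by a general unitary iff it is achievable by a permutation) and then traces the greedy procedure by induction. Your swap computation is sound: for an inversion with $x\le y$ and $p\le q$, the pre-swap minimum is $p-y$ and both post-swap differences $p-x$ and $q-y$ dominate it, so the minimum work weakly increases and bubble-sorting terminates at the identity. The payoff of your longer route is the explicit value of $w_l^*$, which the paper's argument does not produce; the one point to state carefully (which you flag) is that the exchange lemma is proved independently of the corollary, so there is no circularity.
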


Optimal unitaries for $\epsilon >0$ and $\epsilon=0$ are demonstrated for a system with $N=6$ in Fig.~\ref{fig:optimal_unitaries}(a) and (b), respectively.
These constructions follow Thm.~\ref{thm:Uopt_finite_eps} and Cor.~\ref{cor:Uopt_eps0}. Note that, 
for $\cU_\text{opt}^\epsilon$, the transition probabilities satisfy $P^{\rm rev}(\varepsilon_{0,m}|\varepsilon_{1,n})=1$ if 
$\ket{\phi_{0,m}}=\cU_\text{opt}^{\epsilon \; \dagger} \ket{\phi_{1,n}}$ and $P^{\rm rev}(\varepsilon_{0,m}|\varepsilon_{1,n})=0$ otherwise.
Then, $C(\cU_{\rm opt})$ is simply a sum of Boltzmann weights $P_1(\varepsilon_{1,n})$ for those $n$ 
where $P^{\rm rev}(\varepsilon_{0,m}|\varepsilon_{1,n})=1$ and
$\varepsilon_{0,m}-\varepsilon_{1,n}>-w_l^*$.
Analyzing the constructive proof below we will see that optimal unitaries are in general not unique. The one presented here is chosen for convenience of presentation.

\begin{figure}\centering
\includegraphics[width=\textwidth]{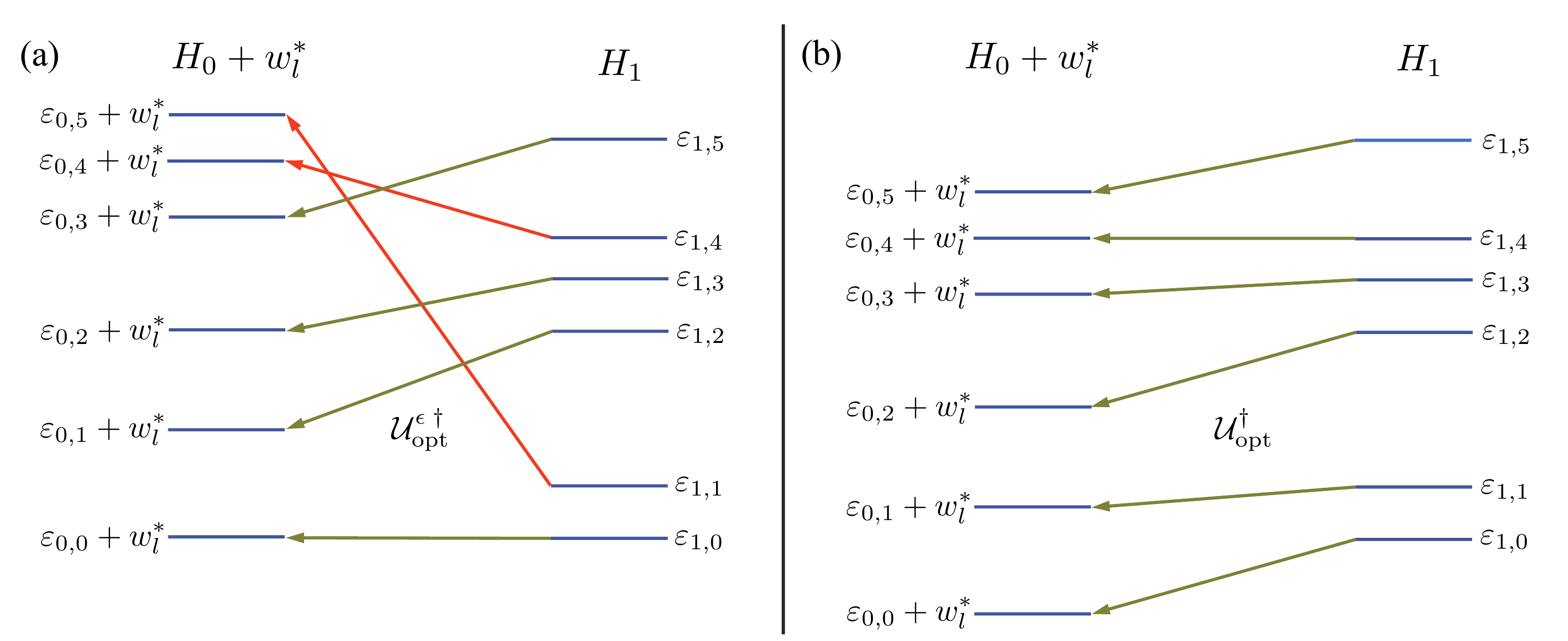}
\caption{Description of optimal unitaries for an $N=6$ dimensional system, where arrows denote the action of $\cU_{\rm opt}^{\epsilon \; \dagger}$ on the corresponding eigenstates of $H_1$ (the spectrum is represented by blue lines). (a) The case of $\epsilon>0$. Starting from the lowest eigenvalue eigenstate of $H_1$, $\ket{\phi_{1,0}}$ is mapped to $\ket{\phi_{0,0}}$, as $\varepsilon_{1,0} \ge \varepsilon_{0,0} + w_l^*$. Then, $\ket{\phi_{1,1}}$ is mapped to the largest eigenvalue eigenstate of $H_0$, $\ket{\phi_{0,N-1}}$, as $\varepsilon_{1,1} < \varepsilon_{0,1} + w_l^*$. This process is iterated, resulting in the transformation shown. The optimal unitary shown here achieves the minimal cost function value $C(\cU_\text{opt}^\epsilon) = P_1(\varepsilon_{1,1})+P_1(\varepsilon_{1,4})>0$. (b) The case of $\epsilon=0$. The cost function in this case is $C(\cU_{\rm opt})=0$. As expected, the largest work cutoff $w_l^*$ in case (b) is smaller than that of case (a).}
\label{fig:optimal_unitaries}
\end{figure}

We now prove that the unitary $\cU_{\rm opt}^\epsilon$ described in Thm.~\ref{thm:Uopt_finite_eps} minimizes the value of Eq.~\eqref{eq:costdef} for a given $w_l^*$.
The proof follows two steps: first, we argue that the optimal unitary is a permutation that maps eigenstates of $H_0$ to eigenstates of $H_1$, and then we show that the unitary of Thm.~\ref{thm:Uopt_finite_eps} is optimal and minimizes Eq.~\eqref{eq:costdef} among all such permutations.

For the first step,
unitarity implies that, for all $\cU$, 
\begin{align}
\sum_m P^{\rm rev}(\varepsilon_{0,m}|\varepsilon_{1,n}) &=\sum_n P^{\rm rev}(\varepsilon_{0,m}|\varepsilon_{1,n}) \\
&=1 \;.
\end{align}
Thus $P^{\rm rev}(\varepsilon_{0,m}|\varepsilon_{1,n})$ gives rise to an $N \times N$ doubly stochastic matrix. Birkhoff-von Neumann theorem states that the class of $N\times N$ doubly stochastic matrices is a convex polytope $\mathcal{B}_{N}$, also known as the Birkhoff polytope, and is the convex hull of the set of $N\times N$ permutation matrices. Furthermore the vertices $\mathcal{B}_{N}$ are precisely the permutation matrices. The cost function $C(\cU)$ is linear in the entries of this matrix and the coefficients are positive. By the fundamental theorem of linear programming, the minimum of $C(\cU)$ is attained at the vertices of the Birkhoff polytope. Thus, we only need to find the permutation $\pi_\text{opt} \in \cS_N$ in the symmetric group that determines $\cU^{\epsilon}_{\rm opt}$ and minimizes $C(\cU)$~\footnote{Not every doubly stochastic matrix can be obtained from an underlying unitary process. 
However, since the minimum of $C(\cU)$ is achieved by a basis transformation between $H_0$ and $H_1$, which is unitary, we do not need to worry about this fact.}. 
For simplicity, we use a convention where $\pi_\text{opt}$ corresponds to $\cU_\text{opt}^{\epsilon \; \dagger}$, i.e., $\cU_\text{opt}^{\epsilon \, \dagger}$ maps the $m$-th eigenstate of $H_1$ to the $\pi_\text{opt}(m)$-th eigenstate of $H_0$ or, equivalently, $\cU_\text{opt}^\epsilon$ maps the $m$-th eigenstate of $H_0$ to the $\pi^{-1}_\text{opt}(m)$-th eigenstate of $H_1$.

\medskip

For the second step, we describe a procedure that iteratively transforms any permutation $\pi \in \cS_N$ to $\pi_\text{opt}$, and show that the cost function does not increase in the procedure. (With some abuse of notation, we write $C(\pi^{-1})$ for the cost function in Eq.~\eqref{eq:costdef} when $\cU$ corresponds to a permutation between eigenstates given by some $\pi^{-1}$, that is,  $\cU \ket{\phi_{0,m}}=\ket{\phi_{1,\pi^{-1}(m)}}$ for all $m$, or $\cU \ket{\phi_{0,\pi(n)}}=\ket{\phi_{1,n}}$ for all $n$.)
As this procedure works for \textit{any} permutation $\pi$, it follows that $C(\pi^{-1}_\text{opt})\le C(\pi^{-1})$ for all $\pi \in \cS_N$, and therefore 
the unitary $\cU_{\rm opt}^\epsilon$ determined by $\pi^{-1}_\text{opt}$ is optimal.

Let then $\pi_\text{opt} \in \cS_N$ be the permutation that corresponds to the unitary $\cU_{\rm opt}^{\epsilon \; \dagger}$ given by Thm.~\ref{thm:Uopt_finite_eps}.
The core idea of the procedure is to transform $\pi$ into $\pi_\text{opt}$ one transition at a time by performing a ``swap''.
The procedure is as follows:

\medskip
\noindent
For $n=0,1,\dots,N-2$: \\
\phantom{----} if $ \pi(n) \ne \pi_{\rm opt}(n)$ \\
\phantom{----------} set $m= \pi_{\rm opt}(n)\;,$ \\
\phantom{----------} set $m'= \pi(n)\;,$ \\
\phantom{----------} set $n'= \pi^{-1}(m)\;,$ \\
\phantom{----------} set $ \pi(n) = m\;,$ \\
\phantom{----------} set $ \pi(n') = m'\; ;$
\medskip

Note that the swap at the $n$-th step
does not change $\pi(0)=\pi_{\rm opt}(0),\pi(1)=\pi_{\rm opt}(1),\ldots,\pi(n-1)=\pi_{\rm opt}(n-1)$.
Then, at the end, the updated $\pi$  coincides with $\pi_{\rm opt}$. 
Below, we will argue that the cost function
does not increase after each step.

Let $\tilde \varepsilon_{0,m} = \varepsilon_{0,m}+w_l^*$ for all $m$. 
Prior to each swap in the procedure, the contributions to $C( \pi^{-1})$ come only from terms where ($m'=\pi(n)$)
\begin{align}
\varepsilon_{1,n} < \tilde \varepsilon_{0,m'} \, ,
\end{align}
with the magnitude of each such contribution given by $P_1(\varepsilon_{1,n})$.  
The change in the cost function
after each swap in the procedure only depends on the eigenvalues $\varepsilon_{0,m}$ and  $\varepsilon_{0,m'}$ of $H_0$ and 
$\varepsilon_{1,n}$ and  $\varepsilon_{1,n'}$ of $H_1$.
By construction, $n'>n$ and then $\varepsilon_{1,n'}\ge \varepsilon_{1,n}$.
With regard to $\tilde\varepsilon_{0,m}$,
there are two cases to consider, as shown in Fig.~\ref{fig:cases}.
In case 1, $\tilde \varepsilon_{0,m} \le \varepsilon_{1,n}$ and therefore $\pi_\text{opt}(n)=\min(S_n)$ according to Thm.~\ref{thm:Uopt_finite_eps}. 
As a result, $m'>m$, $ \varepsilon_{0,m'}\ge  \varepsilon_{0,m}$ and $\tilde \varepsilon_{0,m'}$ can take values in three intervals, as shown in the left panel of Fig.~\ref{fig:cases}. 
In case 2, $\tilde \varepsilon_{0,m} > \varepsilon_{1,n}$ and therefore $\pi_\text{opt}(n)=\max(S_n)$ according to Thm.~\ref{thm:Uopt_finite_eps}. 
As a result, $m'<m$, $\varepsilon_{0,m'} \le \varepsilon_{0,m}$, and $\varepsilon_{1,n'}$ can take values in three intervals, as shown in the right panel of Fig.~\ref{fig:cases}. For these total of six cases, 
the contributions to the cost function before and after each update are shown in the figure. 
The change in the cost function for each case is non-positive.

\newpage

\begin{figure}\centering
\includegraphics[width=\textwidth]{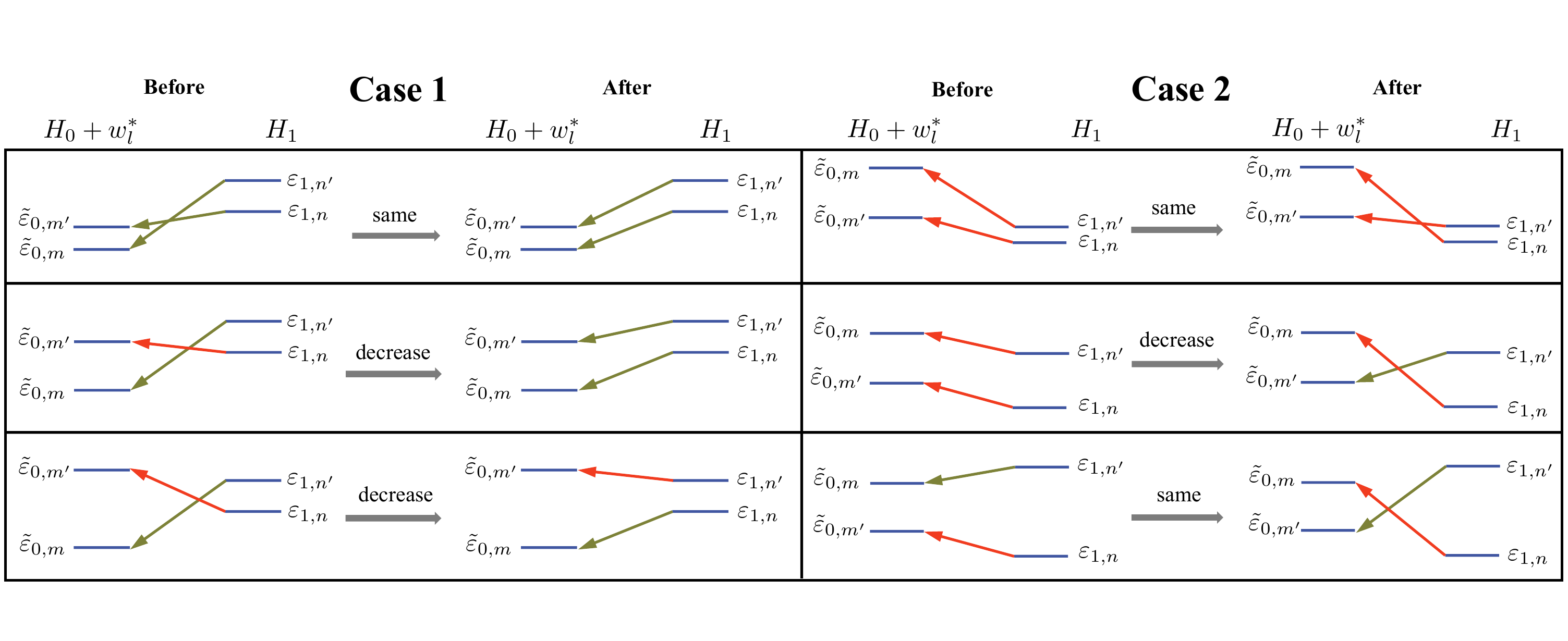}
\caption{Change in cost function $C(\pi^{-1})$ after each swap of the procedure. At each step, there are six possible cases that need to be considered. Case 1 (left) corresponds to $\tilde \varepsilon_{0,m} \le \varepsilon_{1,n}$  and case 2 (right) corresponds to $\tilde \varepsilon_{0,m} > \varepsilon_{1,n}$. The subcases correspond to the possible orderings of the other energy levels. Red arrows indicate the transitions that contribute to the cost. For example, a red arrow originating at a level labeled $n$ contributes $P_1(\varepsilon_{1,n})$ to the cost function. For the subcase in the lowest panel under Case 1, we used the fact that $P_1(\varepsilon_{1,n})> P_1(\varepsilon_{1,n'})$ for $\varepsilon_{1,n}<\varepsilon_{1,n'}$ to establish that the cost function decreases. In all other panels, it is straightforward to see that the cost is either decreasing or stays the same.}
\label{fig:cases}
\end{figure}

\section{Proof of Thm.~\ref{thm:ratioPF}}
\label{app:ratioPF}

For $0\le s\le 1$ we let $H_s:=H_0+sV$,  
$\ket{\psi_{s,p}}$ be the eigenstates of eigenvalue $\varepsilon_{s,p}$, $\cZ_s:=\tr(e^{-\beta H_s})=\sum_p e^{-\beta \varepsilon_{s,p}}$ be the partition function, and
$\rho_s:=e^{-\beta H_s}/\cZ_s$
be the thermal state   of $H_s$.
Then,
\begin{align}
    \partial _s \cZ_s &=-\beta \sum_p (\partial_s \varepsilon_{s,p}) e^{-\beta \varepsilon_{s,p}} \\
    & = -\beta \sum_p (\partial_s \bra{\psi_{s,p}} H_s \ket{\psi_{s,p}}) e^{-\beta \varepsilon_{s,p}} \\
    & = -\beta \sum_p \bra{\psi_{s,p}} \partial_s H_s \ket{\psi_{s,p}} e^{-\beta \varepsilon_{s,p}} \\
    & = -\beta \sum_p \bra{\psi_{s,p}} V \ket{\psi_{s,p}} e^{-\beta \varepsilon_{s,p}} \\
    & = -\beta \cZ_s \tr(V\rho_s) \;.
\end{align}
This implies
\begin{align}
    \ln \left(\frac{\cZ_1} {\cZ_0} \right)& =  \int_0^1 ds \; \partial_s \ln \cZ_s \\
    & =   \int_0^1 ds \; \frac{\partial_s \cZ_s }{\cZ_s} \\
    & =- \beta  \int_0^1 ds \; \tr(V\rho_s)
\end{align}
and then
\begin{align}
|\Delta \! A|&=
    \left| \frac 1 \beta \ln \left(\frac{\cZ_1} {\cZ_0} \right)\right| \\
    & \le   \int_0^1 ds \; |\tr(V\rho_s)| \\
    & \le \|V\| \;.
\end{align}
This implies Eq.~\eqref{eq:ratioPF}.
\qed


\section{Proof of Lemma~\ref{lem:lemmafourier}}
\label{app:lemmafourier}

The proof follows three approximation steps. First, we approximate the exponential operator by a convolution using $h(x)=(f \star g)(x)$ and the cutoff $w_l$.  Second, we approximate the convolution by an infinite sum obtaining a Fourier series. Third, we approximate the infinite sum by a finite one. These steps will set $\Delta$ as a function of $\epsilon$, and also $\delta$, $J+1$, and $\omega_{J+1}$ as a function of other problem parameters. Once these are determined, it is simple to prove Eq.~\eqref{eq:L1fourier}. 
Each step is realized by operators $X_1$, $X_2$, and $X$, respectively,
allowing us to use triangle inequality:
\begin{align}
\nonumber
    \|(e^{-\beta W/2}- X\ket {\Psi_0} \| \le & \|(e^{-\beta W/2}-X_1)\ket {\Psi_0}\| +\|(X_1 -X_2) \ket {\Psi_0} \|+ \\
    &+\|( X_2 - X) \ket {\Psi_0} \|  \;.
\end{align}
 The operators $X_1$, $X_2$, and $X$ commute with $W$
 and have the same eigenstates $\ket{\psi_{m,n}}$. They also 
 depend on $\beta$ and $\epsilon$, but we do not make these dependencies
 explicit for simplicity.

\vspace{0.2cm}

\subsection{Step I: The approximation $X_1$}

We start with the approximation $e^{-x} \approx h(x)=(f \star g)(x)$, where $f(x)$ and $g(x)$ are given in Eqs.~\eqref{eq:fFourier} and~\eqref{eq:gFourier}, respectively, and replace $x \rightarrow \hat x:=\beta (W -w_l)/2$. (The eigenvalues of $\hat x$ are non-negative when acting on eigenstates $\ket{\psi_{m,n}}$ of eigenvalue $w_{m,n} \ge w_l$.)
Then, the operator in the first approximation step is
[see Eq.~\eqref{eq:gdef}]
\begin{align}
\label{eq:firstapproxFourier}
 X_1:= e^{-\beta w_l/2}   \frac{e^{-1/4} }{\sqrt \pi}  \int_{-\Delta-1/2}^{\infty} dy \; & e^{-(\hat x-y)^2} e^{-y}  \;,
\end{align}
where $\Delta \geq 0$.
Following Eq.~\eqref{eq:herf},
the eigenvalues of $X_1$ are
\begin{align}
\label{eq:firstapproxFouriereigenv}
e^{-\beta w_l/2} h(x_{m,n})& = e^{-\beta w_l/2}   \frac{e^{-1/4} }{\sqrt \pi}  \int_{-\Delta-1/2}^\infty dy \;  e^{-(x_{m,n}-y)^2} e^{-y}
  \\ 
    & =e^{-\beta w_{m,n}/2} \frac{1+ {\rm Erf}(\Delta+x_{m,n})}2 \;,
\end{align}
where $x_{m,n}:=\beta(w_{m,n}-w_l)/2$.
This implies
\begin{align}
(e^{-\beta W} - X_1)\ket{\psi_{m,n}}&= e^{-\beta w_{m,n}/2} \left(1 - \frac{1 + {\rm Erf}(\Delta + x_{m,n})}{2} \right)\ket{\psi_{m,n}}\\
&= e^{-\beta w_{m,n}/2} \frac{{\rm Erfc} (\Delta + x_{m,n})}{2} \ket{\psi_{m,n}}
\end{align}
where ${\rm Erfc} (z):= \frac{2}{\sqrt{\pi}} \int^{\infty}_z dy e^{-y^2}= 1 - {\rm Erf(z)}$.
Note that ${\rm Erfc}(z) \leq e^{-z^2}$ for $z\geq 0$ and hence
\begin{equation}\label{eq:Bound1}
\|(e^{-\beta W}-X_1)\ket {\psi_{m,n}}\| \leq  {\begin{cases*}
    e^{- \beta w_{m,n}/2} e^{-\Delta^2}/2 & if $w_{m,n} \geq w_l$, \\
    e^{- \beta w_{m,n}/2} & if $w_{m,n} < w_l$.
    \end{cases*}}
\end{equation}

\subsection{Step II: The approximation $X_2$}

For the second approximation step
we rewrite  $X_1$ using the Fourier transform as
\begin{align}
 X_1=   e^{-\beta w_l/2} \frac 1 {\sqrt{2\pi}} \int_{-\infty}^\infty d \omega \; H(\omega) e^{i \omega \hat x}\;,
\end{align}
where $H(\omega)=\sqrt{2 \pi}F(\omega)G(\omega)$ is the Fourier transform of $h(x)$, and $F(\omega)$ and $G(\omega)$ are given in Eqs.~\eqref{eq:FFouriertransform} and~\eqref{eq:GFouriertransform}, respectively.
In the second approximation, we replace the integral by the sum, i.e., 
\begin{align}
\label{eq:Fouriersecondapprox}
X_2:=  e^{-\beta w_l/2} \frac{\delta}{\sqrt{2\pi}} \sum_{j=-\infty}^\infty   H(\omega_j) e^{i\omega_j \hat x} \;,
\end{align}
where $\omega_j = j \delta$, and $\delta >0$.
The eigenvalues of $X_2$ are 
\begin{align}
\label{eq:FouriersecondapproxW}
  e^{-\beta w_l/2} \frac{\delta}{\sqrt{2\pi}} \sum_{j=-\infty}^\infty   H(\omega_j) e^{i \omega_j x_{m,n} } \;.
\end{align}
This Fourier series is a periodic function in $x_{m,n}$, where the period is $2\pi/\delta$.
Hence, it is possible that the error when approximating the exponential operator is undesirably large for some large values of $x_{m,n}$.
However, we are interested in a useful approximation to the exponential operator {\em only} in the domain where the eigenvalues of $W$ are $w_{m,n} \in [w_{\min} , w_{\max}]$, which can be obtained by making the period $2\pi/\delta$ sufficiently large, as we discuss below.

For an $L^1$ function $r(x)$ with Fourier transform $R(\omega)$, the Poisson summation formula implies 
\begin{align}
 \frac{\delta} {\sqrt{2\pi}}   \sum_{j=-\infty}^\infty R(\omega_j) e^{i a \omega_j}= \sum_{k=-\infty}^\infty r(a+ k 2 \pi/\delta) \;.
\end{align}
Then, using Eq.~\eqref{eq:herf} and the Poisson summation formula above, we rewrite Eq.~\eqref{eq:FouriersecondapproxW} as
\begin{align}
\label{eq:secondapproxFourier}
 e^{-\beta w_l/2}  &\sum_{k=-\infty}^\infty  h(x_{m,n}+k2 \pi /\delta) = e^{-\beta w_{m,n}/2}\sum_{k=-\infty}^\infty e^{-k2 \pi /\delta} \frac {1+{\rm Erf}(\Delta + x_{m,n} +k2 \pi /\delta)} 2 \;.
\end{align}
In the left hand side of Eq.~\eqref{eq:secondapproxFourier}, the term where $k=0$ is $e^{-\beta w_l/2} h(x_{m,n})$ and coincides with Eq.~\eqref{eq:firstapproxFouriereigenv}.
This implies 
\begin{align}
(X_1 - X_2)\ket{\psi_{m,n}}=  e^{-\beta w_{m,n}/2}\sum_{k \ne 0}  e^{-k2 \pi /\delta} \frac {1+{\rm Erf}(\Delta + x_{m,n} +k2 \pi /\delta)} 2 \ket{\psi_{m,n}} \;.
\end{align}
If $2 \pi/\delta \ge 1$ and for $k \ge 1$, we obtain, in general, 
\begin{align}
    \sum_{k \ge 1} e^{-k2 \pi /\delta} \frac {1+{\rm Erf}(\Delta + x_{m,n} +k2 \pi /\delta)} 2 &\le \sum_{k \ge 1} e^{-k2 \pi /\delta} \\
    & = \frac{e^{-2\pi/\delta}}{1-e^{-2\pi/\delta}} \\
    \label{eq:SecondFourierlh}
     & \le 2{e^{-2\pi/\delta}}\;.
\end{align}
For $k \le -1$ and $w_{m,n} \ge w_l$ ($x_{m,n}\ge 0$) we could obtain a large relative error -- as compared to that in  Eq.~\eqref{eq:hconstraint}-- if there exists a negative $k$
such that $\Delta + x_{m,n} +k2 \pi /\delta \ge 0$ (or
${\rm Erf} \approx 1$) and $e^{-k 2\pi/\delta}$ is large.
This is consistent with the fact that our approximation is a periodic function of $x_{m,n}$.
To fix this we demand that $2 \pi/\delta \ge 2(\Delta + x_{m,n})=2\Delta + \beta(w_{m,n}-w_l)$
for all $m,n$, and thus suffices if $2 \pi/\delta \ge 2\Delta + \beta(w_{\max}-w_l)$.
Similarly, for $k \le -1$ and $w_{m,n} \ge w_l$ ($x_{m,n} \ge 0$),
\begin{align}
\Delta + x_{m,n} +k2 \pi /\delta & = \Delta + x_{m,n} +k \pi /\delta
   +k \pi /\delta \\
   & \le \Delta + x_{m,n} + k (\Delta + x_{m,n}) +k \pi /\delta \\
   & = (1+k) (\Delta + x_{m,n}) +k \pi /\delta \\
   & \le k \pi /\delta \;.
\end{align}
Then, for all $w_{m,n} \ge w_l$,
\begin{align}
   \sum_{k \le -1} e^{-k2 \pi /\delta} \frac {1+{\rm Erf}(\Delta + x_{m,n} +k2 \pi /\delta)} 2 & \le \sum_{k \le -1} e^{-k2 \pi /\delta} \frac {1+{\rm Erf}(k \pi /\delta)} 2 \\
   & \le  \sum_{k \ge 1} e^{k2 \pi /\delta} \frac{e^{-(k \pi/\delta)^2} } 2\\
   & =\frac  e 2\sum_{k \ge 1} e^{-(k \pi/\delta -1)^2} 
   \\
   & \le \frac e 2 \sum_{k \ge 1} e^{-(k \pi/(2\delta) )^2} \\
   & \le \frac e 2\sum_{k \ge 1} e^{-k( \pi/(2\delta) )^2}\\
   & = \frac e 2 \frac{e^{-(\pi/(2\delta))^2}} {1-e^{-(\pi/(2\delta))^2}} \\
   \label{eq:SecondFourierh}
   & \le e {e^{-(\pi/(2\delta))^2}} \;,
\end{align}
where we also assumed $ \pi/(2\delta) \ge 1$ and used $1+{\rm Erf}(z)= {\rm Erfc}(|z|) \le e^{-z^2}$ for $z \le 0$.

For $w_{m,n} < w_l$ ($x_{m,n}<0$), we have
\begin{align}
     \sum_{k \le -1} e^{-k2 \pi /\delta} \frac {1+{\rm Erf}(\Delta + x_{m,n} +k2 \pi /\delta)} 2 & \le 
    \sum_{k \ge 1} e^{k2 \pi /\delta} \frac {1+{\rm Erf}(\Delta - k2 \pi /\delta)} 2 \\
  & \le  \sum_{k \ge 1} e^{k2 \pi /\delta} \frac{e^{-(k 2 \pi/\delta-\Delta)^2}} 2 \\
  & = \frac{e^{1/4+\Delta} } 2\sum_{k \ge 1}  e^{-(k 2 \pi/\delta-\Delta-1/2)^2}\\
  & \le  \frac{e^{1/4+\Delta}}2   \sum_{k \ge 1}  e^{-(k 2 \Delta -\Delta-1/2)^2} \\
& =  \frac{e^{1/4+\Delta}}2  \sum_{k \ge 1}  e^{-(k  \Delta -1/2)^2} \\
    & =  \frac 1 2\sum_{k \ge 1}  e^{-k^2 \Delta^2 +(k+1)  \Delta}\\
    & \le \frac 1 2 \sum_{k \ge 1}  e^{-k^2 \Delta^2 /2} \\
    & \le \frac 1 2\sum_{k \ge 1}  e^{-k \Delta^2 /2} \\
    & =\frac 1 2 \frac{e^{-\Delta^2/2}} {1- e^{-\Delta^2/2}} \\
    \label{eq:SecondFourierl}
    & \le  e^{-\Delta^2/2} \;,
\end{align}
where we used $2 \pi/\delta \ge \Delta$ and $\Delta \ge 4$, so that $k^2\Delta^2 -(k+1) \Delta \ge k^2 \Delta^2/2$ for $k \ge 1$, and $1-e^{-\Delta^2/2} \ge 1/2$.
Hence
\begin{equation}\label{eq:Bound2}
\|(X_1- X_2) \ket {\psi_{m,n}}\|\leq {\begin{cases*}
    e^{-\beta w_{m,n}/2}(2e^{-2\pi/\delta} + ee^{-(\pi/(2\delta))^2}) & if $w_{m,n} \geq w_l$ ,\\
    e^{-\beta w_{m,n}/2}(2e^{-2\pi/\delta} + e^{-\Delta^2/2}) & if $w_{m,n} < w_l$.
    \end{cases*}}
\end{equation}

\subsection{Step III: The approximation $X$}

Last, we place the corresponding upper bound in $J$,
and the resulting approximation is $X$ in Eq.~\eqref{eq:Fourierapproximationoperator}.
From Eq.~\eqref{eq:FouriersecondapproxW}, we have
\begin{align}
(X_2 - X)\ket{\psi_{m,n}}=  e^{-\beta w_l/2} \frac{\delta}{\sqrt{2\pi}}   \sum_{|j| >J} H(\omega_j) e^{i \omega_j x_{m,n}}  \ket{\psi_{m,n}}. 
\end{align}
Then,
\begin{align}
 e^{-\beta w_l/2} \frac{\delta}{\sqrt{2\pi}} |   \sum_{|j| >J} H(\omega_j) e^{i \omega_j x_{m,n}}   |
  & \le   e^{-\beta w_l/2} \frac{\delta}{\sqrt{2\pi}}\sum_{|j| >J} |H(\omega_j) | \\
  & =   e^{-\beta w_l/2} \delta  \sum_{|j| >J} |F(\omega_j)  G(\omega_j)| \\
  & \le    e^{-\beta w_l/2} \frac {\delta} {2\pi} e^{1/4+ \Delta} \sum_{|j| >J} e^{-\omega_j^2/4} \\
  & \le 2  e^{-\beta w_l/2} \frac {\delta} {2\pi}  e^{1/4+ \Delta} \sum_{j >J} e^{-j(J+1) \delta^2/4} \\
  & =  e^{-\beta w_l/2} \frac {\delta} {\pi}   e^{1/4+ \Delta} \frac{e^{-\omega_{J+1}^2/4}} {1-e^{-\omega_{J+1}\delta/4}}\\
   & =  e^{-\beta w_l/2}   \frac{4(\omega_{J+1}\delta /4)} {\pi\omega_{J+1}} e^{1/4+ \Delta} \frac{e^{-\omega_{J+1}^2/4}} {1-e^{-\omega_{J+1}\delta/4}}\\
  & \le  2 e^{-\beta w_l/2} \frac{e^\Delta}{\omega_{J+1}}  \left( \frac{\omega_{J+1}\delta}4 +1\right) {e^{-\omega_{J+1}^2/4}}
  \\
  & =  2 e^{-\beta w_l/2} e^\Delta   \left( \frac{\delta}4 +\frac 1{\omega_{J+1}}\right) {e^{-\omega_{J+1}^2/4}}\\
  & \le \frac 5 2 e^{-\beta w_l/2}  {e^{\Delta-\omega_{J+1}^2/4}} \\
  \label{eq:ThirdFourierlh}
   & = \frac{5}{2}e^{-\beta w_{m,n}/2} e^{\Delta+x_{m,n}}  {e^{-\omega_{J+1}^2/4}}  \;,
\end{align}
where we used  Eqs.~\eqref{eq:FFouriertransform} and~\eqref{eq:GFouriertransform} for bounding $|H(\omega_j)|$, $4 e^{1/4}/\pi \leq 2$, $y/(1-e^{-y}) \le y+1$ for $y \ge 0$ ($y=\omega_{J+1}\delta/4$),
and assumed $\delta \le 1$ and $\omega_{J+1}\ge 1$. 
Hence, we obtain
\begin{align}\label{eq:Bound3}
\|(X_2- X)\ket{\psi_{m,n}}\| \leq \frac{5}{2} e^{-\beta w_{m,n}/2} e^{\Delta+x_{m,n}}  {e^{-\omega_{J+1}^2/4}}. 
\end{align}

\subsection{Choice of parameters and final bounds}

We start from the triangle inequality using the three approximations:
\begin{align}
\nonumber
    \|(e^{-\beta W/2}- X) \ket{\psi_{m,n}} \| \le & \|(e^{-\beta W/2}-X_1)\ket {\psi_{m,n}}\| +\|(X_1 - X_2) \ket {\psi_{m,n}} \| \\
   \label{eq:trianglefourier}
   &+\|(X_2 - X) \ket {\psi_{m,n}} \|  \;.
\end{align}
Let
\begin{align}
    X_{m,n}:= e^{-\beta w_l/2} \frac{\delta}{\sqrt{2\pi}} \sum_{j=-J}^J H(\omega_j) e^{i \omega_j x_{m,n}}
\end{align}
be the eigenvalue of $X$ when acting on $\ket {\psi_{m,n}}$. In addition, let $\Delta$, $\delta$, and $J$ be such that
\begin{align}
    \Delta & \ge 4 \;, \\
    2 \pi/\delta & \ge 2 \Delta + \beta (w_{\max} -w_l) \;, \\
    \omega_{J+1} &\ge 1 \;,
\end{align}
which are consistent with all  previous assumptions.
Then, according to  Eq.~\eqref{eq:Bound1}, Eq.~\eqref{eq:Bound2}, Eq.~\eqref{eq:Bound3}, and Eq.~\eqref{eq:trianglefourier}, we obtain
\begin{align}
\nonumber
    |e^{-\beta w_{m,n}/2} &- X_{m,n}| \\
    \label{eq:fourierrelativeh}
    &\le e^{-\beta w_{m,n}/2} \left( \frac {e^{-\Delta^2}} 2  + 2 e^{-2 \pi/\delta} + e e^{-(\pi/(2\delta))^2} + \frac 5 2  e^{\Delta+x_{m,n}}  e^{-\omega_{J+1}^2/4} \right) \;,
\end{align}
for all $w_{m,n}  \ge w_l$.
According to  Eq.~\eqref{eq:Bound1}, Eq.~\eqref{eq:Bound2}, Eq.~\eqref{eq:Bound3}, and Eq.~\eqref{eq:trianglefourier}, we obtain
\begin{align}
\label{eq:fourierrelativel}
    |e^{-\beta w_{m,n}/2} - X_{m,n}| \le e^{-\beta w_{m,n}/2} \left(1+ 2 e^{-2 \pi/\delta}+ e^{-\Delta^2/2}+ \frac 5 2 e^{\Delta+x_{m,n}}  \; {e^{-\omega_{J+1}^2/4}} \right)\;,
\end{align}
for all $w_{m,n}  < w_l$.

We will choose our parameters so that, when $w_{m,n} \ge w_l$, each term in Eq.~\eqref{eq:fourierrelativeh} provides an $(\epsilon/12)$-relative error and the sum is an overall $(\epsilon/3)$-relative error, as required by Eq.~\eqref{eq:hconstraint}
in
Lemma~\ref{lem:cutoffs}.
To this end, let
\begin{align}
\label{eq:Deltachoice1}
    \Delta & = \max \{4,\sqrt{\ln(6/\epsilon)}\}\;, \\
\label{eq:Deltachoice2}
    2 \pi/\delta & \ge\max\{2 \Delta + \beta (w_{\max} -w_l),2\Delta^2 \} \;, \\
\label{eq:Deltachoice3}
    \omega^2_{J+1}/4 &\ge \Delta + \beta(w_{\max}-w_l)/2 + {\ln(30/\epsilon)} \; ;
\end{align}
these are also consistent with all the previous assumptions.
Then
\begin{align}
e e^{-(\pi/(2\delta))^2} <  2 e^{-2 \pi/\delta} <  \frac  {e^{-\Delta^2}} 2 \le \epsilon/12 \;,
\end{align}
and using $w_{m,n} \le w_{\max}$ (i.e., $x_{m,n}\le \beta(w_{\max}-w_l)/2$),
\begin{align}
 \frac 5 2  e^{\Delta+x_{m,n}}  \; {e^{-\omega_{J+1}^2/4}} \le \frac 5 2 e^{-\ln(30/\epsilon)}= \epsilon/12 \;.
\end{align}
Hence, the above conditions imply, for $w_{m,n} \ge w_l$,
\begin{align}
\|(e^{-\beta W/2} - X) \ket{\psi_{m,n}}\| \le \frac \epsilon 3 e^{-\beta w_{m,n}/2}  \;,
\end{align}
which is Eq.~\eqref{eq:hconstraint}.
For $w_{m,n}<w_l$, we follow Eq.~\eqref{eq:fourierrelativel} and obtain
\begin{align}
|e^{-\beta w_{m,n}/2} - X_{m,n}| &\le e^{-\beta w_{m,n}/2}(1+\epsilon/12 + e^{-8}+ \epsilon/12) \\
& \le 2 e^{-\beta w_{m,n}/2}\;.
\end{align}
Equivalently, for $w_{m,n}<w_l$,
\begin{align}
\|(e^{-\beta W/2} - X) \ket{\psi_{m,n}}\| \le 2 e^{-\beta w_{m,n}/2}  \;,
\end{align}
which is Eq.~\eqref{eq:lconstraint}.
Therefore, any $\Delta$, $\delta$, and $\omega_{J+1}$ satisfying the conditions given in Eqs.~\eqref{eq:Deltachoice1}, ~\eqref{eq:Deltachoice2}, and ~\eqref{eq:Deltachoice3}, suffice to achieve the desired overall error.

Hence, we can choose ($\Delta^2 \ge 4 \Delta$) 
\begin{align}
  \delta &= \frac {2\pi}{   \beta(w_{\max}-w_l)+ 2  \Delta^2} 
\end{align}
to satisfy Eq.~\eqref{eq:Deltachoice2}
and any
\begin{align}
\label{eq:omegachoice}
\omega_{J+1} &\ge \frac 5 3  \sqrt{ \beta(w_{\max}-w_l)+ 2 \Delta^2} \\
& \ge \sqrt{ 2 \beta(w_{\max}-w_l)+ (11/2) \Delta^2} \\
&\ge \sqrt{ 2 \beta(w_{\max}-w_l)+\Delta^2+ 4 \Delta^2 +8} \\
    &\ge \sqrt{ 2 \beta(w_{\max}-w_l)+ \Delta^2 + 4 \ln(6/\epsilon) + 4 \ln(5)} \\
    & \ge \sqrt{ 2 \beta(w_{\max}-w_l)+ 4 \Delta + 4 \ln(30/\epsilon)}\;  
\end{align}
to satisfy Eq.~\eqref{eq:Deltachoice3}.
It follows that for any $J$ satisfying 
\begin{align}
\label{eq:J+1choice}
    J+1& \ge \left\lceil \frac 1 3 ( \beta(w_{\max}-w_l) +2 \Delta^2)^{3/2} \right \rceil \\
    &\ge \left\lceil \frac 5 3 \frac 1 {2\pi} ( \beta(w_{\max}-w_l) +2 \Delta^2)^{3/2} \right \rceil\;,
\end{align}
Eqs.~\eqref{eq:omegachoice} and~\eqref{eq:Deltachoice3} are satisfied.
Since $\Delta$ is sublogarithmic in $1/\epsilon$,
we would expect the term $\beta (w_{\max}-w_l)$ to be the dominant term in many interesting instances.
The results of the lemma follow from the definition $z:= \beta (w_{\max}-w_l) + 2\Delta^2$, so that
$\delta=2\pi/z$, and choosing
$J+1 \ge \lceil\frac 1 3 z^{3/2} \rceil$ suffices
to satisfy Eq.~\eqref{eq:Deltachoice3}.
In particular, there exists a value of $J$ (i.e., given by the right hand side of Eq.~\eqref{eq:J+1choice}) that implies $\omega_{J+1} \approx (2 \pi/3) z^{1/2}$. 
The above are possible choices for the parameters, but other choices that respect the bounds in Eqs.~\eqref{eq:Deltachoice1},~\eqref{eq:Deltachoice2}, and Eq.~\eqref{eq:Deltachoice3}, will work as well.
This proves the main claim of Lemma~\ref{lem:lemmafourier}.

Last, using again the Poisson summation formula, we obtain the following bound 
\begin{align}
\frac \delta {\sqrt{2\pi}}   \sum_{j=-J}^J |H(\omega_j)| & \le
 \frac \delta {\sqrt{2\pi}}   \sum_{j} |H(\omega_j)| \\ & \le
 \frac{\delta}{2\pi }e^{1/4+\Delta} \sum_j e^{-\omega_j^2/4} \\
 & = \frac{1}{\sqrt  \pi} e^{1/4+\Delta} \sum_k e^{-(k2 \pi/\delta)^2} \\
 & \le e^{\Delta} (1 + 2 \sum_{k=1}^\infty e^{-(k2 \pi/\delta)^2}) \\
 & \le e^{\Delta} (1 + 2 \sum_{k=1}^\infty e^{-(k2 \Delta)^2})
 \\
 & \le  e^{\Delta} (1 + 2 \sum_{k=1}^\infty e^{-k 64}) \\
 & \le 2 e^{\Delta} \\
 & \le 2 e^4 e^{ \sqrt{\ln(6/\epsilon)}}\;,
 \end{align}
which proves Eq.~\eqref{eq:L1fourier} in Lemma~\ref{lem:lemmafourier}.
This term is $1/\epsilon^{o(1)}$ and thus subpolynomial in $1/\epsilon$.
This implies that the coefficients appearing in the operator $X$ satisfy $\alpha=\sum_{j=-J}^J |\alpha_j| \le e^{-\beta w_l/2}2 e^{\Delta}$.
\qed


\section{Improved approximations for $W \ge 0$}
\label{app:HSapproximation}

It is possible that for some instances the work operator in Eq.~\eqref{eq:WorkOp} is $W \ge 0$.
This occurs, for example, when $H_1 \ge 0$ and $H_0 \le 0$~\cite{CS16,CSS18}. The following result is a direct consequence of Lemma 3 in Ref.~\cite{CSS21}:
\begin{lemma}[Hubbard-Stratonovich approximation]
\label{lem:HSapproximation}
Let $\epsilon > 0$ and assume $w_{\min} \ge 0$.
Then, for
\begin{align}
\delta & = \frac 1 {2\pi}\left(  \sqrt{\beta w_{\max}} + \sqrt{6\ln(2/\epsilon)}\right)^{-1} \;,
\end{align}
and all
\begin{align}
    J &\ge \lceil 2  \pi \left(\sqrt {\beta w_{\max}} +\sqrt{6\ln(2/\epsilon)}\right)\sqrt{6\ln(2/\epsilon)} \rceil\;,
\end{align}
we obtain 
\begin{align}
\label{eq:psdapprox}
    \| e^{-\beta W/2} - X \| \le \epsilon \;,
\end{align}
where
\begin{align}
    X:= \frac{\delta}{\sqrt{2 \pi}} \sum_{j=-J}^J e^{-\omega_j^2/2} e^{i \omega_j \sqrt{\beta W}}
\end{align}
and $\omega_j = j \delta$. In addition, the coefficients in the expansion of $X$ satisfy
\begin{align}
\label{eq:psdL1}
    \frac{\delta}{\sqrt{2 \pi}} \sum_{j=-J}^J e^{-\omega_j^2/2} \le 2 \;.
\end{align}
\end{lemma}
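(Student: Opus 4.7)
The plan is to prove Lemma~\ref{lem:HSapproximation} by invoking the Hubbard–Stratonovich identity
\begin{align}
e^{-x^2/2} = \frac{1}{\sqrt{2\pi}} \int_{-\infty}^\infty d\omega\, e^{-\omega^2/2}\, e^{i \omega x} \; ,
\end{align}
then discretizing and truncating the integral in $\omega$. Since $W \ge 0$ by hypothesis, the operator $\sqrt{\beta W}$ is well-defined as a positive semidefinite operator with spectrum contained in $[\sqrt{\beta w_{\min}},\sqrt{\beta w_{\max}}]$. Substituting $x \to \sqrt{\beta W}$ in the identity gives the exact representation
\begin{align}
e^{-\beta W/2} = \frac{1}{\sqrt{2\pi}} \int_{-\infty}^\infty d\omega\, e^{-\omega^2/2}\, e^{i \omega \sqrt{\beta W}} \; ,
\end{align}
which is precisely the form that motivates the operator $X$ in the statement.

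First I would bound the discretization error. Replacing the integral by the Riemann sum $X_\infty := \frac{\delta}{\sqrt{2\pi}} \sum_{j \in \mathbb Z} e^{-\omega_j^2/2} e^{i \omega_j \sqrt{\beta W}}$ with spacing $\delta$ introduces a periodization error that the Poisson summation formula turns into $\sum_{k \ne 0} e^{-(\sqrt{\beta W} + k 2\pi/\delta)^2/2}$, as an operator on eigenstates of $W$. Since the spectrum of $\sqrt{\beta W}$ lies in a bounded interval, as long as $2\pi/\delta$ is sufficiently larger than $\sqrt{\beta w_{\max}}$, each image $\sqrt{\beta W} + k 2\pi/\delta$ is well into the Gaussian tail, and the operator norm of this error term decays like $e^{-(2\pi/\delta - \sqrt{\beta w_{\max}})^2/2}$. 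Second, for the truncation, $\|e^{-\beta W/2} - X\| \le \|e^{-\beta W/2} - X_\infty\| + \|X_\infty - X\|$, and the tail $\|X_\infty - X\|$ is bounded by $\frac{\delta}{\sqrt{2\pi}}\sum_{|j|>J} e^{-\omega_j^2/2}$, which is controlled by a Gaussian tail of the form $e^{-\omega_{J+1}^2/2}$ up to lower order factors.

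Then I would balance the two errors: demanding each to be $\le \epsilon/2$ leads to the conditions $(2\pi/\delta - \sqrt{\beta w_{\max}})^2 \gtrsim 2\ln(2/\epsilon)$ and $\omega_{J+1}^2 \gtrsim 2\ln(2/\epsilon)$, with slack factors to absorb polynomial prefactors; the stated choices $\delta = [2\pi(\sqrt{\beta w_{\max}}+\sqrt{6\ln(2/\epsilon)})]^{-1}$ and $J \ge \lceil 2\pi(\sqrt{\beta w_{\max}}+\sqrt{6\ln(2/\epsilon)})\sqrt{6\ln(2/\epsilon)}\rceil$ are exactly those that satisfy these conditions while guaranteeing $\omega_{J+1} \ge \sqrt{6\ln(2/\epsilon)}$. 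This is essentially the content of Lemma~3 of Ref.~\cite{CSS21}, which I would invoke directly after checking that the hypothesis $w_{\min} \ge 0$ guarantees the spectrum of $\sqrt{\beta W}$ sits in the required interval.

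Finally, for the $L_1$ bound on coefficients in Eq.~\eqref{eq:psdL1}, I would use that $\frac{\delta}{\sqrt{2\pi}}\sum_{j} e^{-\omega_j^2/2}$ is a Riemann sum converging to $\frac{1}{\sqrt{2\pi}}\int d\omega\, e^{-\omega^2/2} = 1$, and for the chosen $\delta$ the remainder is small; truncating to $|j|\le J$ only reduces the sum, so the total is bounded by $2$ with room to spare. The main obstacle, which is essentially bookkeeping rather than conceptual, is keeping the polynomial prefactors from Poisson summation and Gaussian tail estimates under control so that the $\sqrt{6\ln(2/\epsilon)}$ slack in the stated parameters is actually sufficient — this is where one must follow the explicit constants in Ref.~\cite{CSS21} rather than give a hand-wavy Gaussian-tail argument.
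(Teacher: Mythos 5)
Your proposal is correct and follows essentially the same route as the paper: the Hubbard--Stratonovich integral representation of $e^{-\beta W/2}$ for $W\ge 0$, a Poisson-summation bound on the discretization error, a Gaussian-tail bound on the truncation, and a deferral of the explicit constants to the prior work (Appendix C of Ref.~\cite{CSS18} / Lemma~3 of Ref.~\cite{CSS21}) before verifying that the stated $\delta$ and $J$ meet the required conditions. The only cosmetic difference is the $L_1$ bound, which the paper obtains by evaluating its error estimate at the eigenvalue $\lambda=0$ rather than by your direct Riemann-sum/Poisson argument; both are valid.
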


\begin{proof}
The result follows from the Hubbard-Stratonovich transformation~\cite{Hub59}, which itself follows from the identity ($x \ge 0$)
\begin{align}
    e^{-\beta x/2}=\frac 1 {\sqrt{2\pi}} \int_{-\infty}^\infty dy \; e^{-y^2/2} e^{iy \sqrt{\beta x}} \;.
\end{align}
We use some results of Appendix C of Ref.~\cite{CSS18}. In that work it is first shown that if ($\lambda \ge 0$)
\begin{align}
    \frac {2\pi}\delta \ge \sqrt{\beta \lambda} +\sqrt{2\ln(5/\epsilon)} \; ,
\end{align}
then
\begin{align}
   \left |e^{-\beta \lambda /2} -\frac \delta {\sqrt{2\pi}}  \sum_{j=-\infty}^\infty e^{-\omega_j^2/2}e^{i \omega_j 
   \sqrt{\beta \lambda}}\right| \le \epsilon/2 \;.
\end{align}
Next, it is shown that if
\begin{align}
    \omega_{J} \ge \sqrt{6 \ln(2/\epsilon)} \;,
\end{align}
then
\begin{align}
    \frac \delta {\sqrt{2\pi}}  \sum_{|j|>J} e^{-\omega_j^2/2} \le \epsilon/2 \;.
\end{align}
The triangle inequality gives
 \begin{align}
   \left |e^{-\beta \lambda/2 } -\frac \delta {\sqrt{2\pi}}  \sum_{j=-J}^J e^{-\omega_j^2/2}e^{i \omega_j 
   \sqrt{\beta \lambda}}\right| \le \epsilon\;.
\end{align}

We replace $\lambda$ by an eigenvalue of $W$, $w_{m,n}$. In particular, we are in the case where $w_{\min}\ge 0$ and then the largest eigenvalue is $w_{\max}$. We can then choose
\begin{align}
    \delta &= \frac 1 {2\pi} \left(\sqrt{\beta w_{\max}}+ \sqrt{6 \ln(2/\epsilon)}\right)^{-1} \\
    & \le \frac 1 {2\pi} \left(\sqrt{\beta w_{\max}}+ \sqrt{2 \ln(5/\epsilon)}\right)^{-1}
\end{align}
and
\begin{align}
    J &\ge \lceil\frac 1 \delta \sqrt{6 \ln(2/\epsilon)} \rceil\\
    \label{eq:JHSapproximation}
    & = \lceil 2 \pi \sqrt{6 \ln(2/\epsilon)}\left(\sqrt{\beta w_{\max}}+ \sqrt{6 \ln(2/\epsilon)}\right) \rceil
\end{align}
to satisfy, for all eigenvalues $w_{m,n}$ of $W$,
\begin{align}
   \left |e^{-\beta w_{m,n}/2 } -\frac \delta {\sqrt{2\pi}}  \sum_{j=-J}^J e^{-\omega_j^2/2}e^{i \omega_j 
   \sqrt{\beta w_{m,n}}}\right| \le \epsilon\;.
\end{align}
Equivalently,
\begin{align}
   \left \|e^{-\beta W/2 } -\frac \delta {\sqrt{2\pi}}  \sum_{j=-J}^J e^{-\omega_j^2/2}e^{i \omega_j 
   \sqrt{\beta W}}\right\| \le \epsilon\;,
\end{align}
which is Eq.~\eqref{eq:psdapprox}.

Last, considering the case $\lambda =0$, we already showed
 \begin{align}
   \left |1 -\frac \delta {\sqrt{2\pi}}  \sum_{j=-J}^J e^{-\omega_j^2/2} \right| \le \epsilon\;.
\end{align}
This implies
\begin{align}
  \sum_{j=-J}^J e^{-\omega_j^2/2}&  \le 1 + \epsilon \\
  & \le 2 \;,
\end{align}
which is Eq.~\eqref{eq:psdL1}.
\end{proof}

Lemma~\ref{lem:HSapproximation} is useful as long as we can implement the unitaries $e^{-i \omega_j \sqrt{\beta W}}$,
which correspond to evolving with $\sqrt{W}$
for time $\omega_j \sqrt \beta$~\cite{SqrtW}. Under this assumption, it is possible to approximate the exponential operator using a number of unitaries ($2J+1$) that is an improvement with respect to the general case given in Lemma~\ref{lem:lemmafourier}. That is, the relevant quantities depend now on $\sqrt{\beta w_{\max}}$
rather than $\beta (w_{\max}-w_l)$. In general, this is a mild improvement of the complexity,  which is still expected to  
be   dominated by the factor $e^{\beta \, \Delta \! A /2}$ in many cases, even when $W \ge 0$. 



\section{Proof of Lemma~\ref{lem:EigenspaceOverlap}}
\label{app:EigenspaceOverlap}

For any $a \ge 0$, we obtain
\begin{align}
\|{\Pi}^1_{\le \varepsilon_1}  \Pi^0_{> \varepsilon_0}\| & =
\| {\Pi}^1_{\le \varepsilon_1} e^{a {H_1}} e^{-a {H_1}} e^{ a H_0} e^{-a H_0} \Pi^0_{> \varepsilon_0}\| \\
& \leq \| {\Pi}^1_{\le \varepsilon_1} e^{a {H_1}} \| \ \| e^{-a {H_1}} e^{a H_0} \| \ \| e^{-a H_0} \Pi^0_{> \varepsilon_0} \| \\
& \le e^{-a(\varepsilon_0 - \varepsilon_1)} \ \| e^{-a {H_1}} e^{ a H_0} \|.
\end{align}
Let $F(a):= e^{-a{H_1}} e^{a H_0}$. This operator satisfies
\begin{align}
\partial_a F(a)&= - F(a) e^{-aH_1} V e^{aH_0} \\
& =-F(a) V(a) \;,
\end{align}
where $V(a):=e^{-aH_0} V e^{aH_0}$.
Hence
\begin{align}
F(a)= \mathcal{T} e^{-\int^{a}_0 da' \; V(a')}\;,
\end{align}
where $\mathcal{T}$ is the time-ordering operator.
Then,
\begin{align}
\|F(a)\|&=
\|\mathcal{T} e^{\int^{a}_0 da' \; V(a')}\| \\
& \leq 1+ \int_0^a da' \|V(a')\| + \int_0^a da'\int_0^{a'} da'' \|V(a')\| \|V(a'')\| + \ldots
 \\
& \leq e^{a \max_{a' \in [0, a]} \| V(a') \|} \;.
\end{align}


Lemma~3.1  in Ref.~\cite{arad2016connecting} implies $\|V(a')\| \leq \frac{Mv}{(1- a'hgk)^r}$ if $0 \leq 1-a'hgk < 1$, where $r:= R/(hgk)$ and $R:= \max_{Y \in \Lambda} \sum_{X \in \Lambda
: [h_{0,X}, v_Y]\neq 0} \|h_{0,X}\|$.
(In their notation, $s \rightarrow a'$ and $g \rightarrow g h$.)
Then,
\begin{align}
    \max_{a' \in [0, a]} \| V(a') \| \le \frac{Mv}{(1- a hgk)^r}
\end{align}
under the assumption $0 \leq ahgk < 1$. 
Note that $R \le hgk$, since each term $v_X$ in $V$
involves at most $k$ spins, and each spin interacts with at most $g$ other spins. Then, $r \le 1$.
As a result, we obtain
\begin{align}
\|\Pi^1_{\leq \varepsilon_1} \Pi^0_{> \varepsilon_0}\| \leq e^{-a(\varepsilon_0 - \varepsilon_1)} e^{a \frac{M v}{(1-ahgk)}} \;.
\end{align}
Choosing $a= 1/(2hgk)$, we obtain the desired result:
\begin{align}
\|\Pi^1_{\leq \varepsilon_1} \Pi^0_{> \varepsilon_0}\| \leq e^{-\frac{\varepsilon_0 - \varepsilon_1-2Mv}{2hgk}} \;.
\end{align}
\qed

Note that we can obtain a better bound by minimization over all possible values of $a \in \mathbb{R}^+$ subject to the assumption. Nevertheless, the previous result
suffices for Thm.~\ref{thm:w_llocal}.


\section{Phases for quantum signal processing}
\label{app:QSPImplementation}

We seek to implement the operator $X$ given in Eq.~\eqref{eq:Fourierapproximationoperator}  using QSP.
This requires finding the phases associated with the ancilla qubit rotations, as discussed in Sec.~\ref{sec:QSP}.
Here we use the MATLAB package QSPPACK (https://github.com/qsppack/QSPPACK) to explicitly calculate these phases using optimization methods. 
The MATLAB package is based on the recent work by Yulong Dong et al. \cite{dong2021}. It requires the input polynomials 
to be real and given in a Chebyshev basis, as a linear combination of Chebyshev polynomials $\cT_j(y)$ of the first kind. 
The input polynomials are also required to have
definite parity (even or odd) in $y$. With a little extra work, we note that the package can be modified to provide the phases when the input function is given as a linear combination of $\sqrt{1-y^2} \cR_j(y)$, where the $\cR_j$'s are the Chebyshev polynomials of the second kind; more details follow.

The general quantum circuit $V_\Phi$ used for QSP implementation in the package is in Fig.~\ref{fig:QSPPACKcircuit}.
The ancilla-qubit gates determined by $\Phi=\{\phi_0, \phi_1,\cdots,\phi_d\}$ are rotations around the Z axis by corresponding angles.
H is the Hadamard gate and X is the Pauli gate 
that performs the map ${\rm X}\ket 0 \rightarrow \ket 1$
and ${\rm X}\ket 1 \rightarrow \ket 0$. The operations $\cV$ and $\cV^\dagger$ are unitary, and these are controlled on the state $\ket 1$ of the ancilla. 
Let $\theta/2$ be an eigenphase of $\cV$, $\ket{\psi_{\theta/2}}$ the corresponding eigenstate, and define $y=\cos(\theta/2)$. 
Then, in the two-dimensional subspace spanned by $\{\ket 0 \ket{\psi_{\theta/2}} ,\ket 1 \ket{\psi_{\theta/2}}\}$, the quantum circuit in Fig.~\ref{fig:QSPPACKcircuit} implements the $SU(2)$
operation
\begin{align}
    V_\Phi (y)&=   e^{i \phi_0 {\rm Z}}\cW (y) e^{i \phi_1 {\rm Z}} \cdots  \cW(y) e^{i \phi_d {\rm Z}} \;,
    \end{align}
where
\begin{align}
    \cW(y)&:= e^{\pm i \arccos(y) {\rm X}}\\
    &= \begin{pmatrix}
    y &  \pm i \sqrt{1-y^2} \\
     \pm i \sqrt{1-y^2} & y 
    \end{pmatrix} \; 
\end{align}
is in $SU(2)$ and, for integer $k$,
\begin{align}
   \cW^k(y) &= e^{\pm i k \arccos(y) {\rm X}}\\
    & = \begin{pmatrix}
   \cos(k \theta/2) & i\sin(k \theta/2) \\
    i\sin(k \theta/2) &  \cos(k \theta/2)
    \end{pmatrix} \\
    \label{eq:Wk(y)}
   & = \begin{pmatrix}
    \cT_k(y) & \pm i\sqrt{1-y^2} \cR_{k-1}(y) \\
    \pm i\sqrt{1-y^2} \cR_{k-1}(y) & \cT_k(y) 
    \end{pmatrix}  \;.
\end{align}
(We adopt the convention
$\cT_{-k}(y)=\cT_k(y)$ and $\cR_{-k-1}(y)=-\cR_{k-1}(y)$
if $k \ge 1$.) The sign in the off-diagonal entry depends 
on the value of $\theta$; it is $+1$ if $\theta/2 \mod 2 \pi \in [0,\pi)$
and $-1$ if $\theta/2 \mod 2 \pi \in [\pi,2\pi)$.
Using Eq.~\eqref{eq:Wk(y)} and the fact that ${\rm Z} \cW^k(y) {\rm Z}=\cW^{-k}(y)$, ${\rm Z}^2=\one$, we can prove
\begin{align}
    V_\Phi(y)&=(\cos \phi_0 \one + i \sin \phi_0 {\rm Z}) \cW(y) (\cos \phi_1 \one + i \sin \phi_1 {\rm Z})\cW(y) \ldots \\
    & =\sum_{k=-d,-d+2,\ldots,d} D_k \cW^k(y) \;,
\end{align}
where $D_k$ are two-dimensional diagonal matrices that can be written as $D_k=\Re(\beta_k) \one + i \Im(\beta_k) {\rm Z}$, for some coefficients $\beta _k \in \mathbb C$. Then,
\begin{align}
\label{eq:VPhi_app}
  V_\Phi (y) =    \begin{pmatrix}
    P(y) & \pm i\sqrt{1-y^2} Q(y) \\
  \pm  i\sqrt{1-y^2} Q^*(y) & P^*(y) 
    \end{pmatrix} \;,
\end{align}
where $P(y) = \sum_{k=-d,-d+2,\ldots} \beta_k \cT_k(y)$ and $Q(y)=\sum_{k=-d,-d+2,\ldots} \beta_k \cR_{k-1}(y)$.
Setting (even) $d=2J$, these are
\begin{align}
\label{eq:PQSP}
    P(y) &= \beta_0 +\sum_{j=1}^J (\beta_{2j}+\beta_{-2j}) \cT_{2j}(y) \;,\\
\label{eq:QQSP}
    Q(y) & =\sum_{j=1}^J (\beta_{2j}-\beta_{-2j})\cR_{2j-1}(y) \;.
\end{align}
Note that $P(y)=P(-y)$ and $Q(y)=-Q(-y)$ have definite parity (even and odd degrees, respectively).

The input to QSPPACK is a polynomial $p_1(y)\in \mathbb R[y]$  of definite parity, such that $|p_1(y)|\leq 1 \ \ \forall y \in [-1,1]$, given as a linear combination of Chebyshev polynomials of the first kind, e.g., a linear combination of $\cT_{2j}(y)$ if it is an even function. The package finds a set  $\Phi_1$ of $2J+1$ phases that produces an $SU(2)$ unitary $V_{\Phi_1}(y)$, where the first entry
satisfies $\Re({\bra{0} V_{\Phi_1} (y) \ket{0}}) = p_1(y)$. Finding $\Phi_1$ is a straightforward application of the function QSP\_solver in the package, which uses the limited-memory BFGS optimization algorithm. A modification of the solver allows us to address a polynomial $q_2(y)\in \mathbb R[y]$ of definite parity, such that $|q_2(y)|\leq 1 \ \ \forall y \in [-1,1]$,  given as a linear combination of Chebyshev polynomials of the second kind, e.g., a linear combination of $\cR_{2j-1}(y)$ if it is an odd function. In this case, the package finds a set $\Phi_2$ of $2J+1$ phases  that produces an $SU(2)$ unitary $V_{\Phi_2}(y)$, where the second entry
satisfies  $\Im({\bra{0} V_{\Phi_2} (y) \ket{1}}) = \pm \sqrt{1-y^2}q_2(y)$. To that end,  we modify the solver such that objective function to be minimized is  $\frac{1}{2}\left(\Im({\bra{0} V_{\Phi} (y) \ket{1}}) - (\pm \sqrt{1-y^2})Q(y)\right)^2$. 
We can also combine the unitaries such that $\Re({\bra{0} V_{\Phi_1}(y) + V_{\Phi_2}(y) e^{-i\frac{\pi}{2} {\rm X}} \ket{0}}) = p_1(y) + (\pm \sqrt{1-y^2}) q_2(y)$, allowing us to 
address more general functions using a simple LCU.

To implement the operator $X$ using QSPPACK, we need to find a presentation that is compatible
with Eqs.~\eqref{eq:PQSP} and~\eqref{eq:QQSP}.
For $\theta \in \mathbb R$, let $f(e^{i\theta}):=\frac 1 \alpha \sum_{j=-J}^J \alpha_j e^{ij\theta}$, where $\alpha=\sum_{j=-J}^J |\alpha_j|$,
\begin{align}
    \alpha_j& = e^{-\beta w_l/2} \frac{\delta} {\sqrt{2\pi}}H(\omega_j)e^{-i \omega_j \beta w_l/2} \; , \\
H(\omega) &= \frac{e^{\Delta + 1/2 - (\omega^2 +1)/4}}{\sqrt{2 \pi} (1+\omega^2)} (1-i\omega) e^{i\omega (\Delta + 1/2)},
\end{align}
and $\omega = j \delta$; see Eqs.~\eqref{eq:Fouriercoefficients} and~\eqref{eq:L1fourier}. In particular, if $\theta$ is an eigenphase of $U$, where $U$ is the unitary in Eq.~\eqref{eq:Uaction}, and if we replace $e^{i \theta} \rightarrow U$, we obtain $f(U)=X/\alpha$. The property $\alpha_j=\alpha_{-j}^*$ implies $f(e^{i\theta}) \in \mathbb R$. Then, $f(e^{i\theta})=\frac 1 \alpha \sum_{j=-J}^J \Re(\alpha_j)\cos(j\theta) -\Im(\alpha_j)\sin(j\theta)$ and, if $y=\cos(\theta/2)$, we obtain $f(e^{i \theta})=p_1(y)+(\pm \sqrt{1-y^2}) q_2(y)$, where
\begin{align}
\label{eq:pQSP}
    p_1(y)&:=\frac 1 \alpha \left(\alpha_0 +2 \sum_{j=1}^J \Re(\alpha_j) \cT_{2j}(y) \right)\;, \\
\label{eq:qQSP}
    q_2(y)&:=-\frac 2 \alpha \sum_{j=1}^J \Im(\alpha_j)  \cR_{2j-1}(y)\;.
\end{align}
These polynomials satisfy the properties required by the (modified) package. That is, for $y \in [-1,1]$, $p_1(y) \in \mathbb R$ and $q_2(y) \in \mathbb R$ are presented as linear combinations of Chebyshev polynomials of the first and second kinds, respectively. Also, $p_1(y)=p_1(-y)$ is an even function, $q_2(y)=-q_2(-y)$ is an odd function, and $|p_1(y)|\le 1$, $|q_2(y)|\le 1$. 
We can then produce each of these polynomials in the entries of $SU(2)$ matrices using QSPPACK,
which outputs the sets $\Phi_1$ and $\Phi_2$ discussed earlier, of $2J+1$ phases each. In particular, in the two-dimensional subspace,
\begin{align}
\label{eq:QSPLCU}
 \frac 1 2 \left(   V_{\Phi_1}(y) + V^\dagger_{\Phi_1}(y)+ V_{\Phi_2}(y) e^{-i \frac \pi 2 {\rm X}} + e^{i \frac \pi 2 {\rm X}} V^\dagger_{\Phi_2}(y)  \right) =
 \begin{pmatrix} p_1(y) +(\pm \sqrt{1-y^2}) q_2(y) & . \cr . & .\end{pmatrix} \;,
\end{align}
and the first entry is $f(e^{i\theta)}$.
The $SU(2)$ unitaries $V_{\Phi_1}(y)$ and $V_{\Phi_2}(y)$ correspond
to unitaries $V_{\Phi_1}$ and $V_{\Phi_2}$ in the larger Hilbert space, respectively, 
that can be implemented using a circuit like Fig.~\ref{fig:QSPPACKcircuit}. The unitaries
$V^\dagger_{\Phi_1}$ and $V^\dagger_{\Phi_2}$
can also be implemented using a similar circuit noting
that, in general, $V^\dagger_\Phi=e^{-i \phi_d ({\rm Z}\otimes \one)}\cW\cdots  e^{-i \phi_1 ({\rm Z}\otimes \one)}\cW e^{-i \phi_0 ({\rm Z}\otimes \one)}$.

The circuit in Fig.~\ref{fig:QSPPACKcircuit} uses controlled-$\cV$
and controlled-$\cV^\dagger$ operations. In the analyzed two-dimensional subspaces, $\theta/2$ is the eigenphase of $\cV$ and $\theta$ is the eigenphase of $U$, implying $\cV=U^{\frac 1 2}$. Our quantum algorithm assumes access to $U$ but not necessarily to $U^{\frac 1 2}$.  Nevertheless, a simple compilation of the circuit in Fig.~\ref{fig:QSPPACKcircuit}
allows it to be implemented using controlled-$U$ and controlled-$U^{\dagger}$ operations instead:

\begin{align} \label{eq:qsphalf}
 \Qcircuit @C=1em @R=0.7em {
& \gate{\phi_d} & \gate{{\rm H}}  & \ctrlo{1}                      & \ctrl{1}                         & \gate{{\rm H}} &  \gate{\phi_{d-1}} & \gate{{\rm H}}  & \ctrlo{1}                      & \ctrl{1}                         & \gate{{\rm H}} & \qw  & \gate{\phi_{d-2}} & \qw   & \cdots  \\
& \qw           & \qw       & \multigate{2}{U^{\frac{1}{2}}} & \multigate{2}{U^{-\frac{1}{2}}}  & \qw      & \qw            & \qw       & \multigate{2}{U^{\frac{1}{2}}} & \multigate{2}{U^{-\frac{1}{2}}}  & \qw      & \qw  &\qw            & \qw   & \cdots  \\
& \qw           & \qw       & \ghost{U^{\frac{1}{2}}}       & \ghost{U^{-\frac{1}{2}}}          &  \qw     & \qw            & \qw       & \ghost{U^{\frac{1}{2}}}       & \ghost{U^{-\frac{1}{2}}}          &  \qw     & \qw  &\qw            & \qw   & \cdots  \\
& \qw           & \qw       & \ghost{U^{\frac{1}{2}}}       & \ghost{U^{-\frac{1}{2}}}          & \qw      & \qw             & \qw       & \ghost{U^{\frac{1}{2}}}       & \ghost{U^{-\frac{1}{2}}}          & \qw     & \qw  &\qw            & \qw   & \cdots \\
}
\end{align}

\begin{align}
 \Qcircuit @C=1em @R=0.7em {
&& \gate{\phi_d} & \gate{{\rm H}}  & \ctrlo{1}                      & \ctrl{1}                         & \gate{{\rm H}} &  \gate{\phi_{d-1}} & \gate{{\rm H}}  & \ctrlo{1}                      & \ctrl{1}                         & \gate{{\rm H}} & \qw  & \gate{\phi_{d-2}} & \qw   & \cdots  \\
\mbox{=}&& \qw           & \qw       & \multigate{2}{U^{\frac{1}{2}}} & \multigate{2}{U^{-\frac{1}{2}}}  & \multigate{2}{U^{\frac{1}{2}}} & \qw & \multigate{2}{U^{-\frac{1}{2}}}                  & \multigate{2}{U^{\frac{1}{2}}} & \multigate{2}{U^{-\frac{1}{2}}}  & \qw      & \qw  &\qw            & \qw   & \cdots \\
&& \qw           & \qw       & \ghost{U^{\frac{1}{2}}}       & \ghost{U^{-\frac{1}{2}}}          & \ghost{U^{\frac{1}{2}}}       & \qw  & \ghost{U^{-\frac{1}{2}}}                  & \ghost{U^{\frac{1}{2}}}       & \ghost{U^{-\frac{1}{2}}}          &  \qw     & \qw  &\qw            & \qw   & \cdots  \\
&& \qw           & \qw       & \ghost{U^{\frac{1}{2}}}       & \ghost{U^{-\frac{1}{2}}}          & \ghost{U^{\frac{1}{2}}}      & \qw & \ghost{U^{-\frac{1}{2}}}                     & \ghost{U^{\frac{1}{2}}}       & \ghost{U^{-\frac{1}{2}}}          & \qw     & \qw  &\qw            & \qw   & \cdots \\
}
\end{align}

\begin{align}\label{eq:FinalQSPcircuit}
 \Qcircuit @C=1em @R=0.7em {
&&\qw & \multigate{3}{V_{\Phi}}&\qw& \qw &&& \gate{\phi_d} & \gate{{\rm H}}  & \ctrlo{1}                                             & \gate{{\rm H}} &  \gate{\phi_{d-1}} & \gate{{\rm H}}                      & \ctrl{1}                         & \gate{{\rm H}} & \qw   &\gate{\phi_{d-2}} & \qw   & \cdots  \\
\mbox{=}&&\qw & \ghost{V_{\Phi}}&\qw& \qw &\mbox{=} && \qw           & \qw       & \multigate{2}{U}   & \qw & \qw            & \qw       & \multigate{2}{U^{\dagger}}  & \qw      & \qw  &\qw            & \qw   & \cdots  \\
&&\qw & \ghost{V_{\Phi}}&\qw& \qw &  && \qw           & \qw       & \ghost{U}                 & \qw       & \qw           & \qw            & \ghost{U^{\dagger}}          &  \qw     & \qw  &\qw            & \qw   & \cdots  \\
&&\qw & \ghost{V_{\Phi}}&\qw& \qw &&& \qw           & \qw       & \ghost{U}                & \qw       &\qw            & \qw            & \ghost{U^{\dagger}}          & \qw     & \qw  &\qw            & \qw   & \cdots 
}
\end{align}

Equation~\eqref{eq:QSPLCU} implies
\begin{align}
\label{eq:QSPLCU2}
 \frac 1 4 \left(   V_{\Phi_1} + V^\dagger_{\Phi_1} + V_{\Phi_2} e^{-i \frac \pi 2 {\rm X}} + e^{i \frac \pi 2 {\rm X}} V^\dagger_{\Phi_2}  \right) =
 \begin{pmatrix} \frac X{2\alpha} & . \cr . & .\end{pmatrix} \;,
\end{align}
and the first entry contains the desired operator $X$.
This
is a linear combination of four unitaries, where the coefficients add up to 1, and can be directly implemented using the LCU framework described in Sec.~\ref{sec:LCU}. It requires two additional ancilla qubits, bringing the total ancilla count to three in this case.
The number of controlled-$U$ and controlled-$U^{\dagger}$
is bounded by $4 \times (2J)=\cO(J)$, and the additional
number of two-qubit gates is also $\cO(J)$, as stated in Thm.~\ref{thm:main}. Since using this method we implement $X/(2\alpha)$ rather than $X/\alpha$ (see Eq.~\eqref{eq:S'action} and Sec.~\ref{sec:algorithmcomplexity}), the factor 1/2 will bring an additional but constant overhead, where the average number of amplitude amplification rounds is doubled.

In Fig.~\ref{fig:QSPPACK_results}  
we show the sets of phases $\Phi_1=\{\phi_{1,0},\phi_{1,1},\ldots,\phi_{1,2J}\}$ and $\Phi_2=\{\phi_{2,0},\phi_{2,1},\ldots,\phi_{2,2J}\}$ obtained by running the QSPPACK MATLAB package, such that the unitaries $V_{\Phi_1}$ and $V_{\Phi_2}$
satisfy Eq.~\eqref{eq:QSPLCU2}. Note that, to better display the results,  we have not plotted the first and last phases of these sets, which are given in the caption.
The package was modified to have an extra input called ``kind", which has the value $1$ for functions given as linear combinations of $\cT_j(y)$ and $2$ for functions given as linear combinations of $\cR_j(y)$. The objective function and the gradient function is defined in the package depending on the value of ``kind". The relevant parameters chosen in this simulation are $\beta w_{\max} = 50$, $\beta w_l = -1$, $\Delta = 4$. We then use Lemma~\ref{lem:lemmafourier} to compute the rest of the parameters. We obtain $J = 252$ and $\delta = 0.0757$ for Eqs.~\eqref{eq:pQSP} and~\eqref{eq:qQSP}, and these also
determine $X$. An initial guess for the phases is needed by QSPPACK, and we set $\Phi^{\rm ini}_1 = \{\frac{\pi}{4}, 0, 0, \cdots, \frac{\pi}{4} \}$ and $\Phi^{\rm ini}_2 = \{0, 0, 0, \cdots, -\frac{\pi}{2}\}$. The package uses limited memory BFGS algorithm (a quasi Newton method) to find the phases. The resulting error in generating the function $p_1(y)+(\pm \sqrt{1-y^2})q_2(y)$ in the entry of the $SU(2)$ matrix in Eq.~\eqref{eq:QSPLCU} is less than machine precision for all $y\in[-1,1]$.

\begin{figure}[htb]
\begin{center}

     \includegraphics[width = 16cm]{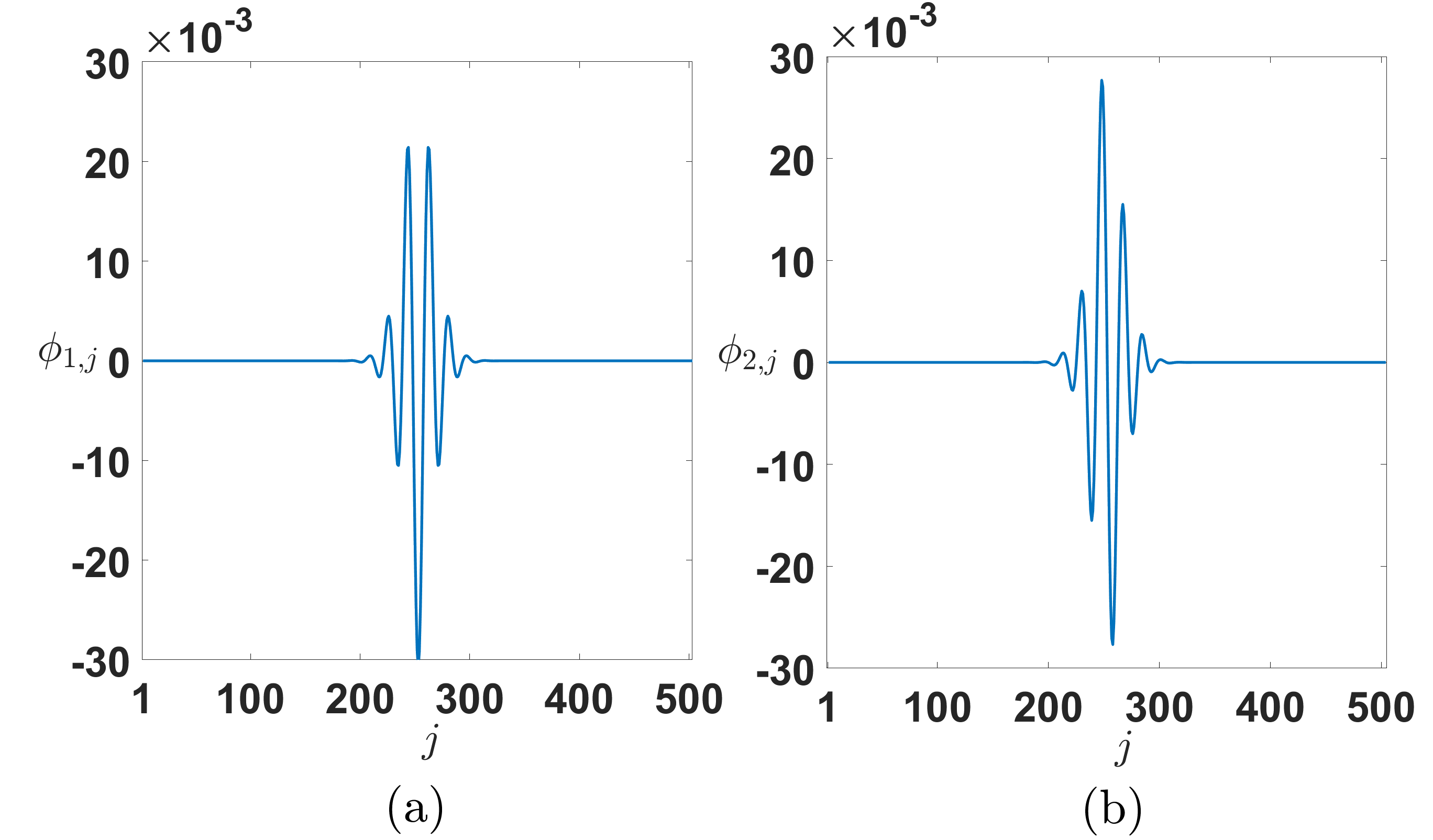}
    \caption{Phases to implement $X$ using QSP. The simulation parameters are $\Delta =4$, $\beta w_{\max} = 50$, $\beta w_l = -1$, $J=252$, and $\delta=0.0757$. (a) Phases $\Phi_1=\{\phi_{1,0},\phi_{1,1},\ldots,\phi_{1,504}\}$ obtained using QSPPACK for generating $p_1(y)$, where $\phi_{1,0} = \phi_{1,504} = \frac{\pi}{4}$ are not plotted. (b) Phases $\Phi_2=\{\phi_{2,0},\phi_{2,1},\ldots,\phi_{2,504}\}$ obtained using QSPPACK for generating $q_2(y)$, where $\phi_{2,0}= 0$ and $\phi_{2,504} = -\frac{\pi}{2}$ are not plotted.
    }
    \label{fig:QSPPACK_results}
     \end{center}
\end{figure}

\end{document}